\documentclass[12pt]{article}

\usepackage[english]{babel}

\usepackage{amscd,amsfonts,amssymb,amsthm,amscd,amsmath,latexsym}
\usepackage{mathrsfs}
\usepackage{cite}
\usepackage{graphicx}
\usepackage{epsfig}


\baselineskip15pt
\textwidth=15truecm
\textheight=21truecm
\hoffset-.1cm







\def\ie{\textit{i.e.\ }}
\def\etc{\textit{etc}}



\newtheorem{theorem}{Theorem}[section]
\newtheorem{proposition}[theorem]{Proposition}
\newtheorem{lemma}[theorem]{Lemma}

\newtheorem{definition}[theorem]{Definition}

\newtheorem{faktum}[theorem]{\it \hspace*{-5pt}}

\newtheorem{define}[]{Definition}



\def\cB{{\cal B}}
\def\cC{{\cal C}}

\def\cF{{\cal F}}

\def\cH{{\cal H}}

\def\cK{{\cal K}}

\def\cN{{\cal N}}
\def\cO{{\cal O}}
\def\cP{{\cal P}}

\def\cS{{\cal S}}
\def\cT{{\cal T}}



\newcommand{\tild}[1]{\mathbf{\Tilde{#1}}}            
\newcommand{\ttild}[1]{\mathbf{\Tilde{\Tilde{#1}}}}   

\newcommand{\tcC}{\tild{\cal C}}                      
\newcommand{\ttcC}{\ttild{\cal C}}                    

\newcommand{\tAl}{\tild{\Al}}                         
\newcommand{\ttAl}{\ttild{\Al}}                       

\newcommand{\tsigma}{\tild{\sigma}}                   
\newcommand{\ttsigma}{\ttild{\sigma}}                 


\newcommand{\quer}[1]{\overline{#1}}                  
\newcommand{\qquer}[1]{\overline{\overline{#1}}}      

\newcommand{\qsigma}{\quer{\sigma}}                   
\newcommand{\qqsigma}{\qquer{\sigma}}                 

\newcommand{\qAl}{\quer{\Al}}                         
\newcommand{\qqAl}{\qquer{\Al}}                       



\def\CC{{\mathbb C}}

\def\NN{{\mathbb N}}

\def\RR{{\mathbb R}}
\def\TT{{\mathbb T}}









\newcommand{\ba}{{\mbox{\boldmath $a$}}}

\newcommand{\bl}{{\mbox{\boldmath $l$}}}

\newcommand{\bu}{{\mbox{\boldmath $u$}}}
\newcommand{\bv}{{\mbox{\boldmath $v$}}}
\newcommand{\bx}{{\mbox{\boldmath $x$}}}

\def\rest{\upharpoonright}





\newcommand{\MI}{M} 
\newcommand{\MIO}{\cO} 
\newcommand{\MIOc}{\cO^c} 
\newcommand{\MIC}{\cC} 
\newcommand{\MICc}{\cC^c} 
\newcommand{\FL}{{V}} 
\newcommand{\FLC}{\overline{V}} 
\newcommand{\Trans}{\overline{V}_+} 
\newcommand{\Spec}{\overline{V}_+} 


\newcommand{\Lore}{{\cal L}_+^\uparrow} 
\newcommand{\Poin}{{\cal P}_+^\uparrow} 
\newcommand{\Semi}{{\cal S}_+^\uparrow} 

\newcommand{\tPoin}{\widetilde{\cal P}_+^\uparrow} 
\newcommand{\tSemi}{\widetilde{\cal S}_+^\uparrow} 

\newcommand{\tU}{\widetilde{U}} 




\newcommand{\HY}{\mathsf{H}} 
\newcommand{\HB}{\mathsf{B}} 
\newcommand{\ha}{{a}} 
\newcommand{\hb}{{b}} 

\newcommand{\HC}{\mathsf{C}} %
\newcommand{\HO}{\mathsf{O}} 
\newcommand{\HK}{\mathsf{K}}  
\newcommand{\HL}{\mathsf{L}}  




\newcommand{\zeit}{{\mbox{\small{$\theta$}}}}
\newcommand{\szeit}{{\scriptstyle \theta}}
\newcommand{\boost}{{\mbox{\small{$\beta$}}}}
\newcommand{\sboost}{{\scriptstyle \beta}}



\newcommand{\Al}{\mathfrak A}  
\newcommand{\AlV}{\mathfrak A(\FL)} 
\newcommand{\AlO}{\mathfrak A(\MIO)} 

 
\newcommand{\AlC}{\mathfrak A(\MIC)} 

\newcommand{\AlCc}{\mathfrak A(\MIC^{\, c})} 
\newcommand{\Ri}{\mathfrak R}  
\newcommand{\RiV}{\mathfrak R (\FL)}  
 
 
\newcommand{\RiC}{\mathfrak R(\MIC)} 



\newcommand{\compos}{\raisebox{1.2pt}{$\, \scriptscriptstyle \circ \, $}}
\newcommand{\product}{\raisebox{1.2pt}{$\, \scriptscriptstyle \bullet \, $}}


\begin{document} 

\title{\Large 
\vspace*{-15mm} 
\bf New Light on Infrared Problems: \\ Sectors, Statistics, 
Symmetries and Spectrum}
\author{\large { Detlev Buchholz\,$^a$ 
\ and \ John E.\ Roberts\,$^b$\thanks{Supported in part by the 
ERC Advanced Grant 227458 OACFT 
"Operator Algebras and Conformal Field Theory", 
PRIN-MIUR and GNAMPA-INDAM}}\\[4mm]
${}^a$ Institut f\"ur Theoretische Physik and 
Courant Centre \\
``Higher Order Structures in Mathematics'', 
Universit\"at G\"ottingen, \\ 37077 G\"ottingen, Germany  \\[2mm]
${}^b$ Dipartimento di Matematica, 
Universit\`a di Roma ``Tor Vergata'' \\
        00133 Roma, Italy}
\date{\large }
\maketitle

\hspace{5mm} Dedicated to Roberto Longo on the occasion of 
his sixtieth birthday\\[2mm] 
\begin{abstract}  \noindent
A new approach to the analysis of the physical state space 
of a theory is presented within the general setting of local 
quantum physics. It also covers theories with long range forces, such as 
Quantum Electrodynamics. Making use of the notion of charge class,
an extension 
of the  concept of superselection sector, 
infrared problems are avoided by restricting the states
to observables localized in a light cone. The charge structure of 
a theory can be explored in a systematic manner. 
The present analysis focuses on 
simple charges, thus including the electric charge. It is shown
that any such charge has a conjugate charge.
There is a meaningful concept of statistics:
the corresponding charge classes are either of Bose
or of Fermi type. The family of simple charge classes 
is in one--to--one correspondence with the irreducible 
unitary representations of a compact Abelian group. 
Moreover, there is a meaningful
definition of covariant charge classes.  
Any such class determines a continuous unitary
representation of the Poincar\'e group or its covering group
satisfying the relativistic spectrum condition.
The resulting particle aspects are also briefly discussed.  
\end{abstract}

\newpage

\addtolength{\textheight}{3mm}  
\addtolength{\voffset}{-3mm}    
\section{Introduction} 
\setcounter{equation}{0}
Algebraic quantum field theory \cite{Ha} has proved to 
be a powerful framework for understanding superselection 
structure in local quantum physics, \ie the possible patterns of
coherent subspaces (sectors) of the physical state space. For theories 
in Minkowskian spacetime, describing 
exclusively states of compactly localizable
charges~\cite{DoHaRo1} or massive particles~\cite{BuFr},  
this structure encoded in the local observables is fully understood: 
the sectors in any such theory correspond to the dual of some compact
group, interpreted as a global gauge group; each sector
carries a specific representation of the permutation group, 
determining its statistics; and there are charged
field operators transforming as tensors under the global
gauge group, satisfying Bose or Fermi commutation relations 
at spacelike distances and interpolating between the various 
sectors~\cite{DoRo}. These results have 
been extended successfully to theories in lower dimensions, 
where quantum groups can appear in the description of the
superselection structure and sectors can have 
braid group statistics~\cite{FrGa,FrReSch,Lo1}. Yet the physically 
important class of theories describing long range forces 
mediated by massless particles, such as Quantum Electrodynamics, 
is not covered by these results. In fact, attempts to 
clarify their superselection structure \cite{Fr,BuDoMoRoSt1}
have failed, cf. \cite{BuDoMoRoSt2}. 

The difficulties in these theories can be traced back to two 
related features, namely the long range effects of the forces 
between charged particles and the multifarious ways in 
which their interaction can form clouds of low energy massless 
particles. In Quantum Electrodynamics, for example,  
any configuration of electrically charged particles gives rise to 
a specific long range structure of the electromagnetic field,
created, both by the Coulomb fields of the moving charges
and by the infrared clouds caused by their acceleration.
The resulting long range tails of the electromagnetic  field 
can be discriminated by central sequences of local observables, 
hence giving rise to an abundance of different superselection 
sectors~\cite{Bu1,Bu2}. 

Whereas the electric charge carried by particles  
is a superselected quantity of fundamental physical
significance, the long range features of the infrared clouds 
are a theoretical concept defying experimental
verification. When applying the theory, this is usually taken 
into account by considering inclusive processes where 
expectation values are summed over the undetected low 
energy photons. This procedure effectively amounts to 
wiping out marginal features of the infrared 
sectors without destroying physically relevant 
information, such as their total charge. 
In other words, the notion of superselection sector 
provides too fine a resolution of the physical state space
leading to unnecessary theoretical complications. 
It seems desirable to use concepts closer to actual  
experimental practice and providing a coarser
resolution of the physical state space. 
 
The aim of the present investigation is to establish such 
a framework and discuss its theoretical implications.
Our approach is based on the fundamental fact that
both measurements and the preparation of specific 
states can extend well into the future, whereas
it is impossible to perform these operations in the distant past.
Thus the spacetime regions where actual experiments 
can take place are at best future directed light cones $\FL$
in Minkowski space whose apices are fixed by some largely arbitrary 
initial event. Of course, these limitations on the 
spacetime localization of concrete physical operations do 
not imply that it is impossible to obtain information about 
past properties of global states. But this 
information can only be obtained indirectly by making 
measurements in these states in some accessible light cone 
$\FL$ and trying to reconstruct  their 
past values from the data on the basis of theory. 
In this context, it has to be noted 
that there is a fundamental difference between theories
describing only massive particles and theories including  
massless particles as well. In the massive case, the 
unital C*--algebras $\Al(V)$ generated by the local observables  
in any given light cone $\FL$ are generically 
irreducible \cite{SaWo}. In principle, therefore, 
all the information on the global properties of a state can 
be derived from measurements made in any 
such region $\FL$. In the presence of 
massless particles, however, the algebra $\AlV$ 
of observables localized in $\FL$ is highly reducible and 
time translations act on it as proper endomorphisms~\cite{Bu1}.
Thus, as time proceeds,  one loses information 
about past properties of the states. This constant loss is 
inevitable since, by Huygens principle,  
outgoing massless particles created in the past of 
a given light cone $\FL$ will never enter that cone. Hence
their properties cannot be determined by later measurements in $\FL$.

It is a remarkable fact that, by the same mechanism, the various 
sectors formed by infrared clouds cannot be distinguished by 
measurements in any fixed light 
cone~$\FL$; discriminating them 
requires observations in larger regions of Minkowski space. 
On the other hand, the total charge carried by massive particles,
such as the electric charge, can be determined in any given~$\FL$. 
For these particles will either eventually enter this cone, or some
of them will, in the course of time, be annihilated or created in pairs 
of conjugate charge. But the total charge is not affected and can be 
sharply determined by measurements in the given $\FL$. These 
features are relevant in the present context. They imply that
states of equal total charge differing  
only by infrared clouds created in collisions 
are mutually normal if restricted to the 
algebra of observables $\AlV$ of any given  
light cone $\FL$. In this way, an abundance of superselection
sectors coalesce to form a \textit{charge class}~\cite{Bu1}. 
Thus restricting measurements to light cones has a similar effect  
to summing expectation values over undetected 
massless particles. 

These observations suggest basing the analysis and physical interpretation of 
Minkowski space theories entirely on the algebras $\AlV$ of observables 
localized in a given light cone~$\FL$. This should avoid infrared problems 
right from the outset. Yet implementing this idea requires 
solving several conceptual problems. 

\vspace*{1mm} 
(a) \ As already mentioned, in the presence of massless particles the 
algebras $\AlV$ are reducible in any sector. 
In fact, their weak closures are generically 
factors of type~III${}_1$, as can be inferred from  
results established in \cite{Lo,Bu}. 
There is then no intrinsic way of superimposing states 
of $\AlV$ and the usual characterization of 
sectors fails. Here we need the notion
of charge class, expressed in terms of $\AlV$. 
The states on $\AlV$, belonging to a given charge class,  
are distinguished by being primary (or even pure in the absence of
massless particles) and can be transformed 
into each other by the operational effects of 
observables localized in the light cone $\FL$. More precisely,
two relevant primary states $\omega_1$, $\omega_2$ on the 
algebra $\AlV$
belong to the same charge class if  $\omega_1$
can be mapped into any given (norm) neighbourhood
of~$\omega_2$ by the adjoint action of some  
inner automorphism of $\AlV$, and \textit{vice versa}. 
This criterion generalizes the 
notion of superselection sector in the irreducible case 
and remains meaningful in the presence of
massless particles, for the inner automorphisms of
type~III${}_1$ factors act almost transitively on 
normal states~\cite{CoSt} and can be approximated
by inner automorphisms of $\AlV$ by 
the Kaplansky density theorem. 

\vspace*{1mm} \noindent  
(b) \ In Quantum Field Theory, local operations on states can 
create charges in conjugate pairs. This allows one to
pass from states in the vacuum sector in Minkowski space 
to states in a charged sector 
by using a suitable charge transfer chain to create a   
charge in some region by shifting the 
conjugate charge along some path to spacelike infinity 
(``behind the moon'') where it evades observation. 
The energetic effects of this operation can be  
controlled by not putting a sharp restriction on 
the location of the chosen path, \ie allowing it to fluctuate 
within some spacelike cone. 
The resulting limit states are in a charged sector, but, 
by locality, coincide with the initial states   
for observations in the spacelike complement 
of the chosen cone~\cite{BuFr,Bu1}. 

One can proceed similarly in the 
light cone $\FL$. To see this, regard 
$\FL$ as a globally hyperbolic spacetime 
with a metric induced from the ambient Minkowski space. 
It is foliated by hyperboloids 
(time--shells) parametrized by the proper time of 
inertial observers passing through the apex of~$\FL$. 
In the spacetime $\FL$, Minkowskian spacelike infinity is
replaced by the asymptotic 
lightlike boundary of~$\FL$. The analogues of spacelike cones
in Minkowski space are hypercones $\MIC \subset \FL$, \ie the 
causal completions of hyperbolic cones formed by geodesics 
on a given time--shell emanating from a common apex.  
(The precise definition 
of these hypercones is given in the appendix.) 
As in Minkowski space, a limiting procedure
using local unitary operators from the algebra $\AlC$ 
of some hypercone $\MIC$ creates charges in pairs,
allowing one to pass from states in the charge class 
of the vacuum to states in another charge class. 
The resulting limit operation yields states  
in the new charge class, agreeing with the initial
states for observations in the spacelike complement
$\MICc \subset \FL$ of the chosen hypercone by locality. 

Now, whereas the electrically charged sectors in Minkowski 
space depend on the direction of the 
spacelike cone used to prepare them,
this is not so for charge classes created by the 
analogous operations in the spacetime~$\FL$: 
the disjoint infrared clouds produced using  
different cones cannot be sharply discriminated 
in $\FL$. Hence the charge class 
of the resulting states does not depend on 
the choice of hypercone $\MIC$. So 
these infrared problems disappear for observers in~$\FL$. 

As will be argued later, these heuristic considerations
are reflected in the following mathematical setting. 
Denoting 
by $\RiV$ the weak closure of the algebra 
$\AlV$ relative to the vacuum class, 
our charge classes can be reached from the 
vacuum class using a morphism $\sigma:\AlV \to \RiV$ 
localized in some hypercone $\MIC$, that is $\sigma$ acts 
trivially on $\AlCc$, whereby $\MIC \subset V$ can be chosen
arbitrarily. Our analysis shows that these morphisms 
have a rich algebraic structure, familiar from the theory of 
superselection sectors.
 
\vspace*{1mm} \noindent  
(c) \  Spacetime translations in $\FL$ do not induce 
automorphisms of  $\AlV$, 
posing the question of how to define the energy content
of states on these algebras. In a massive theory this is 
not a problem,
 since the light cone algebras are 
irreducible in all sectors and the spectral
resolutions of the global energy--momentum operators
are contained in their weak closures. It then makes   
physical sense to characterize states of $\AlV$ by 
their spectral properties. These can in principle be
checked with arbitrary precision in any light cone $\FL$. 
But in the presence of massless particles the total  
energy  of a state can no longer be sharply
determined by such measurements since the energy 
carried by outgoing massless particles created in
the past of $\FL$ fluctuates in the corresponding statistical 
ensemble and cannot be deduced from measurements in~$\FL$. 
When analyzing the energy content of states one  
can, however, exploit the fact that the semigroup of 
future-directed time translations  
acts as endomorphisms on each algebra $\AlV$. This 
allows one to introduce a notion of covariance 
for charge classes and to establish the existence of
selfadjoint generators for the semigroup action. 
Even though these generators cannot be interpreted as 
energy observables since they include gross  
fluctuations of the energy in $\FL$,  
like the Liouvillians in quantum statistical 
mechanics, they contain relevant information on 
the energy content of states. In fact, as 
massless particles in the past of $\FL$ are not  taken
into account, one expects on physical grounds that 
they provide (fuzzy) lower bounds on the total energy  
of states and are bounded from below, like the global
energy. 

\vspace*{1mm}
The relevant notions above are all that is required 
for analyzing charge classes. No global information is needed, 
only the algebras of 
local observables for fixed $\FL$ and the endomorphic action of 
the semigroup $\Semi \doteq \Trans \rtimes \Lore$ on $\AlV$, 
where   
$\Trans$ denotes the closed semigroup of future directed 
translations and $\Lore$ the group  of proper orthochronous
Lorentz transformations on $\FL$. 
This input suffices to \mbox{characterize} the vacuum state on $\AlV$ 
and to determine its properties. In particular, one can 
\mbox{establish} the existence of a continuous unitary representation 
of the full Poincar\'e group $\Poin$ in its GNS--representation 
satisfying the relativistic 
spectrum condition whose restriction to the semigroup
$\Semi$ induces the given endomorphic action on~$\AlV$. 
States in the charge class of the vacuum are induced by 
vectors in this GNS--representation. 

The analysis of all other charge classes can be based on the above 
hypercone localized morphisms $\sigma : \AlV \rightarrow \RiV$. 
In this paper we restrict our attention to simple charge classes 
and simple morphisms, to be defined later. This important special 
case (which includes the electric charge) simplifies the exposition 
whilst retaining the novel features. Equivalent simple morphisms 
carry the same (simple) charge. Using simple morphisms and an 
appropriate version of Haag duality \cite{Ha} in the spacetime $\FL$,  
one concludes that there is a composition law for the simple charges 
making the set of simple charges into an Abelian group 
where the inverse is charge conjugation.
A simple charge and its conjugate  either obey Bose or Fermi statistics. 
More precisely, exchanging such a pair of  charges is described by 
charge transport operators which are $1$ or $-1$ respectively, 
whenever the charges are localized in spacelike separated
hypercones. Finally, the group of simple charges is  
the dual of some compact 
Abelian group, the global gauge group, when all charges are simple.
Thus the general structure of these  
families resembles that of the simple sectors of localizable
charges in Minkowski space~\cite{DoHaRo2}. 

As already mentioned, there is a meaningful notion of covariant
charges. One requires 
corresponding morphisms $\sigma$ to be extendible to 
coherent families
of morphisms ${}^\lambda \sigma : \AlV \rightarrow \RiV$, $\lambda \in \Semi$, 
as explained below. It turns out that  
the set of simple covariant
charges is a subgroup of the set of simple charges.    
Moreover, each covariant simple morphism has an associated 
continuous unitary representation of the
covering group of the Poincar\'e group $\Poin$, unique up to 
equivalence, inducing the endomorphic action of $\Semi$
on $\AlV$. In accordance 
with physical expectations, these
representations are shown to satisfy the relativistic spectrum condition. 

These results do not depend on the choice of light cone $V$. In
the present approach, analyzing the state space 
of local theories provides a consistent physical 
picture even in the presence of massless particles. 
The interpretation of a theory in terms of light cone data 
reflects experimental limitations and provides a ``geometric 
regularization'' thus avoiding the spurious infrared problems 
appearing in treatments based on Minkowski space. 
This observation may also lead to a deeper
understanding of the phenomenon of quantum decoherence. 

The article is organized as follows. The   
assumptions underlying the present analysis are stated
in the next section. 
In Sect.~3 the concept of vacuum in a light cone 
is introduced and its charge class is analyzed. 
Section 4 establishes the concept of simple charge classes and 
derives their group structure and statistics.
In Sect.~5, covariant charge classes are defined; they are 
shown to have similar properties and to 
admit unitary representations of the Poincar\'e group.
Section 6 is devoted to a proof of the spectrum condition
for covariant charge classes 
and in Sect.~7 the relation between the present approach 
and the usual Minkowski space interpretation of 
quantum field theories is discussed.
The conclusions comment on the
particle aspects of the present approach and on possible future 
developments. The appendix is devoted to defining
hypercones and proving the necessary results about them.
They are referred to in the main text as \ref{A.1} to \ref{A.13}.

\section{Nets, localization and covariance} 
\setcounter{equation}{0}
\label{input}

\addtolength{\textheight}{0mm}  
\addtolength{\voffset}{5mm}    

In this brief section we introduce the basic objects treated in this
paper and the corresponding notation.

  As outlined in the introduction, 
fixing an open (forward) light cone $\FL$ in 
four--dimensional Minkowski space $\MI$ with its standard metric
$x^2 \doteq (x_0^2 - \bx^2)$, $x \in \RR^4$, we 
consider a foliation of 
$\FL$ by three--dimensional hyperboloids~$\HY$ (time--shells).
We will consider two types of causally complete regions 
in $V$, namely standard 
(relatively compact) double cones~$\MIO$ and hypercones~$\MIC$.
A hypercone $\MIC$ is the causal completion of a hyperbolic cone on some 
fixed hyperboloid~$\HY$, where a hyperbolic cone 
is a cone in the sense of hyperbolic geometry. 
The  appendix contains precise definitions and proofs of 
the principal properties of hyperbolic cones. 
If ${X}$ is a subset of $\FL$ then $X^c$, the spacelike complement of $X$
in $\FL$, denotes the interior of the set of points $y\in\FL$ such 
that $(x-y)^2 < 0$ for all $x \in X$. In particular, 
$\MIO^c$ denotes the spacelike complement of the double cone $\MIO$ 
and $\MICc$ the spacelike complement of the hypercone 
$\MIC$ respectively.  

We let $\MIO\mapsto\AlO$ be a net $\Al$ of unital 
$C^*$--algebras\footnote{We 
follow the practice of using $C^*$--algebras 
rather than von Neumann algebras. The added generality is spurious 
in that only their weak closures in the vacuum representation
will play a role.} 
on the set $\cK$ of double 
cones $\MIO \subset \FL$, ordered under inclusion. This net 
describes the observables of the underlying theory. 
The  $C^*$--inductive limit of the $\AlO$, $\MIO\in\cK$, 
is denoted by $\Al(\FL)$.
Similarly, if $\MIC$ is a hypercone, $\AlC$ 
denotes the closed subalgebra of $\AlV$ generated by the $\AlO$ 
with $\MIO\subset\MIC$ and $\AlCc$ denotes the closed subalgebra 
generated by the $\Al(\MIC_1)$ with  
$\MIC_1 \subset \MICc$.{}\footnote{In 
fact $\AlCc$ is also 
generated by the $\AlO$ with $\MIO \subset \MICc$, 
see \ref{A.4}.} 

We assume that the net $\Al$ satisfies 
locality (Einstein causality) \ie observables 
localized in spacelike separated regions commute, in short
\begin{equation} \label{locality}
[\Al(\MIO_1), \, \Al(\MIO_2)] = 0 \quad \mbox{if} \quad \MIO_1 \subset \MIOc_2 \, . 
\end{equation} 
Below, we will strengthen this to an appropriate version of 
Haag duality appropriate to the present 
geometric setting. 

The semigroup $\Semi \doteq \Trans \rtimes \Lore \subset \Poin$ 
of Poincar\'e transformations acts on $\FL$;
its general elements are denoted by 
$\lambda = (x,\Lambda)$, where $x \in \Trans$ are the translations
and $\Lambda \in \Lore$ the Lorentz transformations. This semigroup 
induces endomorphisms~$\alpha_\lambda$ of~${\Al}(\FL)$
with the obvious geometric action, 
\begin{equation} \label{covariance} 
 \alpha_\lambda (\Al(\MIO)) = \Al(\lambda \MIO ) \, , 
\quad \lambda \in \Semi \, , \ \MIO \in \cK \, .
\end{equation}

Clearly these features of the net extend canonically to the 
algebras $\AlC$ associated with hypercones $\MIC$.

\addtolength{\textheight}{5mm}  
\addtolength{\voffset}{-5mm}    

\section{Vacuum}
\setcounter{equation}{0}
\label{vacuum}

The input
specified in the preceding section suffices to identify vacuum states
on $\FL$ and to establish their characteristic properties. 

\begin{definition} A state $\omega_0$ on  ${\AlV}$ is called a
vacuum state if 
\begin{enumerate}
\item[(a)] $\omega_0 \compos \alpha_\lambda = \omega_0$ \ for  
$\lambda  \in \Semi$, 
\item[(b)]  $ \lambda \mapsto \omega_0(A^* \alpha_\lambda(B))$ is
continuous, $A,B \in   {\AlV}$,   
\item[(c)]  $x \mapsto \omega_0(A^* \alpha_x(B))$ extends 
continuously to a function on the complex domain $\Trans + i \Trans$
which is analytic in its interior and whose   
modulus is bounded {by \footnote{The explicit 
form of the bound is not needed, but it simplifies matters.}}  
$\sqrt{\omega_0(A^*A) \omega_0(B^*B)}$, \
$A,B \in {\AlV}$.
\end{enumerate} 
\end{definition}

\noindent 
The following result fully justifies this characterization of 
vacuum states in $\FL$.

\begin{proposition} Let $\omega_0$ be a vacuum state on ${\AlV}$ and
let $(\pi_0,{\cal H},\Omega)$ be its GNS representation. Then 
\vspace*{-1mm} 
\begin{enumerate}
\item[(i)] $\Omega$ is cyclic for $\pi_0(\alpha_\lambda(\AlV)$ for any  
$\lambda \in \Semi$,  
\item[(ii)] there is a continuous unitary representation 
$U_0$ of the full Poincar\'e group 
\mbox{$\Poin = \RR^4 \rtimes \Lore$} on ${\cal H}$ leaving $\Omega$
invariant and inducing the endomorphic action of the semigroup, 
$\text{\rm Ad}\,U_0(\lambda) \compos \pi_0 = \pi_0 \compos \alpha_\lambda$, 
$\lambda \in \Semi$, 
\item[(iii)] the spectrum of $U_0 \rest \RR^4$ is contained in the
closed forward light cone $\Spec$.
\end{enumerate}
\end{proposition}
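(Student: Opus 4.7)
The plan is to construct $U_0$ in two stages: first to realize $\Semi$ by isometries on the GNS Hilbert space using (a) and (b), then to use the analyticity condition (c) to extend the translation part to a strongly continuous unitary representation of $\RR^4$ with the spectrum condition; the cyclicity (i) will then follow immediately from surjectivity of the $U_0(\lambda)$. For the first stage I would set, on the dense subspace $\cD := \pi_0(\AlV)\Omega$,
\[
U_0(\lambda)\,\pi_0(A)\Omega := \pi_0(\alpha_\lambda(A))\Omega, \qquad \lambda \in \Semi,\ A \in \AlV.
\]
Invariance (a) gives $\|U_0(\lambda)\pi_0(A)\Omega\|^2 = \omega_0(\alpha_\lambda(A^*A)) = \|\pi_0(A)\Omega\|^2$, so $U_0(\lambda)$ extends to an isometry on $\cH$; strong continuity follows from (b). The semigroup law, the relation $U_0(\lambda)\Omega = \Omega$, and the intertwining $U_0(\lambda)\pi_0(A) = \pi_0(\alpha_\lambda(A))U_0(\lambda)$ are immediate. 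Since $\Lore$ acts on $\AlV$ by automorphisms, $U_0(\Lambda)$ is already unitary for $\Lambda \in \Lore$; only the translations $U_0(x)$, $x \in \Trans$, need to be promoted to unitaries.

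For this central step I use (c). The bound $|\omega_0(A^*\alpha_z(B))| \le \|\pi_0(A)\Omega\|\|\pi_0(B)\Omega\|$ together with a Montel/density argument shows that the matrix coefficients $(\phi, U_0(x)\psi)$ extend, for all $\phi,\psi \in \cH$, to a function continuous on $\Trans + i\Trans$ and analytic in its interior, bounded in modulus by $\|\phi\|\|\psi\|$. These sesquilinear forms are implemented by contractions $W(z)$, $z \in \Trans + i\Trans$, with $W(x) = U_0(x)$ for $x \in \Trans$, $\|W(z)\| \le 1$, and --- by analytic continuation from the semigroup identity on $\Trans \times \Trans$, performed along one-complex-dimensional slices $z_1 = \zeta v$ with $v \in \FL$ --- the semigroup relation $W(z_1 + z_2) = W(z_1)W(z_2)$. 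I would then invoke a Borchers-type extension theorem: a strongly continuous isometric representation of $\Trans$ admitting bounded analytic extension of its matrix coefficients to $\Trans + i\Trans$ extends to a strongly continuous unitary representation of $\RR^4$ whose joint energy--momentum spectrum is contained in $\Spec$. This establishes (iii) and promotes each $U_0(x)$, $x \in \Trans$, to a unitary.

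Combining this $\RR^4$-representation with $U_0 \rest \Lore$ through the semidirect product structure --- the covariance $U_0(\Lambda)U_0(x)U_0(\Lambda)^{-1} = U_0(\Lambda x)$ is known for $x \in \Trans$ from the $\Semi$-representation and extends to all $x \in \RR^4$ by the group homomorphism property, since $\Trans - \Trans = \RR^4$ --- yields the desired continuous unitary representation of $\Poin$ leaving $\Omega$ invariant and implementing $\alpha_\lambda$ on $\pi_0(\AlV)$ for $\lambda \in \Semi$, which is (ii). Cyclicity (i) then follows at once: since $U_0(\lambda)$ is now unitary, hence surjective, and $\pi_0(\AlV)\Omega$ is dense, the subspace $\pi_0(\alpha_\lambda(\AlV))\Omega = U_0(\lambda)\pi_0(\AlV)\Omega$ is dense as well. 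The hard part is the Borchers-type extension invoked in the second paragraph: the one-parameter analogue --- a strongly continuous isometric semigroup with bounded analytic matrix coefficients in the upper half plane extends to a unitary group with nonnegative generator --- is classical, but stitching the one-parameter results together coherently over the cone $\FL$, i.e.\ verifying that the directional generators are selfadjoint and mutually commuting so that a joint spectral calculus delivers a genuine $\RR^4$-representation with spectrum in $\Spec$, is the technical core of the proof.
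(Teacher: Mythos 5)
Your overall architecture is sound, but it inverts the paper's logic and, in doing so, pushes all the weight onto the one step you leave unproved. The paper proves cyclicity (i) \emph{first}, by a short edge-of-the-wedge argument: if $\Psi\perp\pi_0(\alpha_y(\AlV))\Omega$ for some $y\in\Trans$, then the function $x\mapsto(\Psi,\pi_0(\alpha_x(A))\Omega)$, analytic in the interior of $\Trans+i\Trans$ by (c), vanishes on the open real set $\Trans+y$ (isotony and covariance) and hence identically on $\Trans$, giving $(\Psi,\pi_0(A)\Omega)=0$ and $\Psi=0$. Unitarity of the isometries is then free (isometry with dense range), the extension to $\RR^4$ is the purely algebraic $U_0(z)\doteq U_0(x)U_0(y)^{-1}$ for $z=x-y$, $x,y\in\Trans$ --- no generators and no joint spectral calculus are needed, consistency following from commutativity of the $U_0(x)$ --- and (iii) is then a standard Laplace-transform statement about bounded analytic functions in the tube $\RR^4+i\Trans$. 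You instead derive (i) as a corollary of unitarity, which is logically legitimate, but your route to unitarity rests on a ``Borchers-type extension theorem'' that is not an off-the-shelf result in the form you state it.

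Two concrete points about that invoked theorem. First, it is true but must be proved: already the one-parameter version needs Cooper's decomposition of a strongly continuous isometric semigroup into a unitary part plus shift summands, combined with a boundary-uniqueness argument (a bounded analytic function, continuous up to the boundary and vanishing on a real ray, vanishes identically) to exclude the shift part --- which is essentially the same complex-analytic input the paper uses for (i), just deployed less directly. Second, condition (c) provides analyticity in the \emph{double} tube $\Trans+i\Trans$, not in $\RR^4+i\Trans$; restricting to a ray $t\mapsto tv$ with $v$ in the open cone therefore yields analyticity only in a quarter-plane, so the classical half-plane statements you allude to need a conformal-map adaptation before they apply. By contrast, the multi-parameter ``stitching'' you single out as the technical core is in fact the easy part: once each $U_0(x)$, $x\in\Trans$, is unitary, the algebraic extension disposes of it in a few lines and the spectrum condition is obtained at the level of matrix elements without ever discussing directional generators. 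Your proof can be completed, but the paper's ordering --- cyclicity before unitarity --- is what makes the whole argument elementary; you should either adopt it or supply a genuine proof of the extension theorem along the lines above.
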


\noindent \textbf{Remark.} 
This result shows that it is meaningful for an 
observer in $\FL$ to talk about the energy--momentum content of the 
states in ${\cal H}$. It should be noticed, however, that the 
unitaries $U_0 \rest \RR^4$ are not contained in the 
weak closure of $\pi_0(\AlV)$ in the presence of massless 
particles, so there is no global observable determining this
energy--momentum  content. Instead, 
$U_0 \rest \RR^4$ determines the
energy--momentum content relative to 
a state (the vacuum) whose energy--momentum 
is not precisely known as information about processes 
in the past of $\FL$ is lacking. 
This is similar to the situation with KMS--states. 

\begin{proof} (i) \ As the light cone is invariant
under Lorentz transformations, it suffices to prove the 
first statement for the semigroup of translations $x \in \Trans$. 
Property~(c) of vacuum states implies that 
the functions $x \mapsto \pi_0(\alpha_x(A)) \Omega$,
$A \in {\AlV}$, extend continuously to vector--valued
functions  which are analytic in the interior of 
the domain $\Trans + i \Trans$. Now if $\Psi \in {\cal H}$ is a vector
in the orthogonal complement of $\pi_0(\alpha_y({\AlV})) \Omega$
for some $y \in \Trans$, it follows from isotony 
and covariance of the net that 
$(\Psi, \pi_0(\alpha_x(A)) \Omega) = 0$ for 
all $x \in \Trans + y$. The edge-of-the-wedge theorem implies
$(\Psi, \pi_0(\alpha_x(A)) \Omega) = 0$ for all $x \in \Trans$
and hence $(\Psi, \pi_0(A) \Omega) = 0$, $A \in {\AlV}$.
Thus $\Psi = 0$ 
since the GNS vector $\Omega$ is cyclic for $\pi_0(\AlV)$, 
proving the first part.

\addtolength{\textheight}{-5mm}  
\addtolength{\voffset}{5mm}    

\noindent 
(ii) \ Making use of property (a) of vacuum states, one can consistently
define  
isometries $U_0(\lambda)$, $\lambda \in  \Semi$ on ${\cal H}$ by putting 
$U_0(\lambda) \pi_0(A) \Omega = \pi_0(\alpha_\lambda(A)) \Omega$,
$A \in {\Al}(\FL)$.  These isometries are unitary since their range is dense 
by the preceding step.
Moreover, by construction, they induce the endomorphic action 
of the semigroup $\Semi$ and are weakly continuous by
property (b) of vacuum states. 

To prove that  $U_0$ extends to the full Poincar\'e group $\Poin$
we note that $U_0 \rest \Lore$ already defines a representation
of the subgroup of Lorentz transformations.
So we need only consider the
subgroup of translations $\RR^4$. Given $x,y,v,w \in \Trans$ with
$x - y = v - w$ one has  $U_0(x)U_0(w) = U_0(x+w) = U_0(v)U_0(y)$.
So these unitary operators commute and $U_0(x)U_0(y)^{-1} = U_0(v) U_0(w)^{-1}$.  
Hence, as any $z \in \RR^4$ can be decomposed into $z = x - y$ with 
$x,y \in \FLC$, one can consistently extend $U_0$
to a continuous unitary representation of $\RR^4$, 
putting $U_0(z) \doteq U_0(x) U_0(y)^{-1}$. Moreover, since 
$U_0(\Lambda) U_0(y)^{-1} U_0(\Lambda)^{-1} = 
(U_0(\Lambda) U_0(y) U_0(\Lambda)^{-1})^{-1}
= U_0(\Lambda y)^{-1}$ for $y \in \Trans$, it is also clear
that the Lorentz transformations act correctly 
on the extended translations. Thus, putting 
$U_0(\lambda) \doteq U_0(z)U_0(\Lambda)$, $\lambda = (z,\Lambda) \in 
\Poin$, one obtains the desired extension of $U_0$ to $\Poin$. \\
(iii) \ Picking $A,B \in {\AlV}$,  
property (c) of vacuum states implies that the
function $z \mapsto (\pi_0(B) \Omega, U_0(z) \pi_0(A) \Omega)
= \omega_0(\alpha_y(B^*) \alpha_x(A))$, where $z=x-y$, $x,y\in\Trans$
can be continuously extended
into the tube $\RR^4 + i \Trans$ and is 
analytic in its interior. 
Moreover, the modulus
of this extension is bounded by 
$\| \pi_0(B) \Omega \| \|  \pi_0(A) \Omega \|$.  It then 
follows from standard arguments in the theory of Laplace
transforms that the spectrum of $U_0 \rest \RR^4$ is
contained in the closed forward light cone $\Spec$.
\end{proof}

\noindent \textbf{Remark.} If the eigenvalue $0$ in 
the spectrum of $U_0 \rest \RR^4$ is simple, 
the vacuum is said to be  
unique. This is the case if and only if $\omega_0$ is weakly 
clustering, \ie 
$$ \lim_{\MIO \nearrow \FL} \, 
\frac{\textstyle 1}{|{\scriptstyle \, \MIO \,}|}  \int_\MIO \! dx \, 
\omega_0(A \alpha_x(B)) =
\omega_0(A) \, \omega_0(B) \, , \quad A,B \in {\AlV} \, .$$
Thus this property of the vacuum state can also be determined in 
$\FL$. 

\medskip 
In the subsequent analysis 
we assume uniqueness of the vacuum state. 
Moreover, without loss of generality, we regard  
the corresponding vacuum representation as the defining 
representation of~$\AlV$ and therefore replace 
in the following the morphism $\pi_0$ by the identity $\iota$. 

It is worthy of note that the preceding proposition allows one to
extend the given net $\Al$ on $V$ to a local, 
Poincar\'e covariant net 
$\Al_\MI$ on Minkowski space $\MI$.
It is given by  putting for any double cone $\MIO_\MI \subset \MI$
$$ \Al_\MI(\MIO_\MI) \doteq 
U_0(x_\MI,1)^{-1} \, \Al(\MIO_\MI + x_\MI) \,
U_0(x_\MI,1) \, ,
\ x_\MI\ \in \Trans,  \ \MIO_\MI + x_\MI \subset V \, .
$$
This definition is consistent (\ie independent of the 
choice of $x_\MI \in \Trans$) because the original net 
is covariant and the unitaries 
$U_0 \rest \RR^4$ are mutually commutative. Moreover, the vacuum state 
$\omega_0$ on~$\AlV$ can be extended to a vacuum state $\omega_\MI$ 
on $\Al_\MI(\MI)$ by $\omega_\MI(A_\MI) \doteq 
(\Omega, A_\MI \Omega)$, $A_\MI \in \Al_\MI(\MI)$.
This illustrates the remark made in the introduction 
that theory together with data in~$V$ yields information 
about the past. But there is 
a caveat: the extension of $\omega_0$ to the net $\Al_M$ is not
unique in the presence of massless particles. For there are 
different extensions describing in addition outgoing massless 
particles created in 
the past of $V$ \cite[Sec.~5]{Bu1}. Nevertheless, the above      
extension is of interest here. On one hand it allows us to make use 
of results on local nets in Minkowski space established in the literature. 
On the other hand it makes contact
with the standard Minkowskian 
interpretation of quantum field theory. We shall
return to this topic in Sec.~\ref{minkowski}.  
 
In the subsequent analysis we need to consider 
the weak operator topology on the observable algebras 
in the vacuum representation. The closures of the various 
C*--algebras in this topology 
are marked by replacing the letter~$\Al$ by~$\Ri$. 
Thus $\RiV$ denotes the weak closure 
of~$\AlV$, $\RiC$ that of~$\AlC$,~\etc.
As a consequence of the preceding assumptions,
cf.~\cite[Rem.~4]{Lo}, either    
$\RiV = \cB(\cH)$ or $\RiV$ is a factor of type~III${}_1$.  
The former is the case if the spectrum of $U_0 \rest \RR^4$ 
has a mass gap \cite[Thm~2]{SaWo}, 
the latter when the spectrum has non--zero weight on the boundary 
of $\Spec \backslash \{0\}$, \ie in the presence of massless particles.
For then the algebra $\RiV$ has a non--trivial
commutant as a consequence of Huygens principle
\cite[p.~161]{Bu}. (This can be established  
in the present framework using the extended net~$\Al_\MI$.)
Our arguments apply to both cases, but as we are primarily interested
in theories with long range forces we suppose here  
that $\RiV$ is a factor of type III${}_1$. 

We supplement the preceding results assuming the
vacuum vector $\Omega$ to be cyclic and separating for the algebras 
$\AlO$ associated with any given double cone $\cO \subset V$.
This Reeh--Schlieder property of the vacuum \cite{Ha} is a 
generic feature of nets of observable 
algebras generated by quantum fields \cite{BoYn,GlJa}. 
The hypercone algebras are assumed to satisfy the  
appropriate form of Haag duality as adapted to 
the present geometrical setting: 
in analogy to sector analysis in Minkowski space theories 
\cite{DoHaRo1,BuFr} we require that there is a sufficiently large
family $\cF$ of hypercones (cf.~Definition~1 in the appendix) 
such that for each $\MIC \in \cF$
\begin{equation}  \label{duality} 
\Al(\MIC^c)^\prime \, {\textstyle \bigcap} \, \RiV = \Ri(\MIC) 
\quad \mbox{and} \quad 
\Al(\MIC)^\prime \, {\textstyle \bigcap} \, \RiV = \Ri(\MIC^c) \, ,
\end{equation}
where  a prime ${}^\prime$ on an algebra denotes its commutant 
in $\cB(\cH)$. We will refer to this condition as hypercone
duality. It expresses the idea that the hypercone  algebras are maximal
in the sense that any extension would conflict
with Einstein causality. This version of duality 
was introduced and tested for the free Maxwell field in \cite{Ca}.

\section{Charge classes and morphisms}
\label{morphisms}
\setcounter{equation}{0}

We now start to analyze charged states. 
As explained in the introduction, the concept of superselection
sector does not make sense in 
the presence of massless particles and 
has to be replaced by the notion of charge class. 

By definition \cite{Bu1}, the charge class ${\mathscr C}_0$ 
of the vacuum $\omega_0$ consists of the 
set of normal states on $\AlV$ relative to the defining  
vacuum representation~$\iota$. Thus all of these 
states extend to normal 
states on the type III${}_1$ factor~$\RiV$. As already mentioned, 
the group of inner automorphisms of a type III${}_1$ factor acts 
almost transitively on its normal states \cite{CoSt}. Hence, as the
group of unitaries in $\AlV$ is strongly *--dense in the group of 
unitaries in $\RiV$ \cite[Thm.~4.11]{Ta}, given a
state in the charge class of the vacuum, there is a sequence of 
inner automorphisms  
$\{ \gamma_n \in \mbox{In} \, \AlV \}_{n \in \NN}$ with 
$\omega_0 \compos \gamma_n$ converging in norm to that state.
Conversely, any state that can be approximated in this
way belongs to the charge class of the vacuum.

Ensembles carrying a definite global charge  
that can be precisely determined in~$V$ are described 
by primary states inducing  factorial representations of~$\AlV$.
We consider here only states carrying simple charges
(defined below) where the weak closures of the algebras in the 
corresponding representations  
are factors of type III${}_1$. The charge classes of these states 
have a characterization analogous to those in the vacuum class.

\begin{definition} 
A state $\omega$ on $\AlV$ is said to 
be elemental if the weak closure of its 
GNS-representation is a factor of type~III${}_1$.
The charge
class $\mathscr{C}$   of an elemental
state~$\omega$ is the norm closure of the set of states 
\ $\{ \omega \compos \gamma : \gamma \in  \mbox{\rm In} \, \AlV \}$
and coincides with the set of normal states in
its GNS--representation. (Note that any other 
state in $\mathscr{C}$ is elemental and belongs to the same charge class.)
\end{definition} 

\noindent {\bf Remark.} 
The notion of a charge class of elemental states 
is a physically meaningful generalization of 
the concept of superselection sector of pure states in 
massive theories. In fact, by the Kadison transitivity 
theorem \cite[Thm.~4.18(iii)]{Ta} any state in the sector of a given
pure state $\omega$ on~$\AlV$  is an element of the set 
$\{ \omega \compos \gamma : \gamma \in  \mbox{\rm In} \, \AlV \}$,
a closure in norm is not needed. 
Hence both sectors and charge classes consist of
just those states that can be reached by exploiting the 
quantum effects of physical operations
starting from a given 
pure or elemental state, respectively. 

\vspace*{2mm}
As explained in the introduction in heuristic terms,
the charge classes of interest here are obtained from 
the states in the charge class of the vacuum by 
composing with suitable sequences of inner automorphisms of $\AlV$.
We assume that for given target charge class $\mathscr{C}$  and any  
hypercone $\MIC \subset V$ there is a sequence of inner automorphisms
$\{ \gamma_n \in \mbox{\rm In} \, \AlC \}_{n \in\NN} \subset 
\mbox{\rm In} \, \AlV $ such that 
$\{ \omega_0 \compos \gamma_n \}_{n \in \NN}$ converges pointwise
on $\AlV$ to some state  $\omega \in \mathscr{C}$,
\begin{equation} \label{con1}
\lim_n \, \omega_0 \compos \gamma_n (A) = \omega (A) \, , \ A \in \AlV \, .
\end{equation}

\noindent Thus the condition of norm convergence within 
charge classes is relaxed to weak--*--convergence, 
thereby allowing the limit states $\omega$ to have a different charge. 
We supplement this assumption by a condition expressing the
heuristic idea that the process of charge creation in a 
given hypercone~$\MIC$ and operations performed in its 
distant spacelike complement are only weakly correlated in 
the vacuum state.  
The precise form of this ``independence relation'' is
\begin{equation} \label{con2} 
\lim_n \, \sup_{B} \, | \omega_0 \compos \gamma_n(B^*A) | = 
\sup_B \, \lim_n \, | \omega_0 \compos \gamma_n(B^*A) | \, ,
\end{equation} 
where $A \in \AlV$ is any fixed operator 
localized in some double cone $\MIO$, the
supremum being taken over all operators
$B \in  \Al(\MIO_d)$, satisfying the normalization condition
$\omega_0 \compos \gamma_n(B^*B) = \omega_0(B^*B) = 1$,
where $\MIO_d \subset \MIOc \cap \MICc$ 
is any other distant double cone.  
Note that localization implies $\gamma_n \rest \Al(\MIO_d) = \iota$.
The following lemma shows that
the limits of such sequences of inner 
automorphisms exist. 

\begin{lemma} Let $\{ \gamma_n \in \mbox{\rm In} \, \AlC \}_{n
    \in\NN}$ be a sequence of inner automorphisms 
satisfying conditions (\ref{con1}) and (\ref{con2}). The limit 
$\sigma_\MIC \doteq \lim_n  \gamma_n$ exists pointwise on $\AlV$
in the strong operator topology. Moreover,
\begin{enumerate}
\item[(i)]  $\sigma_\MIC : \AlV \rightarrow \RiV$ 
is a homomorphism (briefly: morphism) of algebras. 
\item[(ii)] $\sigma_\MIC \rest \AlCc = \iota$. 
\item[(iii)]$\sigma_\MIC(\Al(\MIC_1)) \subset \Ri(\MIC_1)$   
for any hypercone $\MIC_1 \supseteq \MIC$. 
\end{enumerate}
In virtue of the last two properties we say that $\sigma_\MIC$ is 
localized in the hypercone $\MIC$. 
\end{lemma}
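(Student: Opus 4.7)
The idea is to construct $\sigma_\MIC(A)$, for $A \in \AlV$, as a pointwise strong operator limit of the $\gamma_n(A)$ in three stages: first deduce weak operator convergence from~(\ref{con1}) and the Reeh--Schlieder property; then invoke condition~(\ref{con2}) to upgrade this to norm convergence of the vectors $\gamma_n(A)\Omega$; finally extend to strong operator convergence on all of $\cH$ by exploiting locality. Assertions~(i)--(iii) then follow by routine limiting arguments, using that each $\gamma_n$ is implemented by a unitary $u_n \in \AlC$.

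Since $\AlV$ is the $C^*$--inductive limit of the $\AlO$ and $\|\gamma_n(A)\| = \|A\|$ uniformly in $n$, it suffices to treat $A$ localized in a double cone $\MIO$. Fix a double cone $\MIO_d \subset \MIOc \cap \MICc$. Locality gives $\gamma_n \rest \Al(\MIO_d) = \iota$, so for $B_1, B_2 \in \Al(\MIO_d)$,
\[
(B_1 \Omega, \gamma_n(A) B_2 \Omega) \, = \, \omega_0 \compos \gamma_n(B_1^* A B_2) \, ,
\]
which converges by~(\ref{con1}). As $\Al(\MIO_d) \Omega$ is dense in $\cH$ by Reeh--Schlieder, the sequence $\gamma_n(A)$ converges in the weak operator topology to some $\sigma_\MIC(A) \in \cB(\cH)$. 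In this setting,~(\ref{con2}) reads as follows. Using $\omega_0 \compos \gamma_n(B^*A) = (B\Omega, \gamma_n(A)\Omega)$, and noting that the unit vectors $B\Omega$ with $B \in \Al(\MIO_d)$ form a dense subset of the unit sphere of $\cH$, the supremum over such $B$ recovers $\|\gamma_n(A)\Omega\|$, so the left side equals $\lim_n \|\gamma_n(A)\Omega\|$. On the right side, the inner limit equals $|(B\Omega, \sigma_\MIC(A)\Omega)|$ by the weak convergence, and taking the supremum yields $\|\sigma_\MIC(A)\Omega\|$. Hence~(\ref{con2}) forces
\[
\lim_n \|\gamma_n(A)\Omega\| \, = \, \|\sigma_\MIC(A)\Omega\| \, ,
\]
which, combined with weak convergence $\gamma_n(A)\Omega \to \sigma_\MIC(A)\Omega$, upgrades the latter to norm convergence at~$\Omega$. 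For $C \in \Al(\MIO_d)$ locality gives $[C, \gamma_n(A)] = 0$, so $\gamma_n(A) C\Omega = C \gamma_n(A)\Omega \to C \sigma_\MIC(A) \Omega$ in norm. Reeh--Schlieder density of $\Al(\MIO_d) \Omega$ together with the uniform bound $\|\gamma_n(A)\| = \|A\|$ then give SOT convergence of $\gamma_n(A)$ on all of $\cH$, permitting us to define $\sigma_\MIC(A) \doteq \lim_n \gamma_n(A)$ in SOT.

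The remaining properties are now routine. Joint SOT continuity of multiplication on bounded sets gives $\sigma_\MIC(AA') = \sigma_\MIC(A)\sigma_\MIC(A')$, linearity is automatic, and $\sigma_\MIC(A^*) = \sigma_\MIC(A)^*$ because both SOT limits exist and hence equal the corresponding mutually adjoint WOT limits. Since $\gamma_n(A) \in \AlV$, the SOT limit lies in its weak closure $\RiV$, establishing~(i). Assertion~(ii) is immediate: for $A \in \AlCc$ one has $\gamma_n(A) = A$ for all $n$. For~(iii), if $\MIC_1 \supseteq \MIC$ and $A \in \Al(\MIC_1)$, then $u_n \in \AlC \subset \Al(\MIC_1)$ gives $\gamma_n(A) \in \Al(\MIC_1)$, whose SOT limit lies in the weak closure $\Ri(\MIC_1)$.

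The main obstacle is the middle step: condition~(\ref{con1}) alone yields $\lim_n \|\gamma_n(A)\Omega\|^2 = \omega(A^*A)$, but matching this limit to $\|\sigma_\MIC(A)\Omega\|^2$ is what strengthens weak to norm convergence at~$\Omega$. This matching is precisely what the independence-type hypothesis~(\ref{con2}) furnishes, once both of its sides are interpreted via Reeh--Schlieder as norms of vectors in the vacuum Hilbert space. Without~(\ref{con2}) one would be stuck at weak convergence, short of the strong operator limit needed.
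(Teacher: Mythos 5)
Your proposal is correct and follows essentially the same route as the paper's proof: weak operator convergence from condition (\ref{con1}) together with the Reeh--Schlieder property, norm convergence of $\gamma_n(A)\Omega$ by reading both sides of (\ref{con2}) as $\lim_n\|\gamma_n(A)\Omega\|$ and $\|\sigma_\MIC(A)\Omega\|$ respectively, and then strong operator convergence on all of $\cH$ via commutation with $\Al(\MIO_d)$ and cyclicity, with (i)--(iii) following from uniform boundedness and the localization of the implementing unitaries. You merely spell out more explicitly the steps the paper compresses (the interpretation of (\ref{con2}) and the passage from convergence at $\Omega$ to convergence everywhere, which the paper attributes to $\Omega$ being separating for $\Ri(\MIO_d^{\,c})$).
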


\begin{proof} 
In the proof we make use of the Reeh--Schlieder
property of the vacuum, \ie the fact that $\Omega$ is cyclic
for $\Al(\MIO_d)$ and hence 
separating for $\Ri(\MIO_d^{\, c})$, where $\MIO_d$
is any double cone. Let $A \in \Al(\MIO)$ for any given 
$\MIO$ and let 
$B_1,B_2\in \Al(\MIO_d)$, where 
\mbox{$\MIO_d\subset\MIO^c\cap\MIC^c$}.
Then $\gamma_n(B_1^*AB_2)=B_1^*\gamma_n(A)B_2$. Acting with $\omega_0$, 
the limit over $n$ exists by the first condition. Since 
$\Omega$ is cyclic for $\Al(\MIO_d)$
and $\gamma_n(A)$ is uniformly bounded 
$\sigma_\MIC\doteq \lim_n\gamma_n$ 
exists in the pointwise weak operator topology and 
defines a linear and symmetric map 
\mbox{$\sigma_\MIC : \Al(V) \mapsto \Ri(V)$}.  For it to exist 
in the pointwise strong operator topology, we only need
to show that $\gamma_n(A) \, \Omega$ converges strongly since
$\gamma_n(A) \in \Al(\MIO \bigcup \MIC) \subset \Ri(\MIO_d^{\, c})$
and $\Omega$ is separating for $\Ri(\MIO_d^{\, c})$.
Now according to the second condition one has for 
$B \in \Al(\MIO_d)$ with $(\Omega, B^* B \Omega) = 1$ 
$$
\sup_B \, \lim_n | (\Omega, B^* \gamma_n(A) \Omega) | = 
\lim_n \, \sup_{B} \, |(\Omega, B^* \gamma_n(A) \Omega)| \, .
$$
This implies $\| \lim_n \gamma_n(A) \Omega \| = 
\lim_n \| \gamma_n(A) \Omega \|$ by 
the Reeh--Schlieder property of $\Omega$, proving the strong 
convergence of $\gamma_n(A) \, \Omega$. 
Hence $\lim_n \gamma_n \to \sigma_\MIC$ in the pointwise strong 
operator topology, proving that $\sigma_\MIC$ is also 
multiplicative, \ie a morphism. 
As $\gamma_n\rest\Al(\MIC^c)=\iota$, $n\in \mathbb N$, property (ii) 
of $\sigma_\MIC$ is evident and (iii) follows from the 
inclusion $\gamma_n(\Al(\MIC_1))\subset\Ri(\MIC_1)$ for $\MIC_1\supset\MIC$ 
and~\mbox{$n \in \NN$}.
\end{proof}

\vspace*{-1mm} 
An immediate consequence of this lemma 
and the Reeh--Schlieder property of the vacuum is that the 
GNS--representation of the (charged) limit 
state $\omega_0 \compos \sigma_\MIC$ 
is 
\mbox{$\sigma_\MIC : \AlV \rightarrow \RiV \subset \cB(\cH)$.}
Thus the charged states
of interest here are vector states in the 
defining Hilbert space of the theory $\cH$ for
representations of the observable algebra
derived from hypercone localized morphisms. 
Furthermore, if the state $\omega_0 \compos \sigma_\MIC$ 
is elemental, composing it with the elements of $\mbox{In} \, \AlV$ 
generates a norm dense subset of states in its charge class.
Since $\sigma_\MIC \compos \mbox{In} \, \AlV 
\subset \mbox{In} \, \RiV \compos \sigma_\MIC$ by the 
first part of the preceding lemma,  
the corresponding GNS--representations of $\AlV$ act on 
$\cH$ too and are given by (not necessarily localized)
morphisms with range in~$\RiV$. Moreover, all such
representations are  
equivalent to the initial representation~$\sigma_\MIC$ 
via unitary intertwiners in $\RiV$. Thus almost all states 
in the charge class of  $\omega_0 \compos \sigma_\MIC$ 
induce representations of $\AlV$, mutually 
equivalent to each other in this strong sense. 

The starting point of sector analysis in massive theories 
is to consider morphisms of the (irreducible) algebra of 
observables whose ranges and whose intertwiners are 
contained in its weak closure \cite{BuFr}.
These observations motivate our assumption that given  
a relevant charge class there are corresponding 
morphisms localized in given  
hypercones and mutually equivalent via 
unitary intertwiners in $\RiV$.  
In the subsequent analysis we restrict attention to the 
physically significant case of ``simple charges'',
characterized as follows. 

\vspace*{2mm} \noindent
{\bf Criterion:} Let $\mathscr{C}$ be a charge class of 
elemental states on $\AlV$. The states and their charge class are said to be 
{simple} if given a hypercone $\MIC \in \cF$ \footnote{$\cF$
is the family of hypercones appearing in condition 
(\ref{duality}) of hypercone duality.}
there is a morphism $\sigma_\MIC : \AlV \rightarrow \RiV$ \
with \
$\omega_0 \compos \sigma_\MIC \in \mathscr{C}$ and 

\vspace*{-3.5mm}
\begin{enumerate}
\item[(a)] $\sigma_{\, \MIC} \rest \AlCc = \iota$, 
\item[(b)] $\sigma_{\, \MIC} (\Al(\MIC_1))^- = \Ri(\MIC_1)$ \ for 
any $\MIC_1 \supseteq \MIC$, 
\item[(c)] $\mbox{In} \, \RiV \compos \sigma_{\, \MIC_1} \,
\mbox{\large $\bigcap$} \, 
\mbox{In} \, \RiV \compos \sigma_{\, \MIC_2} \neq \emptyset$
\ for any pair of hypercones $\MIC_1, \MIC_2 \in \cF$.  
\end{enumerate} 

In the sequel, it will be convenient to distinguish the morphisms 
$\sigma_{\, \MIC}$ not only by 
their action on $\AlV$ but also by a hypercone $\MIC$ of 
localization in the sense of conditions (a) and (b). 
Thus two morphisms \mbox{$\sigma_{\, \MIC_1}$, $\sigma_{\, \MIC_2}$} 
with different hypercones $\MIC_1, \MIC_2$ 
of localization may act  on $\AlV$ in exactly the same way.   
Condition (b), where the bar ${}^-$ denotes the 
weak closure of the respective algebras, encodes the 
decisive information that the charge class is
simple. In particular, 
\mbox{$\sigma_{\, \MIC}(\AlV)^- = \RiV$} follows. This generalizes how simple 
sectors are characterized in Minkowski space theories, where an  
analogous equality of algebras is implied, 
cf.~\cite[Lem.~2.2]{DoHaRo1}. Condition (c) says that
for any pair of morphisms $\sigma_{\, \MIC_1}, \sigma_{\, \MIC_2}$
associated with $\mathscr C$ there are unitary 
intertwiners in $\RiV$, unique up to a phase by Condition~(b). 

The collection of hypercone localized morphisms corresponding to 
simple charge classes in the theory is denoted by $\Sigma(V)$ and  
$\Sigma(\MIC) \subset \Sigma(V)$ is the subset of morphisms 
localized in a given hypercone; throughout the 
subsequent discussion we shall implicitly assume that  $\MIC \in \cF$. 
We shall see that there is a composition law for the elements
of $\Sigma(V)$, reflecting the composition of charges, 
every element of $\Sigma(V)$ has an inverse in  
the conjugate charge class, and
the morphisms in every charge class have definite 
(Bose or Fermi) statistics. 

After fixing the framework we begin
to analyze the simple charge classes  
described by the morphisms in~$\Sigma(V)$. 
The proofs are similar to those adopted in 
\cite{BuFr} for massive theories. Yet geometric complications 
mean that some arguments have to be modified.  
Since we shall have to deal with various extensions
of the morphisms we introduce the following convenient 
notation.  

\vspace*{2mm} 
\noindent {\bf Notation:}
Let $\tAl(V)$ be any given C*--algebra, 
$\AlV \subseteq \tAl(V) \subseteq \RiV$, and  
let $\tild{\sigma}_i : \tAl(V) \rightarrow
\RiV$ be morphisms, $i = 1,2$. We write 
$\tild{\sigma}_1 \simeq \tild{\sigma}_2$ if there is 
a unitary intertwiner $W \in
\RiV$ between $\tild{\sigma}_1$ and $\tild{\sigma}_2$, \ie 
$$  \mbox{Ad} \, W \compos  \tild{\sigma}_{1} (A)  = 
W  \tild{\sigma}_{1} (A)   W^{-1} =
\tild{\sigma}_{2} (A) \, , \quad A \in \tAl(V) \, . $$
The set of all such unitary intertwiners in $\RiV$ is denoted by 
$(\tild{\sigma}_{1}, \tild{\sigma}_{2})$.

\vspace*{2mm}

\vspace*{1mm} 
The first step is to show 
that any morphism $\sigma_{\, \MIC} \in \Sigma(V)$ can be extended (as a morphism)
from $\AlV$ to certain larger domains. These domains are fixed by 
specifying some funnel of hypercones $\{ \tcC_n \}_{n \in \NN}$
where 
\mbox{${\tcC}_1 \supset \dots 
{\tcC}_n \supset \dots$ and ${\tcC}_n{}^{\! c}
\nearrow V$}, cf.~\ref{A.1}, and are defined as C*--inductive limits   
\mbox{$\tAl(V) \doteq \varinjlim \ \Ri({\tcC}_n{}^{\! c})  
\subset \Ri(V)$} of the net 
\mbox{$\Ri({\tcC}_1{}^{\! c}) \subset 
\dots \, \Ri({\tcC}_n{}^{\! c}) \subset \dots $} 
of von Neumann algebras associated with the spacelike complements 
of the hypercones in the given funnel. Since 
$\MIO \subset {\tcC}_n{}^{\! c}$,  cf.~\ref{A.1},  
for any given double cone $\MIO$ and sufficiently large $n$, 
$\tAl(V) \supset \AlV$. 
The following lemma establishes the existence of 
such extensions and proves some of 
their properties.

\begin{lemma} \label{extensions}
Let $\sigma_{\, \MIC} \in \Sigma(V)$ 
and let $\{{\tcC}_n \}_{n \in \NN}$ be any funnel of hypercones.
\begin{enumerate}
\item[(i)]  $\sigma_{\, \MIC}$ extends to a morphism  $\tsigma_{\, \MIC} : 
\tAl(V) \rightarrow \RiV$
normal on each von Neumann algebra $\Ri({\tcC}_n^{\ c})$, $n \in \NN$.
If $\MIC \subset {\tcC}_m^{\ c}$, for some $m \in \NN$, the domain 
$ \tAl(V) $ is stable under  $\tsigma_{\, \MIC}$.   
\item[(ii)] Let $\{ {\ttcC}_n \}_{n \in \NN}$ be another 
funnel of hypercones and $\ttsigma_{\, \MIC} : \, 
\ttAl(V) \rightarrow \RiV$
be the corresponding extension of $\sigma_{\, \MIC}$. If 
$\MIC_0$ is a hypercone with 
$\MIC_0 \subset  {\tcC}_m{}^{\! c} \bigcap {\ttcC}_m{}^{\! c}$
for some $m \in \NN$, 
then $\ttsigma_{\, \MIC} \rest \Ri(\MIC_0) 
= \tsigma_{\, \MIC} \rest \Ri(\MIC_0) $.
\item[(iii)] Let $\sigma_{\, \MIC_1} \simeq \sigma_{\, \MIC_2}$ and 
let  $\tsigma_{\, \MIC_1}$, $\tsigma_{\, \MIC_2}$ be their respective extensions
to a common domain $\widetilde\Al(V)$. 
Then $\tsigma_{\, \MIC_1} \simeq \tsigma_{\, \MIC_2}$ 
and $(\tsigma_{\, \MIC_1}, \tsigma_{\, \MIC_2}) = 
(\sigma_{\, \MIC_1}, \sigma_{\, \MIC_2})$, \ie the associated
sets of unitary intertwiners in $\RiV$ coincide. 
\end{enumerate}
\end{lemma}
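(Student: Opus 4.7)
\textbf{Proof proposal for Lemma \ref{extensions}.}

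\emph{Plan for (i).} The key idea is that on any ``sufficiently spacelike'' part of $\AlV$, the morphism $\sigma_{\, \MIC}$ already acts as an inner morphism of $\RiV$, and this inner implementation extends normally to the weak closure. For each $n$, use the simplicity condition (c) to pick a morphism $\sigma_n \in \Sigma(\tcC_n) \cap \Sigma(V)$ equivalent to $\sigma_{\, \MIC}$, say via a unitary $W_n \in (\sigma_n, \sigma_{\, \MIC}) \subset \RiV$. By localization, $\sigma_n \rest \Al(\tcC_n^{\, c}) = \iota$, so $\sigma_{\, \MIC}(A) = W_n A W_n^{-1}$ for every $A \in \Al(\tcC_n^{\, c})$. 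Define the extension by
$$\tsigma_{\, \MIC}(A) \doteq W_n A W_n^{-1}, \qquad A \in \Ri(\tcC_n^{\, c}),$$
which is manifestly a normal $*$-homomorphism. To glue the definitions for different $n$, I would argue that if $n \leq n'$ and $W_n, W_{n'}$ are the corresponding intertwiners, then for $A \in \Al(\tcC_n^{\, c})$ one has $W_{n'}^{-1} W_n \, A = A \, W_{n'}^{-1} W_n$, whence by hypercone duality $W_{n'}^{-1} W_n \in \Al(\tcC_n^{\, c})' \cap \RiV = \Ri(\tcC_n)$; thus $\mathrm{Ad}\, W_{n'}$ and $\mathrm{Ad}\, W_n$ coincide on $\Ri(\tcC_n^{\, c})$ and the extension passes to the inductive limit $\tAl(V)$. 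Uniqueness of $\tsigma_{\, \MIC}$ as a normal extension follows from weak density of $\Al(\tcC_n^{\, c})$ in $\Ri(\tcC_n^{\, c})$.

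\emph{Stability.} Suppose $\MIC \subset \tcC_m^{\, c}$. For $n \geq m$ we have $\MIC \subset \tcC_n^{\, c}$, and I can pick the equivalent morphism $\sigma_n$ so that $W_n$ is chosen inside $\Ri(\tcC_n^{\, c})$: indeed, $W_n^* W_m \in \Ri(\tcC_n)$ as above, so multiplying $W_n$ on the right by a suitable element of $\Ri(\tcC_n)$ one may arrange $W_n \in \Ri(\tcC_n^{\, c})$ using a hypercone $\MIC_1 \in \cF$ with $\MIC \subset \MIC_1 \subset \tcC_n^{\, c}$ and property (iii) of Lemma 4.2 applied to $\sigma_{\, \MIC}$. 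Then $\tsigma_{\, \MIC}(\Ri(\tcC_n^{\, c})) \subset \Ri(\tcC_n^{\, c}) \subset \tAl(V)$, proving stability. This invariance of $\tAl(V)$ is, I expect, the subtle point of (i): it hinges on being able to select hypercone-localized intertwiners, and uses hypercone duality in an essential way.

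\emph{Plan for (ii).} Both $\tsigma_{\, \MIC}$ and $\ttsigma_{\, \MIC}$ agree with $\sigma_{\, \MIC}$ on $\AlV$, hence on the norm-dense subalgebra $\Al(\MIC_0) \subset \AlV$. Under the hypothesis $\MIC_0 \subset \tcC_m^{\, c} \cap \ttcC_m^{\, c}$ we have $\Ri(\MIC_0) \subset \Ri(\tcC_m^{\, c}) \cap \Ri(\ttcC_m^{\, c})$, so both $\tsigma_{\, \MIC}$ and $\ttsigma_{\, \MIC}$ are defined and normal on $\Ri(\MIC_0)$. As $\Al(\MIC_0)$ is weakly dense in $\Ri(\MIC_0)$, two normal maps agreeing on $\Al(\MIC_0)$ must coincide on $\Ri(\MIC_0)$.

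\emph{Plan for (iii).} Let $U \in (\sigma_{\, \MIC_1}, \sigma_{\, \MIC_2})$. For each $n$, inserting $U$ into the intertwining equations for the local implementers gives $U W_n^{(1)} A (W_n^{(1)})^{-1} U^{-1} = W_n^{(2)} A (W_n^{(2)})^{-1}$ for $A \in \Al(\tcC_n^{\, c})$, which by weak density and normality extends to $A \in \Ri(\tcC_n^{\, c})$, i.e., $U \tsigma_{\, \MIC_1}(A) U^{-1} = \tsigma_{\, \MIC_2}(A)$ on all of $\tAl(V)$. Thus $(\sigma_{\, \MIC_1}, \sigma_{\, \MIC_2}) \subset (\tsigma_{\, \MIC_1}, \tsigma_{\, \MIC_2})$. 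The reverse inclusion is immediate by restricting any intertwiner of the extensions to $\AlV$. Hence the two intertwiner spaces coincide, and equivalence of the extensions follows. The only ingredients needed here beyond formal manipulation are normality of the extensions and weak density of $\AlV$ in $\tAl(V)$; the main conceptual work has been carried out in part (i).
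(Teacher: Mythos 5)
Your construction of the extension is the paper's: pick $\sigma_{\tcC_n}\simeq\sigma_{\,\MIC}$ localized in $\tcC_n$, observe $\sigma_{\,\MIC}=\mbox{Ad}\,W_n$ on $\Al(\tcC_n{}^{\!c})$, extend normally to $\Ri(\tcC_n{}^{\!c})$ and glue; your explicit consistency check ($W_{n'}^{-1}W_n\in\Al(\tcC_n{}^{\!c})'\cap\RiV=\Ri(\tcC_n)$, which commutes with $\Ri(\tcC_n{}^{\!c})$) just spells out what the paper leaves to uniqueness of normal extensions. Parts (ii) and (iii) also coincide with the paper's density-plus-normality arguments (minor slip: $\Al(\MIC_0)$ is not norm-dense in $\AlV$, but you only use its weak density in $\Ri(\MIC_0)$, which is what matters).

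The genuine gap is in your stability argument. You claim that for $n\geq m$ one may arrange $W_n\in\Ri(\tcC_n{}^{\!c})$ by right-multiplying with a unitary from $\Ri(\tcC_n)$. This is impossible for a nontrivial charge: any unitary in $\Ri(\tcC_n{}^{\!c})$ commutes with $\Al(\tcC_n)$ by locality, so if $W_n\in(\sigma_{\tcC_n},\sigma_{\,\MIC})$ lay in $\Ri(\tcC_n{}^{\!c})$ then, using $\sigma_{\,\MIC}\rest\Al(\tcC_n)=\iota$ (which holds since $\MIC\subset\tcC_n{}^{\!c}$), one would get $\sigma_{\tcC_n}(B)=W_n^{-1}\sigma_{\,\MIC}(B)W_n=B$ for $B\in\Al(\tcC_n)$; together with $\sigma_{\tcC_n}\rest\Al(\tcC_n{}^{\!c})=\iota$ this degenerates the charge. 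An intertwiner between morphisms localized in the spacelike separated cones $\tcC_n$ and $\MIC$ is precisely a charge transport operator and cannot be pushed into $\Ri(\tcC_n{}^{\!c})$; at best it lies in $\Ri(\MIC_0)$ for a hypercone $\MIC_0\supset\tcC_n\bigcup\MIC$. The stability claim itself is true, but for a different and simpler reason, which is the paper's argument: since $\sigma_{\,\MIC}\rest\Al(\tcC_n)=\iota$ for $n\geq m$, multiplicativity and locality give that $\sigma_{\,\MIC}(\Al(\tcC_n{}^{\!c}))$ commutes with $\Al(\tcC_n)$, hence $\sigma_{\,\MIC}(\Al(\tcC_n{}^{\!c}))\subset\Al(\tcC_n)'\bigcap\RiV=\Ri(\tcC_n{}^{\!c})$ by hypercone duality; passing to weak closures and using normality yields $\tsigma_{\,\MIC}(\Ri(\tcC_n{}^{\!c}))\subset\Ri(\tcC_n{}^{\!c})$, whence $\tAl(V)$ is stable. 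Note that $\mbox{Ad}\,W_n$ mapping $\Ri(\tcC_n{}^{\!c})$ into itself requires only that $W_n$ normalize this algebra in the appropriate sense, not that it belong to it.
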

\begin{proof} 
As morphisms can be localized arbitrarily within any given charge class 
there are morphisms $\sigma_{\, \tcC_n} \simeq
  \sigma_{\, \MIC}^{}$ and intertwiners $W_n \in (\sigma_{\, \tcC_n},
  \sigma_{\, \MIC}^{})$ with 
$$ \sigma_{\, \MIC}^{} \rest \Al({\tcC}_n{}^{\! c})
= \mbox{Ad} \, W_n \compos \sigma_{\, \tcC_n} \rest 
\Al({\tcC}_n{}^{\! c})
=  \mbox{Ad} \, W_n \rest \Al({\tcC}_n{}^{\! c}) \, .
$$
Hence $\sigma_{\, \MIC}$ extends by weak continuity to 
$\Ri({\tcC}_n^{\, c})$, $n \in \NN$, and thus to the C*--inductive
limit $\tAl(V)$ 
of these algebras. The algebraic and normality properties of the
resulting extension $\tsigma_{\, \MIC}$ are apparent from this 
construction. If $\MIC \subset {\tcC}_m^{\, c}$ for some $m \in \NN$, 
$\sigma_{\, \MIC} \rest \Al({\tcC}_n) = \iota$,  
for $n \geq m$, then 
$\sigma_{\, \MIC} (\Al({\tcC}_n^{\, c})) \subset \Al({\tcC}_n)^\prime =
\Ri({\tcC}_n^{\, c})$ by hypercone duality. But 
$\sigma_{\, \MIC} (\Al({\tcC}_n^{\, c}))^- = \tsigma_{\, \MIC}(\Al({\tcC}_n^{\,
  c})^-) = \tsigma_{\, \MIC}(\Ri({\tcC}_n^{\, c}))$, 
completing the proof of the first part of the statement. 
The second part  follows from
$$ \tsigma_{\, \MIC} \rest \Al(\MIC_0)
= {\sigma}_{\, \MIC} \rest \Al(\MIC_0)
=  \ttsigma_{\, \MIC} \rest \Al(\MIC_0)
$$
and the normality of $\tsigma_{\, \MIC}$ and $\ttsigma_{\, \MIC}$ on 
$\Ri(\MIC_0) \subset  \Ri(\tcC_m^{\, c}) \bigcap \Ri(\ttcC_m{}^{\! \!
  \! c})$. 
The third part follows similarly   
from the normality properties of the extensions.
\end{proof}

Lemma \ref{extensions}(i) implies that 
morphisms $\sigma_{\, \MIC_1} , \dots , \sigma_{\, \MIC_n} \in \Sigma(V)$ 
can be extended to a common stable domain 
whenever there is some auxiliary hypercone 
$\MIC_0 \subset \bigcap_{k=1}^n \MIC_k^{\, c}$  
containing a funnel, cf.~\ref{A.1}. The (associative) product of
the extended morphisms is then well defined. However, 
for certain pairs of hypercones there is no such  
auxiliary hypercone $\MIC_0$, cf.~the appendix. 
The following result ensures that this geometrical obstacle does 
not cause major problems.

\begin{proposition} \label{prop1}
Let $\sigma_{\, \MIC_a}, \tau_{\, \MIC_b} \in \Sigma(V)$.  
\begin{enumerate}
\item[(i)] If $\tsigma_{\, \MIC_a}$ is an extension of
$\sigma_{\, \MIC_a}$ containing $\Ri(\MIC_b)$ in its domain, the composed map 
$\tsigma_{\, \MIC_a} \compos \tau_{\, \MIC_b} : \AlV \rightarrow \RiV$ 
is a well defined morphism independent of the 
chosen extension of $\sigma_{\, \MIC_a}$.
\item[(ii)] Let $\sigma_{\, \MIC_A} \simeq \sigma_{\, \MIC_a}$ and
$\tau_{\, \MIC_B} \simeq \tau_{\, \MIC_b}$. Then 
$\ttsigma_{\, \MIC_A} \compos {\tau}_{\, \MIC_B} 
\simeq  \tsigma_{\, \MIC_a} \compos \tau_{\,
  \MIC_b}$, where $\ttsigma_{\, \MIC_A}$ is any 
extension of $\sigma_{\, \MIC_A}$ containing $\Ri(\MIC_B)$ in its 
domain.
\item[(iii)] Given any hypercone $\MIC$ there is a unitary 
intertwiner in $\RiV$ taking a morphism 
$\varrho_{\, \MIC} \in \Sigma(\MIC)$
to $\tsigma_{\, \MIC_a} \compos \tau_{\, \MIC_b}$. 
Moreover, if $\MIC_a, \MIC_b \subset \MIC$, 
$\tsigma_{\, \MIC_a} \compos \tau_{\, \MIC_b} \rest \AlV
\in \Sigma(\MIC)$. 
\end{enumerate}
\end{proposition}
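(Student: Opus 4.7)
The strategy is to prove the three parts in order, with Lemma \ref{extensions} as the main tool: it makes sense of the composition in (i), drives the explicit construction of the intertwiner in (ii) by transporting intertwiners through extensions, and supports the relocalization in (iii) via (ii).

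For (i), the key point is well-definedness, i.e.\ showing that $\tau_{\MIC_b}(A) \in \tAl(V) = \varinjlim_n \Ri(\tcC_n^{\, c})$ for every $A \in \AlV$. Writing $A \in \Al(\MIO)$, either $\MIO \subset \MIC_b^c$, in which case $\tau_{\MIC_b}(A) = A \in \AlV \subset \tAl(V)$, or one selects a hypercone $\MIC_1 \supseteq \MIO \cup \MIC_b$ sitting inside some $\tcC_n^{\, c}$ and applies condition (b) of the simple charge criterion to conclude $\tau_{\MIC_b}(A) \in \Ri(\MIC_1) \subset \tAl(V)$. The existence of such a $\MIC_1$ is geometric input from the appendix, supplied by the hypothesis that $\Ri(\MIC_b)$ already lies in the domain. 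Multiplicativity then comes free from composition of morphisms, and independence of the chosen extension follows from Lemma \ref{extensions}(ii): two extensions of $\sigma_{\MIC_a}$ agree on $\Ri(\MIC_0)$ whenever $\MIC_0$ is common to their stable domains, and the image $\tau_{\MIC_b}(A)$ can always be placed in such a $\Ri(\MIC_0)$.

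For (ii), pick unitary intertwiners $U \in (\sigma_{\MIC_a}, \sigma_{\MIC_A})$ and $Y \in (\tau_{\MIC_b}, \tau_{\MIC_B})$ in $\RiV$, and extend both $\sigma_{\MIC_a}$ and $\sigma_{\MIC_A}$ to a common domain containing $\Ri(\MIC_b)$, $\Ri(\MIC_B)$ and the unitary $Y$. Lemma \ref{extensions}(iii) transports the equivalence to the extensions, giving $\tsigma_{\MIC_A} = \mathrm{Ad}\, U \compos \tsigma_{\MIC_a}$ on this common domain. Substituting $\tau_{\MIC_B}(A) = Y \tau_{\MIC_b}(A) Y^*$ and simplifying yields
\[
\ttsigma_{\MIC_A} \compos \tau_{\MIC_B}(A) \; = \; W \, \tsigma_{\MIC_a}(\tau_{\MIC_b}(A)) \, W^{\ast}, \qquad W \doteq U \, \tsigma_{\MIC_a}(Y) \in \RiV,
\]
where Lemma \ref{extensions}(ii) together with part (i) ensures that the result is independent of which extension $\ttsigma_{\MIC_A}$ of $\sigma_{\MIC_A}$ containing $\Ri(\MIC_B)$ in its domain is used. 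For (iii), condition (c) of the simple charge criterion furnishes $\sigma_{\MIC_A} \simeq \sigma_{\MIC_a}$ and $\tau_{\MIC_B} \simeq \tau_{\MIC_b}$ with $\MIC_A, \MIC_B \subset \MIC$; part (ii) then produces a unitary in $\RiV$ taking $\varrho_{\, \MIC} \doteq \tsigma_{\MIC_A} \compos \tau_{\MIC_B} \rest \AlV$ to $\tsigma_{\MIC_a} \compos \tau_{\MIC_b}$. That $\varrho_{\, \MIC} \in \Sigma(\MIC)$ is then direct: localization follows from $\MIC^c \subset \MIC_A^c \cap \MIC_B^c$; condition (b) for $\MIC_1 \supseteq \MIC$ is obtained by choosing the extension of $\sigma_{\MIC_A}$ normal on $\Ri(\MIC_1)$ and chaining condition (b) for $\tau_{\MIC_B}$ (which makes $\tau_{\MIC_B}(\Al(\MIC_1))$ weakly dense in $\Ri(\MIC_1)$) with condition (b) for $\sigma_{\MIC_A}$ (via normality); condition (c) comes by rerunning the construction in any other hypercone of $\cF$ and reapplying (ii). The moreover clause is the observation that if $\MIC_a, \MIC_b \subset \MIC$ already, the originals $\sigma_{\MIC_a}, \tau_{\MIC_b}$ themselves fulfil the localization requirement and no relocalization is needed.

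The main obstacle is the geometric content from the appendix required in (i) and in the verification of condition (b) in (iii): one must enclose $\MIO \cup \MIC_b$ (respectively an arbitrary $\MIC_1 \supseteq \MIC$) inside a hypercone contained in a suitable complement-funnel element, so that the relevant $\Ri(\MIC_1)$ sits in the domain of a normal extension of $\sigma_{\MIC_a}$. Once this hypercone placement is available, the remainder is essentially a faithful transcription of the DHR/BF sector calculus to the present hypercone geometry.
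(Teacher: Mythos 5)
Your part (iii) and the overall architecture (relocalize, extend, transport intertwiners) match the paper, but both (i) and (ii) contain gaps at exactly the points where the hypercone geometry bites. In (i), you establish well-definedness by enclosing $\MIO \cup \MIC_b$ in a hypercone $\MIC_1$ ``sitting inside some $\tcC_n{}^{\,c}$''. Nothing in the appendix supplies such an enclosure, and the second requirement is genuinely problematic: the domain $\tAl(V)$ is tied to a fixed funnel, the hypothesis only gives $\MIC_b \subset \tcC_m{}^{\,c}$ for some $m$, and an enlargement of $\MIC_b$ big enough to swallow a given double cone need not remain spacelike to any funnel member. The paper avoids the issue altogether: by \ref{A.3} one relocalizes $\tau$ to some $\MIC_B \subset \MIO^c \cap \MIC_b$, so that $\tau_{\,\MIC_b}(A) = WAW^{-1}$ on $\Al(\MIO)$ with $W \in (\tau_{\,\MIC_B},\tau_{\,\MIC_b}) \subset \Ri(\MIC_b)$ by hypercone duality. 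Since the domain is an algebra containing both $\Ri(\MIC_b)$ and $\AlV$, the element $WAW^{-1}$ lies in it automatically, and $\tsigma_{\,\MIC_a}\compos\tau_{\,\MIC_b}(A) = \tsigma_{\,\MIC_a}(W)\,\sigma_{\,\MIC_a}(A)\,\tsigma_{\,\MIC_a}(W)^{-1}$ is manifestly independent of the chosen extension because only $\tsigma_{\,\MIC_a}(W)$ with $W \in \Ri(\MIC_b)$ and $\sigma_{\,\MIC_a}(A)$ enter (Lemma \ref{extensions}(ii)).

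The more serious gap is in (ii). You apply $\tsigma_{\,\MIC_a}$ to the intertwiner $Y \in (\tau_{\,\MIC_b},\tau_{\,\MIC_B})$ after postulating a common domain ``containing $\Ri(\MIC_b)$, $\Ri(\MIC_B)$ and the unitary $Y$''. Hypercone duality only places $Y$ in $\Ri(\MIC_0)$ when there exists a hypercone $\MIC_0 \supset \MIC_b \cup \MIC_B$, and for pairs in general position no such hypercone exists (in the Beltrami--Klein ball, two cones whose traces on $S^2$ jointly leave too little room admit no pointed convex enclosing cone); then $Y$ need not lie in any $\Ri(\tcC_n{}^{\,c})$ and $\tsigma_{\,\MIC_a}(Y)$ is simply undefined. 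This is why the paper splits the argument: it varies $\sigma$ and $\tau$ separately, proves the equivalence first for pairs $\MIC_b, \MIC_B$ admitting a common enclosing hypercone, and only then reaches general position by interpolating along a path of hypercones using the pathwise connectedness of $\cF$ (\ref{A.5}). That interpolation step is missing from your argument and cannot be dispensed with; your explicit formula $W = U\,\tsigma_{\,\MIC_a}(Y)$ is correct, but only in the restricted geometric situation.
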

\begin{proof} Given any double cone
$\cO \subset V$ there is a
hypercone $\MIC_B \subset \cO^c \bigcap \MIC_b$, cf.~\ref{A.3}, and a morphism 
$\tau_{\, \MIC_B} \simeq \tau_{\, \MIC_b}$ with 
$\tau_{\, \MIC_B} \rest \Al(\cO) = \iota$  
so that if $W \in (\tau_{\, \MIC_B},  \tau_{\, \MIC_b})$ 
$$ \tau_{\, \MIC_b}(A) = \mbox{Ad} \, W \compos \tau_{\, \MIC_B} (A)
= \mbox{Ad} \, W (A) \, , \quad A \in \Al(\cO) \, .
$$
Similarly, $\tau_{\, \MIC_b} \rest \Al(\MIC_b^{\, c}) = \iota 
= \tau_{\, \MIC_B} \rest \Al(\MIC_b^{\, c}) $ implying   
$(\tau_{\, \MIC_B}, \tau_{\, \MIC_b}) \subset \Ri(\MIC_b)$ 
by hypercone duality. 
Hence $W$ is in the domain of $\tsigma_{\, \MIC_a}$ and  
$$ 
\tsigma_{\, \MIC_a}(W) \sigma_{\, \MIC_a}(A) \tsigma_{\, \MIC_a}(W^{-1}) 
= \tsigma_{\, \MIC_a} (\mbox{Ad} \, W (A)) 
= \tsigma_{\, \MIC_a} (\tau_{\, \MIC_b}(A)) \, , \quad A \in
\Al(\cO) \, . 
$$
As $\cO$ was arbitrary,  
$\tsigma_{\, \MIC_a} \compos \tau_{\, \MIC_b} : \AlV \rightarrow \RiV$
is a well defined morphism. Moreover, since  
$\tsigma_{\, \MIC_a}(W)$ does not depend on the chosen 
extension of $\sigma_{\, \MIC_a}$ by 
Lemma~\ref{extensions}(ii) neither does
$\tsigma_{\, \MIC_a} \compos \tau_{\, \MIC_b}$. 

To prove the second part of the proposition we first keep 
$\tau_{\, \MIC_b}$ fixed. Let 
$\tsigma_{\, \MIC_a}, \tsigma_{\, \MIC_A}$
be extensions of $\sigma_{\, \MIC_a}, \sigma_{\, \MIC_A}$, respectively,
both having $\Ri(\MIC_b)$ in their domain and let 
\mbox{$W \in (\sigma_{\, \MIC_A}, \sigma_{\, \MIC_a})$}. 
The normality
properties of extensions, established in Lem\-ma~\ref{extensions}(i), 
imply
$\mbox{Ad} \, W \compos \tsigma_{\, \MIC_A} \rest
\Ri(\MIC) = \tsigma_{\, \MIC_a} \rest
\Ri(\MIC)$, whenever $\Ri(\MIC)$ is in the domain 
of both extensions. Hence 
$\tsigma_{\, \MIC_A} \compos {\tau}_{\, \MIC_b} 
\simeq  \tsigma_{\, \MIC_a} \compos \tau_{\,
  \MIC_b}$. Next, keeping $\sigma_{\, \MIC_A}$ fixed we 
vary $\tau_{\, \MIC_b}$ and pick any $\tau_{\, \MIC_B} \simeq \tau_{\, \MIC_b}$
such that there is a larger 
hypercone $\MIC_0 \supset \MIC_b \bigcup \MIC_B$. Then 
$\tau_{\, \MIC_B} \rest \Al(\MIC_0^{\, c}) = \iota = 
\tau_{\, \MIC_b} \rest \Al(\MIC_0^{\, c})$, hence 
$(\tau_{\, \MIC_B}, \tau_{\, \MIC_b}) \subset \Ri(\MIC_0)$ 
by hypercone duality. Choosing an extension 
$\ttsigma_{\, \MIC_A}$ of $\sigma_{\, \MIC_A}$
with $\Ri(\MIC_0)$ in its domain,   
we obtain 
$\tsigma_{\, \MIC_A} \compos \tau_{\, \MIC_b} 
= \ttsigma_{\, \MIC_A} \compos \tau_{\, \MIC_b}
= \ttsigma_{\, \MIC_A} \compos \mbox{Ad} \, W \compos \tau_{\, \MIC_B}
= \mbox{Ad} \, \ttsigma_{\, \MIC_A}(W) \compos
\ttsigma_{\, \MIC_A} \compos  \tau_{\, \MIC_B} $
for $W \in (\tau_{\, \MIC_B}, \tau_{\, \MIC_b})$.  
Hence \ $\ttsigma_{\, \MIC_A} \compos {\tau}_{\, \MIC_B} 
\simeq  \tsigma_{\, \MIC_a} \compos \tau_{\,
  \MIC_b}$ for the restricted set of regions 
$\MIC_b, \MIC_B$. The result for pairs of
hypercones $\MIC_b, \MIC_B \in \cF$ in general position 
then follows by a standard interpolation argument as 
the family $\cF$ is pathwise connected, cf.~\ref{A.5}.

To prove 
the third part of the proposition, we pick morphisms
$\sigma_{\, \MIC_A} \simeq \sigma_{\, \MIC_a}$,  
\mbox{$\tau_{\, \MIC_B} \simeq \tau_{\, \MIC_b}$} 
with $\MIC_A, \MIC_B \subset \MIC$ and define
$\varrho_{\, \MIC} \doteq \tsigma_{\, \MIC_A} \compos \tau_{\,
  \MIC_B}$. The localization properties of  
$\sigma_{\, \MIC_A}$, $\tau_{\, \MIC_B}$ imply 
$\varrho_{\, \MIC} \rest \Al(\MIC^{\, c}) = \iota$,
so $\varrho_{\, \MIC}$ satisfies point (a) of the 
criterion. To establish~(b)
we use Lemma \ref{extensions}(i)   
choosing, for given $\MIC_1 \supset \MIC$, 
an extension $\ttsigma_{\, \MIC_A}$ of $\sigma_{\, \MIC_A}$
normal on $\Ri(\MIC_1)$. But, as shown in the first step,
$\tsigma_{\, \MIC_A} \compos \tau_{\, \MIC_B} = \ttsigma_{\, \MIC_A} 
\compos \tau_{\,  \MIC_B}$, and 
$\ttsigma_{\, \MIC_A} (\Al(\MIC_1)^-) =
\sigma_{\, \MIC_A}(\Al(\MIC_1))^- = \Ri(\MIC_1) = 
\tau_{\,  \MIC_B}(\Al(\MIC_1))^-$. Hence 
$$\varrho_{\, \MIC}(\Al(\MIC_1))^-
= \ttsigma_{\, \MIC_A} \compos \tau_{\, \MIC_B} (\Al(\MIC_1))^-
= \ttsigma_{\, \MIC_A}  (\tau_{\, \MIC_B}(\Al(\MIC_1))^-)
=  \Ri(\MIC_1) \, .
$$ 
Thus $\varrho_{\, \MIC}$ also satisfies (b).
But the hypercone $\MIC$ was arbitrary so 
the results established in (ii) imply that the morphism 
$\varrho_{\, \MIC}$ satisfies (c), too. 
Hence $\varrho_{\, \MIC} \in
\Sigma(\MIC)$ and, in particular, 
$\tsigma_{\, \MIC_a} \compos \tau_{\, \MIC_b}  \in \Sigma(\MIC)$
whenever $\MIC_a, \MIC_b \subset \MIC$. 
\end{proof}

The preceding proposition shows how pairs of morphisms
$\sigma_{\, \MIC_a},\tau_{\, \MIC_b} \in \Sigma(V)$ can be composed
inducing a composition of the corresponding
simple charge classes $\mathscr{C}_a, \mathscr{C}_b$
by applying the composed morphisms to the
vacuum state. Since the composition of morphisms 
$\tsigma_{\, \MIC_a} \compos \tau_{\, \MIC_b}$
does not depend on the chosen extension $\tsigma_{\, \MIC_a}$
of $\sigma_{\, \MIC_a}$
(up to limitations depending on the given localization 
hypercone $\MIC_b$ of $\tau_{\, \MIC_b}$) we will omit the 
tilde in the following  and 
simply write $\sigma_{\, \MIC_a} \product \tau_{\, \MIC_b}$
for the composed morphisms. Yet it must be remembered 
that care must be taken with domains in this product and that 
$\sigma_{\, \MIC_a} \product \tau_{\, \MIC_b} \not\in \Sigma(V)$ unless 
$\MIC_a, \MIC_b \subset \MIC$ 
for some hypercone $\MIC$.  

\vspace*{2mm} 
We show next that each simple charge class $\mathscr{C}$  has a 
simple ``conjugate'' charge class $\quer{\mathscr{C}}$. The following
proposition says that the corresponding charges  compensate 
(neutralize) one another by composition. 
\begin{proposition} \label{prop2}
Given a simple charge class there is a 
simple conjugate  charge class in the following sense: 
given a morphism $\sigma_{\, \MIC} \in \Sigma(V)$ there is a morphism
$\qsigma_{\, \MIC} \in \Sigma(V)$ with 
$\qsigma_{\, \MIC} \product \sigma_{\, \MIC} 
= \sigma_{\, \MIC} \product \qsigma_{\, \MIC} = \iota$.         
\end{proposition}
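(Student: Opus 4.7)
The plan is to realize $\qsigma_{\, \MIC}$ as an \emph{inverse} of $\sigma_{\, \MIC}$ built from normal automorphisms of the hypercone algebras containing $\MIC$. By Lemma~\ref{extensions}(i), $\sigma_{\, \MIC}$ extends to a normal morphism $\tsigma_{\, \MIC}$ on each von Neumann algebra $\Ri(\MIC_1)$ with $\MIC_1 \supseteq \MIC$. Criterion~(b) gives $\sigma_{\, \MIC}(\Al(\MIC_1))^{-} = \Ri(\MIC_1)$, so by normality the restriction $\tsigma_{\, \MIC} \rest \Ri(\MIC_1)$ is surjective; since $\Ri(\MIC_1)$ is a type~III${}_1$ factor, hence simple as a von Neumann algebra, this unital normal morphism is automatically injective. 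Thus $\tsigma_{\, \MIC} \rest \Ri(\MIC_1)$ is a normal automorphism with a normal inverse.

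The next step is to paste these local inverses together into a morphism on $\Al(V)$. For $A \in \Al(\MIO) \subset \Al(V)$, choose a hypercone $\MIC_1 \supseteq \MIC$ with $\MIO \subset \MIC_1$ (using the appendix results on hypercones) and set $\qsigma_{\, \MIC}(A) \doteq (\tsigma_{\, \MIC} \rest \Ri(\MIC_1))^{-1}(A)$. Independence of this definition from the choice of $\MIC_1$ follows from Lemma~\ref{extensions}(ii): any two admissible $\MIC_1, \MIC_1'$ can be enclosed in a common larger hypercone $\MIC_2$ on which the unique normal extension of $\sigma_{\, \MIC}$ restricts to both, so their inverses agree on $A$. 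This yields a well-defined morphism $\qsigma_{\, \MIC}: \Al(V) \to \Ri(V)$, and the identities $\sigma_{\, \MIC} \product \qsigma_{\, \MIC} = \qsigma_{\, \MIC} \product \sigma_{\, \MIC} = \iota$ then follow by evaluating the composed morphisms inside any $\Ri(\MIC_1)$ containing both $A$ and its image, using $\sigma_{\, \MIC}(\Al(\MIC_1)) \subset \Ri(\MIC_1)$ from Criterion~(b).

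To confirm $\qsigma_{\, \MIC} \in \Sigma(V)$, condition~(a) holds because $\sigma_{\, \MIC} \rest \Al(\MIC^c) = \iota$ forces the inverse automorphism to fix any $A \in \Al(\MIO)$ with $\MIO \subset \MIC^c$; condition~(b) is immediate since $(\tsigma_{\, \MIC} \rest \Ri(\MIC_1))^{-1}$ is a normal automorphism and hence sends weakly dense subalgebras to weakly dense subalgebras; for condition~(c), one applies the same construction to each equivalent morphism $\sigma_{\, \MIC'} \simeq \sigma_{\, \MIC}$ with $\MIC' \in \cF$, and a short manipulation of extensions shows that if $W \in (\sigma_{\, \MIC}, \sigma_{\, \MIC'})$ then $\tsigma_{\, \MIC'}^{-1}(W^{-1}) \in \Ri(V)$ lies in $(\qsigma_{\, \MIC}, \qsigma_{\, \MIC'})$.

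The main obstacle I anticipate is the bookkeeping of extensions and their domains when pasting the local inverses together: one must check that the inverses on different $\Ri(\MIC_1)$ agree on overlaps and that every local observable ultimately lies in the domain. This rests on the geometric facts from the appendix that double cones and any finite collection of hypercones containing $\MIC$ can be enclosed in a common larger hypercone, together with the uniqueness of normal extensions on type~III${}_1$ factors. Conceptually, however, the proof reduces to Criterion~(b), which forces $\sigma_{\, \MIC}$ to be an automorphism on each $\Ri(\MIC_1)$ and thus to admit a natural two-sided inverse.
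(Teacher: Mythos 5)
Your central idea --- that Criterion~(b) forces the extension $\tsigma_{\,\MIC}$ to restrict to a \emph{surjective} normal endomorphism of each $\Ri(\MIC_1)$, $\MIC_1 \supseteq \MIC$, whose inverse can then be pasted into $\qsigma_{\,\MIC}$ --- is attractive, and the surjectivity step is sound (the image of a von Neumann algebra under a normal unital morphism is weakly closed, so it equals $\sigma_{\,\MIC}(\Al(\MIC_1))^- = \Ri(\MIC_1)$). The gap is in the injectivity step. You justify it by declaring $\Ri(\MIC_1)$ to be a type~III${}_1$ factor, hence simple; but the paper only establishes factoriality for $\RiV$ (and for the weak closures of the GNS representations of elemental states), \emph{not} for the hypercone algebras $\Ri(\MIC_1)$. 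Hypercone duality does not yield triviality of the centre of $\Ri(\MIC_1)$ without additional additivity-type input, and a surjective normal endomorphism of a non-factor can have a nontrivial kernel $\Ri(\MIC_1)z$ for a central projection $z$. The paper avoids this entirely by invoking transportability (Criterion~(c)): choosing morphisms $\sigma_{\,\tcC_n}\simeq\sigma_{\,\MIC}$ localized in a funnel of hypercones spacelike to an exhausting sequence $\MIC_k\nearrow V$, it exhibits $\sigma_{\,\MIC}\rest\Al(\MIC_k)$ as $\mbox{Ad}\,W_k$ for explicit unitaries $W_k\in(\sigma_{\,\tcC_k},\sigma_{\,\MIC})$, so invertibility on $\Ri(\MIC_k)$ is manifest and the conjugate is the concrete limit $\qsigma_{\,\MIC}=\lim_n\mbox{Ad}\,W_n^{-1}$ --- a form that is moreover reused later in the statistics analysis. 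Your argument would need either an extra hypothesis (factoriality or simplicity of the hypercone algebras) or this transport argument to close the injectivity gap.

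A secondary problem is geometric: you repeatedly assume that two admissible hypercones (e.g.\ $\MIC_1,\MIC_1'\supseteq\MIC$, or $\MIC$ and an equivalent localization $\MIC'$) can be enclosed in a common larger hypercone of $\cF$. This fails in general position --- the base regions at infinity of two cones may jointly cover too much of $S^2$ for any pointed convex cone with elliptical base to contain both --- which is precisely why the paper proves consistency of $\qsigma_{\,\MIC}$ by an algebraic left-inverse argument and verifies condition~(c) first for restricted pairs and then extends to arbitrary pairs by the interpolation arguments of \ref{A.5}--\ref{A.7}. Your verifications of the product identities and of conditions~(a) and (b), granted the above, are essentially correct.
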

\begin{proof}
Given $\sigma_{\, \MIC}$ we 
pick an increasing sequence of 
hypercones $\MIC_n \supset \MIC$, $n \in \NN$, 
with $\MIC_n \nearrow V$; the corresponding opposite 
cones ${\tcC}_n \subset \MIC_n^{\, c}$, $n \in \NN$,  
then form a funnel of hypercones, cf.~\ref{A.2}. 
Next, we choose morphisms 
$\sigma_{\, \tcC_n} \simeq \sigma_{\, \MIC}$ and  
unitary intertwiners $W_n \in (\sigma_{\, \tcC_n}, \sigma_{\,
  \MIC})$, $n \in \NN$. Thus for any given $k \in \NN$
$$ \mbox{Ad} \, W_n^{-1} \compos \sigma_{\, \MIC} \rest 
\Al(\MIC_k) =  \sigma_{\, \tcC_n} \rest 
\Al(\MIC_k) = \iota \, , \quad n \geq k \, .
$$
According to (b) of the criterion,  
$\, \sigma_{\, \MIC}(\Al(\MIC_k))^- = \Ri(\MIC_k)$, $k \in \NN$, hence 
$$ \mbox{Ad} \, W_m^{-1}  \rest \Ri(\MIC_k) =  
\mbox{Ad} \, W_n^{-1} \rest \Ri(\MIC_k) \, , \quad m,n \geq k \, .
$$
So the pointwise norm limit 
$$
\qsigma_{\, \MIC} \doteq \lim_n \, \mbox{Ad} \, W_n^{-1} 
$$ 
exists on the C*--inductive limit 
$\qAl(V) \doteq \varinjlim \, \Ri(\MIC_k) \supset \AlV$
of $\Ri(\MIC_k), \,  k \in \NN$ 
and defines a morphism 
$\qsigma_{\, \MIC} : \qAl(V) \rightarrow \RiV$ 
which is 
normal on each algebra $\ \Ri(\MIC_k)$, \mbox{$k \in \NN$}. Moreover, it 
is a left and right inverse of 
the (suitably extended) morphism $\sigma_{\, \MIC}$ as we will show
next. Let $\tsigma_{\, \MIC}$ be an extension of $\sigma_{\, \MIC}$
based on the funnel $\{ {\tcC}_n \}_{n \in \NN}$. Then,
for any $n \geq k$, 
$\, \tsigma_{\, \MIC} \rest {\Ri}(\MIC_k) = \mbox{Ad} \, W_n \rest
{\Ri}(\MIC_k) $ and, by construction, 
\mbox{$\qsigma_{\, \MIC} \rest {\Ri}(\MIC_k) = \mbox{Ad} \, W_n^{-1} \rest
{\Ri}(\MIC_k)$}. As $\MIC \subset \MIC_k$,  
\mbox{${\Ri}(\MIC_k) = \sigma_{\, \MIC}({\Al}(\MIC_k))^- =
\mbox{Ad} W_n \, ({\Ri}(\MIC_k)) $}, implying  
$\qsigma_{\, \MIC}({\Al}(\MIC_k))^- = {\Ri}(\MIC_k)$, whence 
\begin{equation} \label{compensation}
\begin{split}
& \qsigma_{\, \MIC}(\sigma_{\, \MIC}(A)) 
= \mbox{Ad} \, W_k^{-1} ( \mbox{Ad} \, W_k (A)) = A \, \\
& \tsigma_{\, \MIC}(\qsigma_{\, \MIC}(A)) 
= \mbox{Ad} \, W_k ( \mbox{Ad} \, W_k^{-1} (A)) = A  \, ,
\end{split}
\end{equation}
for $A \in \Al(\MIC_k)$, $k \in \NN$, 
and these equalities extend 
by continuity to $\AlV$.

To proceed we need to show that the restriction 
$\qsigma_{\, \MIC} \rest \AlV$
does not depend on the initial choice of a sequence of hypercones.
Another admissible sequence yields another morphism 
$\qqsigma_{\, \MIC} : \qqAl(V) \rightarrow \RiV$ 
with the properties established above. In particular, 
it is a left inverse of $\sigma_{\, \MIC}$, hence 
$\qqsigma_{\, \MIC} (\sigma_{\, \MIC} (A)) 
= A = \qsigma_{\, {\MIC}}  (\sigma_{\, \MIC} (A)) $ for 
$A \in \AlV$.
Both $\qqsigma_{\, \MIC}$ and $\qsigma_{\, \MIC}$ 
have $\Ri(\MIC)$ in their domains and are normal on this 
algebra. It therefore follows from this equality and
$\sigma_{\, \MIC}(\Al(\MIC))^- = \Ri(\MIC)$ that \ 
\mbox{$\qqsigma_{\, \MIC} \rest \Ri(\MIC) 
= \qsigma_{\, {\MIC}} \rest \Ri(\MIC)$}. Moreover,   
given any double cone $\cO \subset V$
there is a unitary $W \in \Ri(\MIC)$ with
$\sigma_{\, \MIC} \rest \Al(\cO) = \mbox{Ad} W \rest \Al(\cO)$. 
Thus using the above equality once more   
$$
\qqsigma_{\, \MIC}(W) \qqsigma_{\, \MIC}(A) 
\qqsigma_{\, \MIC}(W^{-1}) = A = 
\qsigma_{\, {\MIC}}(W) \qsigma_{\, {\MIC}}(A) 
\qsigma_{\, {\MIC}}(W^{-1}) \, , \quad 
A \in  \Al(\cO) \, .
$$
But $\qqsigma_{\, \MIC} (W) =  \qsigma_{\, {\MIC}} (W)$,
hence $\qqsigma_{\, \MIC}(A) =  \qsigma_{\, {\MIC}}(A)$, 
$A \in \Al(\cO)$. Since $\cO$ was arbitrary this shows
$\qqsigma_{\, \MIC} \rest \AlV
= \qsigma_{\, {\MIC}} \rest \AlV$.

We are now in a position to prove that the morphisms 
$\qsigma_{\, {\MIC}} : \AlV \rightarrow \RiV$  satisfy
the criterion. Choosing morphisms and intertwiners as in 
the first step of the proof we have 
$$ \mbox{Ad} \, W_n^{-1}  \rest \Al(\MIC^c) =  
\mbox{Ad} \, W_n^{-1} \compos \sigma_{\, \MIC} \rest \Al(\MIC^c) = 
\sigma_{\, \tcC_n} \rest \Al(\MIC^c) \, .
$$
Since $\sigma_{\, \tcC_n} \rightarrow \iota$ 
pointwise in norm on $\AlV$ as
$n \rightarrow \infty$ it follows that 
$\qsigma_{\, \MIC} \rest \Al(\MIC^c) = \iota$
proving (a). 
Next, given a hypercone $\MIC_0 \supset \MIC$, we 
choose an increasing sequence of hypercones 
$\MIC_n$, $n \in \NN$, with $\MIC_1 \doteq \MIC_0$, 
yielding an extension $\qqsigma_{\, \MIC}$
of $\qsigma_{\, \MIC} \rest \AlV$ normal on~$\Ri(\MIC_0)$. 
Thus, bearing in mind that
$\sigma_{\, \MIC}(\Al(\MIC_0))^- = \Ri(\MIC_0) = \Al(\MIC_0)^-$, we get
\begin{equation*}
\begin{split}
\qsigma_{\, \MIC}(\Al(\MIC_0))^- & = \qqsigma_{\, \MIC}(\Al(\MIC_0))^- 
=  \qqsigma_{\, \MIC}(\Al(\MIC_0)^-) \\
& =  \qqsigma_{\, \MIC}({\sigma}_{\, \MIC}(\Al(\MIC_0))^-) 
=  \qqsigma_{\, \MIC}({\sigma}_{\, \MIC}(\Al(\MIC_0)))^-
=  \Al(\MIC_0)^- \, ,
\end{split} 
\end{equation*}
where in the last equality we used (\ref{compensation}). This
proves (b). Finally, 
if $\MIC_a$, $\MIC_b$ and \mbox{$\MIC_0 \supset \MIC_a, \MIC_b$}  
 are hypercones and 
$\sigma_{\, \MIC_a} \simeq \sigma_{\, \MIC_b}$ 
with conjugates 
$\qsigma_{\, \MIC_a}$ and $\qsigma_{\, \MIC_b}$,
respectively, we choose an 
increasing sequence of hypercones 
$\MIC_n$, $n \in \NN$, with $\MIC_1 \doteq \MIC_0$, 
as above.
As has been shown, the corresponding extensions 
$\qqsigma_{\, \MIC_a}$, 
$\qqsigma_{\, \MIC_b}$ of the conjugate morphisms
to the domain 
$\qAl(V) \doteq \varinjlim \, \Ri(\MIC_k)  
\supset \Al$ are normal on each member of the net 
$\Ri(\MIC_k)$, $k \in \NN$. Now 
the unitary intertwiners $W_0 \in ({\sigma}_{\, \MIC_b}, {\sigma}_{\, \MIC_a})$
are elements of $\Ri(\MIC_0)$ by hypercone duality. So like the elements of  
${\sigma}_{\, \MIC_b}(\AlV)$,  ${\sigma}_{\, \MIC_a}(\AlV)$,
they are in the domain of $\qqsigma_{\, \MIC_a}$  and we can compute 
\begin{gather*}
\qqsigma_{\, \MIC_a}(W_0) \, 
\qqsigma_{\, \MIC_a}({\sigma}_{\, \MIC_b} (A)) \, 
\qqsigma_{\, \MIC_a}(W_0^{-1}) =
\qqsigma_{\, \MIC_a}(W_0 \, {\sigma}_{\, \MIC_b} (A) W_0^{-1})  \\
= \qqsigma_{\, \MIC_a}({\sigma}_{\, \MIC_a} (A)) = A
= \qqsigma_{\, \MIC_b} ({\sigma}_{\, \MIC_b} (A)) \, , 
\quad A \in \Al(V) \, .
\end{gather*}
But  
${\sigma}_{\, \MIC_b}(\Al(\MIC_k))^- = \Ri(\MIC_k)$, $k \in \NN$, 
so normality implies 
$\mbox{Ad} \, \qqsigma_{\, \MIC_a}(W_0) \compos 
\qqsigma_{\, \MIC_a} = \qqsigma_{\, \MIC_b}$
on $\qAl(V)$. Restricting this equality to
$\AlV$ we conclude that the unitary operator 
$\qqsigma_{\, \MIC_a}(W_0) \in \RiV$  intertwines 
$\qsigma_{\, \MIC_a}$ and $\qsigma_{\, \MIC_b}$ for
the restricted pairs of hypercones. By a standard 
interpolation argument, cf.~\ref{A.5},  
this equivalence extends to arbitrary 
pairs of morphisms $\qsigma_{\, \MIC_a}, \qsigma_{\, \MIC_b}$,
so they satisfy (c), too. 
Thus $\qsigma_{\, \MIC} \in \Sigma(V)$ for any 
choice of hypercone $\MIC$.
Relation (\ref{compensation}) implies  
$\qsigma_{\, \MIC} \product \sigma_{\, \MIC} =
\sigma_{\, \MIC} \product  \qsigma_{\, \MIC} = \iota$,  
completing the proof of the proposition. 
\end{proof}

\vspace*{2mm}
We now analyze the statistics of charge classes as 
encoded in the structure of the intertwiners of composed morphisms. 
Since we are just dealing with simple charges we do not 
need the full arsenal of categorical methods developed in \cite{DoHaRo1,BuFr}
and can rely on strategies established in \cite{DoHaRo3}. 
But, again, geometric problems mean that some  arguments 
have to be modified. 

\begin{lemma} \label{exchange} 
Let $\sigma_{\, \MIC_a}$, $\tau_{\, \MIC_b}$ be morphisms.
\begin{enumerate}
\item[(i)] $\sigma_{\, \MIC_a} \product \tau_{\, \MIC_b} =
\tau_{\, \MIC_b} \product \sigma_{\, \MIC_a}$ if $\MIC_a$ and $\MIC_b$
are spacelike separated.
\item[(ii)] $\sigma_{\, \MIC_a} \product \tau_{\, \MIC_b} \simeq 
\tau_{\, \MIC_b} \product \sigma_{\, \MIC_a}$ if $\MIC_a$ and $\MIC_b$ are
in arbitrary position.
\end{enumerate}
\end{lemma}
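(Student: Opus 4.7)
For part~(i), the plan is to fix a double cone $\MIO \subset V$ and any $A \in \Al(\MIO)$ and to show the two composed morphisms agree on $A$, after which a standard argument yields equality on $\AlV$. By \ref{A.3} one may select sub-hypercones $\MIC_A \subset \MIC_a$ and $\MIC_B \subset \MIC_b$ in the spacelike complement of $\MIO$; the spacelike separation of $\MIC_a$ and $\MIC_b$ descends to $\MIC_A, \MIC_B$. Take representatives $\sigma_{\, \MIC_A} \simeq \sigma_{\, \MIC_a}$ and $\tau_{\, \MIC_B} \simeq \tau_{\, \MIC_b}$ (criterion~(c)) with unitary intertwiners $W \in (\sigma_{\, \MIC_A}, \sigma_{\, \MIC_a})$ and $U \in (\tau_{\, \MIC_B}, \tau_{\, \MIC_b})$. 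Because the two pairs of morphisms coincide on $\Al(\MIC_a^{\, c})$ and $\Al(\MIC_b^{\, c})$ respectively, hypercone duality forces $W \in \Ri(\MIC_a)$ and $U \in \Ri(\MIC_b)$; duality together with the spacelike separation of $\MIC_a,\MIC_b$ then yields the key commutation $[W,U] = 0$.

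The core computation exploits the localization $\sigma_{\, \MIC_A}(A) = A = \tau_{\, \MIC_B}(A)$, giving $\sigma_{\, \MIC_a}(A) = W A W^{-1}$ and $\tau_{\, \MIC_b}(A) = U A U^{-1}$. Pick a funnel of hypercones whose spacelike complements eventually contain both $\MIC_A$ and $\MIC_b$; the resulting extension $\tsigma_{\, \MIC_A}$ supplied by Lemma~\ref{extensions}(i) is normal on a domain containing $U$, and it is trivial there because $\sigma_{\, \MIC_A}$ is trivial on the weakly dense subalgebra $\Al(\MIC_A^{\, c})$. By Lemma~\ref{extensions}(iii) the intertwining passes to the extensions, so $\tsigma_{\, \MIC_a} = \mbox{Ad}(W) \compos \tsigma_{\, \MIC_A}$ and hence $\tsigma_{\, \MIC_a}(U) = W U W^{-1} = U$. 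A symmetric construction furnishes an extension $\tild{\tau}_{\, \MIC_b}$ with $\tild{\tau}_{\, \MIC_b}(W) = W$. Therefore
$$
\tsigma_{\, \MIC_a}(\tau_{\, \MIC_b}(A)) = U W A W^{-1} U^{-1} = W U A U^{-1} W^{-1} = \tild{\tau}_{\, \MIC_b}(\sigma_{\, \MIC_a}(A)),
$$
the middle equality using $[W,U] = 0$. Independence of the composed morphism from the chosen extension (Proposition~\ref{prop1}(i)) then converts this pointwise agreement into the equality of morphisms on $\AlV$.

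Part~(ii) follows by an interpolation familiar from massive sector analysis. Pick a hypercone $\MIC_A' \in \cF$ spacelike to $\MIC_b$ and, by criterion~(c), a morphism $\sigma_{\, \MIC_A'} \in \Sigma(\MIC_A')$ equivalent to $\sigma_{\, \MIC_a}$. Part~(i) delivers the exact equality $\sigma_{\, \MIC_A'} \product \tau_{\, \MIC_b} = \tau_{\, \MIC_b} \product \sigma_{\, \MIC_A'}$, while Proposition~\ref{prop1}(ii) supplies unitary intertwiners in $\RiV$ linking $\sigma_{\, \MIC_a} \product \tau_{\, \MIC_b}$ to $\sigma_{\, \MIC_A'} \product \tau_{\, \MIC_b}$ and $\tau_{\, \MIC_b} \product \sigma_{\, \MIC_A'}$ to $\tau_{\, \MIC_b} \product \sigma_{\, \MIC_a}$. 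Concatenating the three relations gives $\sigma_{\, \MIC_a} \product \tau_{\, \MIC_b} \simeq \tau_{\, \MIC_b} \product \sigma_{\, \MIC_a}$.

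The hard part is the bookkeeping in~(i): arranging the normal domains of the two extensions to contain the intertwiners coming from the opposite cones and verifying that each extension fixes the opposite intertwiner. Once Lemma~\ref{extensions} is invoked carefully to secure $\tsigma_{\, \MIC_a}(U) = U$ and $\tild{\tau}_{\, \MIC_b}(W) = W$, the whole assertion collapses to the elementary commutation $[W,U] = 0$ provided by hypercone duality.
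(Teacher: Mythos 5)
Your argument is correct and coincides in essence with the paper's proof: both fix a double cone, shift the morphisms via \ref{A.3} to sub\-hypercones in its spacelike complement, use hypercone duality to place the intertwiners in $\Ri(\MIC_a)$ and $\Ri(\MIC_b)$ so that they commute, and reduce the exchange to an $\mathrm{Ad}$-computation, with part~(ii) following from part~(i) and Proposition~\ref{prop1}. The only difference is presentational: you spell out explicitly (via Lemma~\ref{extensions}) why the extension $\tsigma_{\, \MIC_a}$ fixes the opposite intertwiner, a step the paper leaves implicit.
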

\begin{proof}
Given a double cone $\cO \subset V$, we choose (cf.~\ref{A.3}) 
hypercones 
$\MIC_A \subset \cO^c \bigcap \MIC_a$, 
\mbox{$\MIC_B \subset \cO^c \bigcap \MIC_b$}, morphisms
$\sigma_{\, \MIC_A} \simeq \sigma_{\, \MIC_a}$, 
$\tau_{\, \MIC_B}  \simeq   \tau_{\, \MIC_b}$ and intertwiners 
\mbox{$W_a \in ( \sigma_{\, \MIC_A}, \sigma_{\, \MIC_a})$}, 
\mbox{$W_b \in (\tau_{\, \MIC_B}, \tau_{\, \MIC_b})$}.
If  $\MIC_a$ and $\MIC_b$ are spacelike separated,
the localization of the
morphisms $\sigma_{\, \MIC_A}$, $\tau_{\, \MIC_B}$ 
imply that the intertwiners
$W_a \in \Ri(\MIC_a)$, $W_b \in \Ri(\MIC_b)$ commute, so
\begin{equation*}
\begin{split}
&  \sigma_{\, \MIC_a} \product \tau_{\, \MIC_b} (A) =
\tsigma_{\, \MIC_a}(\tau_{\, \MIC_b}(A)) =  
\tsigma_{\, \MIC_a}(\mbox{Ad} \, W_b (A)) =
\mbox{Ad} \, W_a ( \mbox{Ad} \, W_b (A))  \\
& = \mbox{Ad} \, W_b ( \mbox{Ad} \, W_a (A))  
= \tilde{\tau}_{\, \MIC_b}(\mbox{Ad} \, W_a (A)) =
\tilde{\tau}_{\, \MIC_b}(\sigma_{\, \MIC_a}(A)) = 
\tau_{\, \MIC_b} \product \sigma_{\, \MIC_a} (A) 
\end{split}
\end{equation*}
for any $A \in \Al(\cO)$.

Since $\cO$ was arbitrary, (i) follows.
We complete the proof by choosing  spacelike separated hypercones 
$\MIC_A$, $\MIC_B$ and morphisms 
$\sigma_{\, \MIC_A} \simeq \sigma_{\, \MIC_a}$, 
$\tau_{\, \MIC_B} \simeq \tau_{\, \MIC_b}$. (ii) 
then
follows from Proposition \ref{prop1}(iii) and the preceding result. 
\end{proof} 

We now consider equivalent morphisms 
$\sigma_{\, \MIC_a} \simeq \sigma_{\, \MIC_b}$ associated with
a given charge class. When discussing statistics 
it suffices to look at pairs 
$\MIC_a$, $\MIC_b$ having some hypercone 
$\MIC \subset \MIC_a^c \bigcap \MIC_b^c$, as is the case 
if $\MIC_a$ and $\MIC_b$ are spacelike separated, 
cf.~\ref{A.8}.  Choosing a funnel of hypercones 
$\{ \tcC_n \subset \MIC \}_{n \in \NN}$ yields
extensions  of ${\sigma}_{\, \MIC_a}$, ${\sigma}_{\, \MIC_b}$
to morphisms $\tsigma_{\, \MIC_a}, \tsigma_{\, \MIC_b}$ 
both acting on the common domain 
 $\tAl(V) \doteq \,\varinjlim \Al({\tcC}_n{}^{\! c})^-$,
\mbox{cf.\ Lemma \ref{extensions}.} 
According to part (iii) of this lemma
the spaces of intertwiners $({\sigma}_{\, \MIC_b}, {\sigma}_{\, \MIC_a})$
and  $(\tsigma_{\, \MIC_b}, \tsigma_{\, \MIC_a})$
coincide. Moreover, 
$({\sigma}_{\, \MIC_b}, {\sigma}_{\, \MIC_a}) \subset \Ri(\MIC^c) \subset
\tAl(V)$ using the localization properties of 
the morphisms and hypercone duality. Thus the unitary intertwiner 
$W \in ({\sigma}_{\, \MIC_b}, {\sigma}_{\, \MIC_a})$ 
is unique up to a phase, and 

$$
{\varepsilon} ({\sigma}_{\, \MIC_a}, {\sigma}_{\, \MIC_b}) 
\doteq W^{-1}  \tsigma_{\, \MIC_a}(W) =
\tsigma_{\, \MIC_b} (W) W^{-1} 
$$
is well defined.
By construction, ${\varepsilon} ({\sigma}_{\, \MIC_a}, {\sigma}_{\,
  \MIC_b}) $ is an intertwiner in 
$(\sigma_{\, \MIC_a} \product \sigma_{\, \MIC_b}, \sigma_{\, \MIC_b} 
\product \sigma_{\, \MIC_a})$.
The following lemma shows that it 
is an intrinsic quantity depending only on the given morphisms.

\begin{lemma}  \label{independence}
The intertwiner ${\varepsilon} ({\sigma}_{\, \MIC_a}, {\sigma}_{\,
  \MIC_b}) $ is independent of the 
extensions of the given morphisms when
chosen as above.
\end{lemma}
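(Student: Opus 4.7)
The plan is to fix two funnels $\{\tcC_n\}_{n \in \NN}$ and $\{\ttcC_n\}_{n \in \NN}$ of hypercones contained in $\MIC$, producing extensions $\tsigma_{\MIC_a}, \ttsigma_{\MIC_a}$ (and analogously extensions of $\sigma_{\MIC_b}$), and to show that $\tsigma_{\MIC_a}(W) = \ttsigma_{\MIC_a}(W)$ together with the matching equality for the $b$-extensions. Since the unitary intertwiner $W \in (\sigma_{\MIC_b}, \sigma_{\MIC_a})$ itself depends only on the base morphisms, this will yield the desired independence of $\varepsilon(\sigma_{\MIC_a}, \sigma_{\MIC_b})$. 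The key observation, already noted immediately before the statement, is that $W \in \Ri(\MIC^c)$: the inclusions $\MIC \subset \MIC_a^c$ and $\MIC \subset \MIC_b^c$ force both $\sigma_{\MIC_a}$ and $\sigma_{\MIC_b}$ to act trivially on $\Al(\MIC)$, so $W$ commutes with $\Al(\MIC)$ and hypercone duality (\ref{duality}) places it in $\Ri(\MIC^c)$. Because $\tcC_n, \ttcC_n \subset \MIC$ implies $\Ri(\MIC^c) \subset \Ri(\tcC_n^{\,c}) \bigcap \Ri(\ttcC_n^{\,c})$, the element $W$ lies in the common domain of both extensions of $\sigma_{\MIC_a}$.

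The main step is to show that the two extensions coincide on all of $\Ri(\MIC^c)$. For any hypercone $\MIC_0 \in \cF$ with $\MIC_0 \subset \MIC^c$, the inclusion $\tcC_m, \ttcC_m \subset \MIC$ yields $\MIC_0 \subset \MIC^c \subset \tcC_m^{\,c} \bigcap \ttcC_m^{\,c}$ for every $m$, so Lemma~\ref{extensions}(ii) gives $\tsigma_{\MIC_a} \rest \Ri(\MIC_0) = \ttsigma_{\MIC_a} \rest \Ri(\MIC_0)$. Both extensions are normal on $\Ri(\MIC^c)$, since each is normal on the ambient algebra $\Ri(\tcC_n^{\,c})$ by Lemma~\ref{extensions}(i) and restriction of a normal morphism to a von Neumann subalgebra is normal. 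Two normal morphisms that agree on a family generating a von Neumann algebra must agree on the whole algebra, so it is enough to argue that the subalgebras $\Ri(\MIC_0)$ with $\MIC_0 \in \cF$, $\MIC_0 \subset \MIC^c$, generate $\Ri(\MIC^c)$. Granted this, $\tsigma_{\MIC_a}$ and $\ttsigma_{\MIC_a}$ coincide on $\Ri(\MIC^c)$, in particular on $W$; the same argument applies to the $b$-extensions, completing the proof of both equality formulations of $\varepsilon(\sigma_{\MIC_a}, \sigma_{\MIC_b})$.

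The main obstacle is precisely this generating property. It amounts to showing that every double cone $\MIO \subset \MIC^c$ sits inside some hypercone of $\cF$ which is itself contained in $\MIC^c$; for then the footnote below the definition of $\AlCc$ identifies $\AlCc$ with the C*-algebra generated by the $\Al(\MIC_0)$ with $\MIC_0 \in \cF$, $\MIC_0 \subset \MIC^c$, from which the generation of $\Ri(\MIC^c)$ by the corresponding $\Ri(\MIC_0)$ follows. This is a purely geometric statement about the richness of $\cF$ inside the spacelike complement $\MIC^c$, and should be covered by the appendix (e.g.\ \ref{A.3}--\ref{A.4}). Once it is in hand, the rest of the argument is a routine combination of Lemma~\ref{extensions}, normality of restrictions, and the intrinsic characterization of $W$ via hypercone duality.
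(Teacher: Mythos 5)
Your argument for comparing two extensions built from funnels lying in the \emph{same} auxiliary hypercone $\MIC \subset \MIC_a^{\,c} \bigcap \MIC_b^{\,c}$ is sound: $W \in \Ri(\MIC^c)$ lies in both domains, both extensions are normal there, they agree on each $\Ri(\MIC_0)$ with $\MIC_0 \subset \MIC^c$ by Lemma \ref{extensions}(ii), and these subalgebras generate $\Ri(\MIC^c)$ (by the footnote on $\AlCc$ together with \ref{A.4}). This is essentially how the paper disposes of that case, in a single sentence.

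But the lemma claims more, and the part you omit is its actual content. The phrase ``chosen as above'' covers the choice of the auxiliary hypercone $\MIC$ itself, not only the funnel inside it; the lemma is invoked later precisely so that $\MIC$ may vary along with $\MIC_a, \MIC_b$ when one proves that the statistics parameter depends only on the charge class. If the second funnel $\{\ttcC_n\}$ sits in a different hypercone $\MIC_0 \subset \MIC_a^{\,c}\bigcap\MIC_b^{\,c}$, your generating argument breaks down: Lemma \ref{extensions}(ii) only yields agreement of the two extensions on $\Ri(\MIC_1)$ for hypercones $\MIC_1 \subset \tcC_m^{\,c}\bigcap\ttcC_m^{\,c}$, whereas $W$ is only known to lie in $\Ri(\MIC^c)$, which is generated by hypercone algebras that need not avoid $\ttcC_m$ --- a hypercone contained in $\MIC^c$ is unbounded and need not be contained in any $\ttcC_m^{\,c}$, since $\ttcC_m^{\,c} \nearrow V$ only absorbs bounded regions. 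The paper handles this by choosing $\ttcC_1 \subset \MIC_0$ so that $\ttcC_1^{\,c}\bigcap\MIC^c$ is hypercone connected (\ref{A.7}), factorizing $W = W_1 \cdots W_m$ into unitaries $W_k \in \Ri(\MIC_k)$ localized along an interpolating path of hypercones $\MIC_k \subset \ttcC_1^{\,c}\bigcap\MIC^c$, and applying Lemma \ref{extensions}(ii) to each factor separately. Without this factorization step, or some substitute for it, your proof establishes only a special case of the lemma.
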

\begin{proof}
The operator 
 $ \tsigma_{\, \MIC_a}(W)$ is independent of the choice of funnel
contained in a given hypercone $\MIC \subset \MIC_a^c \bigcap \MIC_b^c$ 
as the corresponding extensions of $\sigma_{\, \MIC_a}$ 
coincide on $\Ri(\MIC^c)$ by Lemma \ref{extensions}(i). 
Next, if $\MIC_0 \subset \MIC_a^c \bigcap \MIC_b^c$ is another hypercone,
there is a 
hypercone $\ttcC_1 \subset \MIC_0$ making 
$\ttcC_1{}^{c} \bigcap \MIC^c$ hypercone connected, 
\ie this region contains with any pair of hypercones 
a path of hypercones interpolating 
between them, cf.~\ref{A.7}. 
We take $\ttcC_1$ as initial member of a funnel 
$\{ \ttcC_n \subset \MIC_0 \}_{n \in \NN}$ and
consider the corresponding extension $\ttsigma_{\, \MIC_a}$
of $\sigma_{\, \MIC_a}$. 
By hypercone connectivity, there is
an interpolating path of hypercones 
$\MIC_k \subset \ttcC_1{}^c \bigcap \MIC^c $, $k = 1, \dots , m$,
with $\MIC_1 = \MIC_a$ and $\MIC_m = \MIC_b$. Consequently
we can write the intertwiner $W$ as product 
$W = W_1 \cdots W_m$ with $W_k \in \Ri(\MIC_k) \subset 
\Ri(\ttcC_1{}^{c}) \bigcap \Ri(\MIC^c)$, $k = 1, \dots , m$.
Since both $\ttsigma_{\, \MIC_a}$ and
$\tsigma_{\, \MIC_a}$ have the algebra 
$\Ri(\ttcC_1{}^{c}) \bigcap \Ri(\MIC^c)$ in their 
respective domains,  
it follows from Lemma \ref{extensions}(ii) that 
$\ttsigma_{\, \MIC_a}(W_k) = \tsigma_{\, \MIC_a}(W_k)$,
$k = 1, \dots , m$. Hence  
$W^{-1} \ttsigma_{\, \MIC_a}(W) = W^{-1} \tsigma_{\, \MIC_a}(W)$,
as claimed. 
\end{proof} 

With this information we can establish that each simple charge 
class has a definite
``statistics parameter''.

\begin{proposition} 
Given a simple charge class $\mathscr{C}$ 
and the corresponding family of
morphisms $\sigma_{\, \MIC} : \AlV \rightarrow \RiV$. 
\begin{enumerate}
\item[(a)] There is a statistics parameter
$\varepsilon_{\raisebox{-1.3pt}{$\scriptstyle \mathscr{C}$}} \in \{ \pm 1 \}$, 
depending only on the charge class, such that 
${\varepsilon} ({\sigma}_{\, \MIC_a}, {\sigma}_{\, \MIC_b}) = 
\varepsilon_{\raisebox{-1.3pt}{$\scriptstyle \mathscr{C}$}} $ 
for any pair of morphisms ${\sigma}_{\, \MIC_a}$,  ${\sigma}_{\, \MIC_b}$
localized in spacelike separated hypercones $\MIC_a$, $\MIC_b$. 
\item[(b)] The statistics parameter $\varepsilon_{\quer{\mathscr{C}}}$ 
of the corresponding 
conjugate charge class $\quer{\mathscr{C}}$ 
has the same value, $\varepsilon_{\quer{\mathscr{C}}} = 
\varepsilon_{\raisebox{-1.3pt}{$\scriptstyle \mathscr{C}$}}$.
\end{enumerate} 
\end{proposition}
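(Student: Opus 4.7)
The plan is to prove (a) in three stages---scalarity, class-invariance, and $\varepsilon^2=1$---and then deduce (b) by a direct computation exploiting $\sigma \product \qsigma = \iota$. Fix hypercones $\MIC_a \perp \MIC_b$, an auxiliary $\MIC \subset \MIC_a^c \cap \MIC_b^c$ carrying a funnel $\{\tcC_n\}$ providing common extensions $\tsigma_{\MIC_a}, \tsigma_{\MIC_b}$, and $W \in (\sigma_{\MIC_b}, \sigma_{\MIC_a})$. For scalarity, Lemma~\ref{exchange}(i) gives $\sigma_{\MIC_a} \product \sigma_{\MIC_b} = \sigma_{\MIC_b} \product \sigma_{\MIC_a}$, so $\varepsilon(\sigma_{\MIC_a}, \sigma_{\MIC_b}) \in \RiV$ is a self-intertwiner of $\sigma_{\MIC_a} \product \sigma_{\MIC_b}$. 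By Proposition~\ref{prop1}(iii) this composite is unitarily equivalent to an element of $\Sigma(\MIC_0)$ for a hypercone $\MIC_0 \supset \MIC_a \cup \MIC_b$; letting $\MIC_1 \nearrow V$ in criterion~(b) yields $(\sigma_{\MIC_a} \product \sigma_{\MIC_b})(\AlV)^- = \RiV$, and factoriality forces $\varepsilon(\sigma_{\MIC_a}, \sigma_{\MIC_b}) \in Z(\RiV) = \mathbb{C}\,I$, so $\varepsilon(\sigma_{\MIC_a}, \sigma_{\MIC_b}) = c\,I$ for some $c \in \mathbb{T}$.

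For class-invariance, replace $\sigma_{\MIC_a}$ by $\sigma'_{\MIC_a} = \mbox{Ad}\,U \compos \sigma_{\MIC_a}$ with $U \in \Ri(\MIC_a)$ (placed there by hypercone duality) and set $W' := UW$; the key identity $\tsigma_{\MIC_a}(U) = W U W^{-1}$---obtained by applying $W\,\tsigma_{\MIC_b} = \tsigma_{\MIC_a} W$ to $U$ and noting $\tsigma_{\MIC_b}(U) = U$ (since $\MIC_a \perp \MIC_b$ and the extension is normal)---collapses the new statistics scalar to the old. The analogue handles variation of $\sigma_{\MIC_b}$, and the path-connectivity of admissible spacelike pairs (\ref{A.5}, \ref{A.7}, \ref{A.8}) propagates invariance to arbitrary spacelike configurations, yielding a well-defined $\varepsilon_{\mathscr{C}}$. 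Next, using $W^{-1} \in (\sigma_{\MIC_a}, \sigma_{\MIC_b})$ and the alternative form $\tsigma_{\MIC_b}(W) = \varepsilon_{\mathscr{C}}\,W$, one computes $\varepsilon(\sigma_{\MIC_b}, \sigma_{\MIC_a}) = W\,\tsigma_{\MIC_b}(W^{-1}) = \varepsilon_{\mathscr{C}}^{-1}\,I$; class-invariance forces this to equal $\varepsilon_{\mathscr{C}}$, so $\varepsilon_{\mathscr{C}}^2 = 1$ and $\varepsilon_{\mathscr{C}} \in \{+1,-1\}$.

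For (b), pick conjugates $\qsigma_{\MIC_a}, \qsigma_{\MIC_b}$ localized in $\MIC_a, \MIC_b$ with extensions $\qqsigma_{\MIC_a}, \qqsigma_{\MIC_b}$ on the common funnel domain, so that $\tsigma_{\MIC_a} \compos \qqsigma_{\MIC_a} = \iota$ on the relevant weak closures. Applying $\qqsigma_{\MIC_a}$ to $\tsigma_{\MIC_a}(W) = \varepsilon_{\mathscr{C}}\,W$ gives $\qqsigma_{\MIC_a}(W) = \varepsilon_{\mathscr{C}}^{-1}\,W$; the proof of Proposition~\ref{prop2} identifies the conjugate intertwiner as $\qqsigma_{\MIC_a}(W)^{-1}$ up to phase, whence $W^{-1} \in (\qsigma_{\MIC_b}, \qsigma_{\MIC_a})$. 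Therefore
\[
\varepsilon(\qsigma_{\MIC_a}, \qsigma_{\MIC_b}) \;=\; W\,\qqsigma_{\MIC_a}(W^{-1}) \;=\; W \cdot (\varepsilon_{\mathscr{C}}^{-1}\,W)^{-1} \;=\; \varepsilon_{\mathscr{C}}\,I,
\]
so $\varepsilon_{\quer{\mathscr{C}}} = \varepsilon_{\mathscr{C}}$.

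The delicate step is class-invariance: commuting unitaries past the extended morphisms $\tsigma$ requires careful use of normality and of the spacelike localizations of both the morphisms and their intertwiners, and varying the hypercone positions forces combining the fixed-localization argument with the (technical) path-connectivity results of the appendix. A secondary subtlety in~(b) is verifying that $\tsigma_{\MIC_a} \compos \qqsigma_{\MIC_a} = \iota$ extends to include $W$ in its domain; this follows by normality from the corresponding identity on $\AlV$ once the funnel is chosen so that $W$ lies in $\Ri(\tcC_n^{\, c})$ for all large $n$, which is guaranteed by $W \in \Ri(\MIC^c)$.
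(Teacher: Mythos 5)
Your part (a) is essentially the paper's own argument: scalarity of $\varepsilon(\sigma_{\,\MIC_a},\sigma_{\,\MIC_b})$ from Lemma \ref{exchange}(i) and the simplicity of the composite (Proposition \ref{prop1}(iii)), invariance under change of morphism within a localization cone via the identity $\tsigma_{\,\MIC_b}(U)=U$ for $U\in\Ri(\MIC_a)$, propagation to general spacelike pairs by the connectivity results of the appendix, and $\varepsilon^2=1$ from the symmetry $\varepsilon(\sigma_{\,\MIC_a},\sigma_{\,\MIC_b})=\varepsilon(\sigma_{\,\MIC_b},\sigma_{\,\MIC_a})$. Part (b), however, takes a genuinely different route. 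The paper picks $\MIC_a,\MIC_b\subset\MIC$, transports both morphisms to a common auxiliary hypercone $\tcC\subset\MIC^{\,c}$ via unitaries $W_a,W_b\in\Ri(\MIC)$, and obtains the completely concrete formulas $\varepsilon_{\mathscr{C}}=W_bW_aW_b^{-1}W_a^{-1}$ and $\varepsilon_{\quer{\mathscr{C}}}=W_b^{-1}W_a^{-1}W_bW_a$, whence equality follows from the commutation relation $W_bW_a=\varepsilon_{\mathscr{C}}W_aW_b$ by pure group algebra. You instead apply the left inverse to $\tsigma_{\,\MIC_a}(W)=\varepsilon_{\mathscr{C}}W$ to get $\qqsigma_{\,\MIC_a}(W)=\varepsilon_{\mathscr{C}}^{-1}W$ and read off $\varepsilon_{\quer{\mathscr{C}}}$ directly. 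Your computation is correct and shorter, but it buys its brevity at the cost of extension bookkeeping: you must verify both that $\qqsigma_{\,\MIC_a}\compos\tsigma_{\,\MIC_a}=\iota$ holds on a domain containing $W\in\Ri(\MIC^c)$ (which you address) \emph{and} that the extension of $\qsigma_{\,\MIC_a}$ entering the definition of $\varepsilon(\qsigma_{\,\MIC_a},\qsigma_{\,\MIC_b})$ agrees on $W^{-1}$ with the one used in Proposition \ref{prop2} to identify $\qqsigma_{\,\MIC_a}(W^{-1})$ as the conjugate intertwiner — the latter is built from an increasing sequence $\MIC_k\supset\MIC_a\cup\MIC_b$, not from a funnel in $\MIC_a^c\cap\MIC_b^c$, and reconciling the two via Lemma \ref{extensions}(ii) requires a hypercone containing $W$ and lying in both normality domains, which is not automatic for hypercones in general position. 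The paper's concrete commutator realization, with everything taking place inside the single algebra $\Ri(\MIC)$ on which all relevant restrictions are explicitly of the form $\mbox{Ad}\,W_a^{\pm 1}$, sidesteps exactly this issue; you should either adopt that device or supply the missing compatibility argument.
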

\begin{proof}
Let $\MIC_a$, $\MIC_b$ be spacelike separated hypercones. 
Then ${\sigma}_{\, \MIC_a} \product {\sigma}_{\, \MIC_b} = 
{\sigma}_{\, \MIC_b} \product {\sigma}_{\, \MIC_a}$
by Lemma~\ref{exchange}(i). Since these composed morphisms are  
members of a simple charge class 
by Proposition \ref{prop1}(iii), the corresponding unitary 
self--intertwiners are multiples of the identity. Hence  
${\varepsilon} ({\sigma}_{\, \MIC_a}, {\sigma}_{\, \MIC_b}) = \varepsilon \,
1$ for some phase factor $\varepsilon \in \TT$. Choosing a
hypercone $\MIC \subset \MIC_a^{\, c} \bigcap \MIC_b^{\, c}$,
cf.~\ref{A.8}, and 
extensions $\tsigma_{\, \MIC_a}$, $\tsigma_{\, \MIC_b}$
based on a funnel contained in $\MIC$, 
Lemma \ref{independence} gives
${\varepsilon} ({\sigma}_{\, \MIC_a}, {\sigma}_{\, \MIC_b}) = 
W^{-1} \tsigma_{\, \MIC_a}(W) = \tsigma_{\, \MIC_b} (W) W^{-1}$, 
where $W \in ({\sigma}_{\, \MIC_b}, {\sigma}_{\, \MIC_a})$. 
Now given a hypercone $\MIC_A \subset \MIC_a$ and a morphism
$\sigma_{\, \MIC_A} \simeq  {\sigma}_{\, \MIC_a}$ there is a 
unitary intertwiner 
$W_A \in (\sigma_{\, \MIC_a}, {\sigma}_{\, \MIC_A}) \subset \Ri(\MIC_a)$.
Hence $W_A W \in (\sigma_{\, \MIC_b}, \sigma_{\, \MIC_A})$ and  
computing gives
\begin{equation*}
{\varepsilon} ({\sigma}_{\, \MIC_A}, {\sigma}_{\, \MIC_b}) 
= \tsigma_{\, \MIC_b} (W_A W) W^{-1} W_A^{-1} 
= W_A \, \tsigma_{\, \MIC_b} (W) W^{-1} W_A^{-1} 
= {\varepsilon} ({\sigma}_{\, \MIC_a}, {\sigma}_{\, \MIC_b}) \, ,
\end{equation*}
where we used the localization properties of $\sigma_{\, \MIC_b}$
and the fact that 
${\varepsilon} ({\sigma}_{\, \MIC_a}, {\sigma}_{\,  \MIC_b})$ 
is a multiple of the identity. Hence 
${\varepsilon} ({\sigma}_{\, \MIC_a}, {\sigma}_{\,  \MIC_b})$
is independent of the choice of both ${\sigma}_{\, \MIC_a}$ and 
$\sigma_{\, \MIC_b}$ within
their respective localization cones $\MIC_a$ and $\MIC_b$. 
The spacelike complement of
a hypercone is hypercone path connected, 
cf.~\ref{A.6}, so it follows from Lemma~\ref{independence} and a 
standard \mbox{interpolation} argument that 
$ \varepsilon_{\raisebox{-1.3pt}{$\scriptstyle \mathscr{C}$}} \doteq 
{\varepsilon} ({\sigma}_{\, \MIC_a}, {\sigma}_{\,  \MIC_b})$
is independent of the  
choice, both of the spacelike separated hypercones  $\MIC_a$, $\MIC_b$
and of the morphisms within the given simple 
charge class~$\mathscr{C}$. Thus,  
${\varepsilon} ({\sigma}_{\, \MIC_a}, {\sigma}_{\,  \MIC_b})  =
{\varepsilon} ({\sigma}_{\, \MIC_b}, {\sigma}_{\,  \MIC_a})$ 
 and,  consequently, 
$$
\varepsilon_{\raisebox{-1.3pt}{$\scriptstyle \mathscr{C}$}}^2 = 
{\varepsilon} ({\sigma}_{\, \MIC_a}, {\sigma}_{\,  \MIC_b})^2 =
{\varepsilon} ({\sigma}_{\, \MIC_a}, {\sigma}_{\,  \MIC_b}) \,  
{\varepsilon} ({\sigma}_{\, \MIC_b}, {\sigma}_{\,  \MIC_a}) =
W^{-1} \tsigma_{\, \MIC_a}(W) \, \tsigma_{\, \MIC_a} (W^{-1}) W
= 1 \, ,
$$
proving the first part of the proposition.

To prove the second, we pick a hypercone $\MIC$
and spacelike separated hypercones $\MIC_a, \MIC_b \subset \MIC$.
There is then a hypercone $\tcC \subset \MIC^{\, c}$
and a corresponding morphism $\sigma_{\, \tcC}$ in the given 
charge class. Let $W_a \in (\sigma_{\, \tcC}, {\sigma}_{\, \MIC_a})$
and $W_b \in (\sigma_{\, \tcC}, {\sigma}_{\, \MIC_b})$
be unitary intertwiners. Then
$\sigma_{\, \MIC_a} \rest \Ri(\MIC) =
\mbox{Ad} \, W_a \rest \Ri(\MIC)$, \
$\sigma_{\, \MIC_b} \rest \Ri(\MIC) =
\mbox{Ad} \, W_b \rest \Ri(\MIC)$ \ and 
\mbox{$W \doteq W_a W_b^{-1} \in (\sigma_{\, {\MIC}_b}, {\sigma}_{\, \MIC_a})
\subset \Ri(\MIC)$}. So  
$$
\varepsilon_{\raisebox{-1.3pt}{$\scriptstyle \mathscr{C}$}} = 
{\varepsilon} ({\sigma}_{\, \MIC_a}, {\sigma}_{\,  \MIC_b})  
= W^{-1} \tsigma_{\, \MIC_a}(W) = 
W_b W_a^{-1} W_a(W_a W_b^{-1}) W_a^{-1} = W_b W_a  W_b^{-1} W_a^{-1}
\, .
$$
As far as the conjugate goes,    
the argument used in the first part of the proof of
Proposition \ref{prop2}, gives  \
\mbox{$\qsigma_{\, \MIC_a} \rest \Ri(\MIC) =
\mbox{Ad} \, W_a^{-1} \rest  \Ri(\MIC)$ and
$\qsigma_{\, \MIC_b} \rest \Ri(\MIC) =
\mbox{Ad} \, W_b^{-1} \rest  \Ri(\MIC)$}. 
Moreover, the 
last part of that same proof implies 
$\qsigma_{\, \MIC_a}(W^{-1}) \in 
(\qsigma_{\, \MIC_b}, \qsigma_{\, \MIC_a})$ for \  
{$W \in ({\sigma}_{\, \MIC_b}, {\sigma}_{\, \MIC_a}) \subset \Ri(\MIC)$}.
Hence 
$$ \qsigma_{\, \MIC_a}(W_b W_a^{-1}) =
W_a^{-1}(W_b W_a^{-1}) W_a = W_a^{-1}  W_b \in
(\qsigma_{\, \MIC_b}, \qsigma_{\, \MIC_b}) \, .
$$ 
A similar computation for the corresponding 
conjugate charge class~$\quer{\mathscr{C}}$ yields, 
$$
\varepsilon_{\quer{\mathscr{C}}} = 
\varepsilon(\qsigma_{\, \MIC_a},  \qsigma_{\, \MIC_b}) 
= W_b^{-1} W_a \, \qsigma_{\, \MIC_a} (W_a^{-1}  W_b)
= W_b^{-1} W_a^{-1}  W_b W_a \, . 
$$ 
But, by the above equality,    
$W_b W_a = \varepsilon_{\raisebox{-1.3pt}{$\scriptstyle \mathscr{C}$}}
\, W_a W_b$,  
hence 
$\varepsilon_{\quer{\mathscr{C}}} =  
\varepsilon_{\raisebox{-1.3pt}{$\scriptstyle \mathscr{C}$}}$,
completing the proof.  
\end{proof}

As expressed in the criterion, the simple charge 
classes $\mathscr{C}$ of a theory are 
in one--to--one correspondence with  
equivalence classes of morphisms in $\Sigma(V)$, modulo the 
equivalence relation $\simeq$ introduced above.
The preceding analysis shows that the 
structure of simple charge classes 
is analogous to that of simple sectors in 
superselection theory \cite{DoHaRo3}. We summarize these results.

\begin{theorem}
Let $\Sigma(V)$ be the family of all hypercone localized morphisms 
satisfying the criterion.
\begin{enumerate}
\item[(i)] For any given pair of morphisms 
$\sigma_{\, \MIC_a}, \tau_{\, \MIC_b} \in \Sigma(V)$ 
there exists a composed morphism
$\sigma_{\, \MIC_a} \product \tau_{\, \MIC_b} : \AlV \rightarrow
\RiV$. By composing on the left with 
$\omega_0$  it determines 
a  charge class of states, the simple composite class, 
depending only on
the charge classes of the given morphisms.
\item[(ii)] The composition of charge classes is commutative.
Given any two classes, 
morphisms $\sigma_{\, \MIC_a}$ and $\tau_{\, \MIC_b}$ 
can be picked, one from each of the classes, such that 
$\sigma_{\, \MIC_a} \product \tau_{\, \MIC_b} =
\tau_{\, \MIC_b} \product \sigma_{\, \MIC_a} $
when $\MIC_a$ and $\MIC_b$ are spacelike separated. 
\item[(iii)]  A simple charge class has a simple conjugate
charge class: for any morphism $\sigma_{\, \MIC} \in \Sigma(V)$ 
in the given class there is a 
$\qsigma_{\, \MIC} \in \Sigma(V)$ in the conjugate class with 
\mbox{$\qsigma_{\, \MIC} \product \sigma_{\, \MIC} = \sigma_{\, \MIC} \product 
\qsigma_{\, \MIC} = \iota$}.
\item[(iv)] To any simple charge class there corresponds a 
statistics parameter 
$\varepsilon \in \{\pm 1\}$,
characteristic of Bose and Fermi statistics, respectively. 
The conjugate charge class has the same statistics.
\end{enumerate}
\end{theorem}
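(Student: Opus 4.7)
The theorem is largely a summary, assembling the statements proved in Lemmas and Propositions~\ref{extensions}--\ref{independence} together with the unnumbered statistics proposition. Accordingly, my plan is to read each of the four clauses off from the corresponding result, with the only substantive additional work being to verify that the various constructions at the level of morphisms descend to well-defined operations on simple charge classes.

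For (i), I would first invoke Proposition~\ref{prop1}(i) to produce the composite morphism $\sigma_{\,\MIC_a} \product \tau_{\,\MIC_b} : \AlV \to \RiV$. To see that the induced state $\omega_0 \compos (\sigma_{\,\MIC_a} \product \tau_{\,\MIC_b})$ is elemental and lies in a simple charge class, I would choose, using Proposition~\ref{prop1}(iii), a hypercone $\MIC \supset \MIC_a, \MIC_b$ together with a morphism $\varrho_{\,\MIC} \in \Sigma(\MIC)$ unitarily equivalent in $\RiV$ to $\sigma_{\,\MIC_a} \product \tau_{\,\MIC_b}$; this places the composite in the family $\Sigma(V)/{\simeq}$ parametrising simple charge classes. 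That the resulting class depends only on the classes of the factors is then Proposition~\ref{prop1}(ii): replacing $\sigma_{\,\MIC_a}, \tau_{\,\MIC_b}$ by equivalent morphisms $\sigma_{\,\MIC_A}, \tau_{\,\MIC_B}$ produces a composite unitarily equivalent in $\RiV$ to the original, hence a normal state on the same representation.

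For (ii), I would appeal to Lemma~\ref{exchange}(i): given the two input classes, one can choose representatives localized in any prescribed hypercones, in particular in a spacelike separated pair, and then the two composites are literally equal (not merely equivalent). Commutativity at the level of classes is then immediate from (i). Part~(iii) is exactly Proposition~\ref{prop2}, where $\qsigma_{\,\MIC}$ is constructed as the pointwise norm limit of $\mathrm{Ad}\,W_n^{-1}$ and shown to satisfy $\qsigma_{\,\MIC} \product \sigma_{\,\MIC} = \sigma_{\,\MIC} \product \qsigma_{\,\MIC} = \iota$. Part~(iv) is precisely the unnumbered statistics proposition proved immediately before the theorem, which supplies the dichotomy $\varepsilon_{\raisebox{-1.3pt}{$\scriptstyle \mathscr{C}$}} \in \{\pm 1\}$ and its invariance under charge conjugation.

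The only place where any genuine care is required is the well-definedness issue in (i), namely that composition of morphisms truly yields an operation on the set of simple charge classes rather than on morphisms labelled by localization hypercones. This is exactly what Propositions~\ref{prop1}(ii) and~\ref{prop1}(iii) together ensure: the former makes the composition class-theoretic, and the latter guarantees that the composite can be represented by a hypercone localized morphism satisfying the criterion. All remaining steps are routine bookkeeping among previously established equivalences.
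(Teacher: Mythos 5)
Your proposal is correct and matches the paper's treatment exactly: the theorem is stated there as a summary with no separate proof, and each clause is read off from Proposition~\ref{prop1}, Lemma~\ref{exchange}, Proposition~\ref{prop2} and the statistics proposition just as you do, with the well-definedness on classes supplied by Proposition~\ref{prop1}(ii) and (iii). One minor slip: a hypercone $\MIC \supset \MIC_a \bigcup \MIC_b$ need not exist for hypercones in general position (the paper warns that $\sigma_{\, \MIC_a} \product \tau_{\, \MIC_b} \notin \Sigma(V)$ in that case), but this does not matter for your argument since the first part of Proposition~\ref{prop1}(iii) already gives, for an \emph{arbitrary} hypercone $\MIC$, a morphism $\varrho_{\, \MIC} \in \Sigma(\MIC)$ unitarily equivalent in $\RiV$ to the composite, which is all that is needed to identify the composite charge class as simple.
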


\noindent {\bf Remark.} 
The first three parts of this theorem imply that 
$\Sigma(V)/\!\simeq$ is an Abelian group whose product is implemented 
by the composition of morphisms.
This group is to be interpreted as the dual of the global gauge group
deduced from the intrinsic structure of the 
charge classes, cf.\ the analogous result for 
superselection sectors in \cite{DoHaRo3}.

\vspace*{2mm} 
We conclude by pointing out 
that the preceding results on  the structure of simple charge 
classes are independent of our {\it ad hoc}
choice of hypercones. We have selected a family
$\cF$ of hypercones, based on a given hyperboloid 
$\HY$. Selecting another hyperboloid~$\HY^\prime$, there is
another family $\cF^{\, \prime}$ of hypercones based 
on it. Now, as shown in the 
appendix, cf.~\ref{A.9} and \ref{A.10}, given a 
hypercone $\MIC \in \cF$, there are hypercones 
$\mathbf{\check{\MIC}}, \mathbf{\hat{\MIC}} \in \cF^{\, \prime}$ with
$\mathbf{\check{\MIC}} \subset \MIC \subset 
\mathbf{\hat{\MIC}}$ and 
{\it vice versa}. Since the preceding arguments involve only the 
partial ordering and the causal relations between hypercones 
our structural results on
simple charge classes do not change 
if the family $\cF$  
is replaced by any other family $\cF^\prime$.

The results of this section, show that it suffices  
in the following to denote the morphisms by $\sigma$ rather 
than $\sigma_{\MIC}$,  \ie without singling out a choice $\MIC$ of 
localization hypercone. If localization matters we 
write $\sigma\in\Sigma(\MIC)$ to indicate 
that $\sigma$ is localized in $\MIC$.

\section{Covariant morphisms}
\label{covariantmorphisms}
\setcounter{equation}{0}

Since 
the semigroup $\Semi$ of spacetime transformations only acts as 
endomorphisms on the observables in $\AlV$, the usual way of 
describing the transport of states and morphisms makes no sense 
here. Hence it is not obvious how covariant morphisms and their 
charge classes are to be defined. 

\begin{definition}
A morphism $\sigma \in \Sigma(V)$ 
is covariant if first, for some neighbourhood of the 
identity $\cN_\cS \subset \Semi$,  there are morphisms  
\mbox{${}^\lambda \! \sigma : \AlV \rightarrow \RiV$}, $\lambda \in \cN_\cS$,    
looking like the original morphism on the transformed algebra,~\ie 
\begin{equation}  \label{transport}
{}^\lambda \! \sigma \compos \alpha_\lambda = \alpha_\lambda \compos
\sigma \, , \quad \lambda \in  \cN_\cS \, .
\end{equation} 
Secondly, there are unitary intertwiners 
$\Gamma_\lambda \in ({}^\lambda \sigma, \sigma)$,
$\lambda \in \cN_\cS$, such that 
\begin{equation}  \label{covtransport}
\alpha_\lambda(\Gamma_\mu) \in ({}^{\lambda \mu} \! \sigma ,
{}^\lambda \! \sigma) \, , \quad \lambda, \mu,  \lambda \mu \in \cN_\cS \, .  
\end{equation} 
This condition expresses the idea that the morphisms ${}^{\lambda} \!
\sigma $ all carry the same charge and are transported 
covariantly by physical operations.
Finally, there is a strong operator continuous section
\begin{equation}  \label{contsections} 
\mbox{$\lambda \mapsto \Gamma_\lambda \in ({}^\lambda\sigma, \sigma)$} 
\end{equation} 
of unitary intertwiners over $\cN_\cS$.
(At the expense of additional technical complications, 
continuity can be relaxed to measurability.)
\end{definition} 

\vspace*{2mm}
\noindent {\bf Remark.} Relations
(\ref{transport}) and (\ref{covtransport}) imply 
${}^{\lambda \mu} \! \sigma \compos \alpha_\lambda 
= \alpha_\lambda \compos {}^{\mu} \! \sigma$ for
$\lambda, \mu, \lambda \mu \in \cN_\cS$. 

\vspace*{2mm}
A covariant morphism $\sigma$ determines 
a continuous unitary local projective (ray) representation of 
$\Semi$ on $\cN_\cS$ by putting 
\begin{equation} \label{projectiverep}
U_\sigma(\lambda) \doteq \Gamma_\lambda \, U_0(\lambda) \, , 
\quad \lambda \in \cN_\cS \, , 
\end{equation}
where $U_0$ is the continuous unitary representation 
of $\Poin$ in the vacuum representation. Relation (\ref{covtransport}) 
implies that
$\Gamma_\lambda \alpha_\lambda(\Gamma_\mu) \in 
({}^{\lambda \mu} \sigma,\sigma)$  and 
$ \Gamma_{\lambda \mu}
\in ({}^{\lambda \mu} \sigma,\sigma)$ differ  
at most by a phase. Hence for $\lambda, \mu, \lambda \mu \in
\cN_\cS$ 
there is a 
$\zeta(\lambda, \mu) \in \TT$ such that 
$$ 
U_\sigma(\lambda)  U_\sigma(\mu) = 
\Gamma_\lambda \alpha_\lambda(\Gamma_\mu) \, U_0(\lambda \mu)
= \zeta(\lambda, \mu) \, \Gamma_{\lambda \mu} U_0(\lambda \mu)  = 
\zeta(\lambda, \mu) \,  U_\sigma(\lambda \mu) 
\, .
$$
Moreover, relation (\ref{transport}) gives
$$ 
\text{\rm Ad} \,  U_\sigma(\lambda) \compos \sigma =
\text{\rm Ad} \, \Gamma_\lambda \compos  \alpha_\lambda \compos \sigma  =
\text{\rm Ad} \, \Gamma_\lambda \compos {}^\lambda \! \sigma 
\compos \alpha_\lambda = \sigma \compos \alpha_\lambda \, , \quad 
\lambda \in \cN_\cS \, .
$$
Thus $U_\sigma$ is a local 
projective representation of $\Semi$ inducing
the corresponding local action 
on the observables in the representation $\sigma$.
Its continuity follows from that of $\lambda \mapsto
\Gamma_\lambda$. Applying  a well known result of
Bargmann \cite{Ba} we can establish the following. 

\begin{proposition}  \label{bargmann}
Let  $\sigma \in \Sigma(V)$ be a covariant morphism. There is 
a continuous unitary representation 
$\tU_\sigma$ of the covering group 
$\tPoin \doteq \RR^4 \rtimes \text{SL}(2,\CC) $
of  $\Poin$ such that 
$ \text{\rm Ad} \,  {\widetilde U}_\sigma(\widetilde{\lambda}) 
\compos \sigma = \sigma \compos \alpha_\lambda$ 
and $\tU_\sigma(\widetilde{\lambda})  
U_0(\lambda)^{-1} \in ({}^\lambda \sigma, \sigma)$
for $\widetilde{\lambda} \in 
\tSemi \doteq \Trans \rtimes \text{SL}(2,\CC)$. 
Here $\widetilde{\lambda} \mapsto \lambda$ is the 
canonical covering map  from the covering group 
to the Poincar\'e group.

\end{proposition}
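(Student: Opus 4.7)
\medskip

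\noindent\textbf{Proof proposal.} The strategy is to promote the local unitary projective representation $U_\sigma$ defined on $\cN_\cS$ by~(\ref{projectiverep}) to a genuine continuous unitary representation of the simply connected covering group $\tPoin$. The argument naturally splits into three steps: identification of the $\TT$-valued local $2$-cocycle carried by $U_\sigma$, its trivialization via Bargmann's theorem, and the extension from a local semigroup neighbourhood to all of $\tPoin$.

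First one checks that the phase $\zeta(\lambda,\mu)\in\TT$ defined by $U_\sigma(\lambda)U_\sigma(\mu)=\zeta(\lambda,\mu)\,U_\sigma(\lambda\mu)$ is a continuous local $2$-cocycle on $\cN_\cS$: continuity follows from the hypothesis~(\ref{contsections}) on $\lambda\mapsto\Gamma_\lambda$ and from continuity of $U_0$, while the cocycle identity is just associativity of operator multiplication. Pulling this back along the covering map $\tPoin\to\Poin$ yields a continuous local $2$-cocycle $\widetilde\zeta$ on a neighbourhood $\widetilde\cN$ of the identity in $\tSemi$. Bargmann's theorem, together with the vanishing of the second real Lie algebra cohomology of the Poincar\'e algebra, then provides, after possibly shrinking $\widetilde\cN$, a continuous phase $\eta:\widetilde\cN\to\TT$ with $\eta(e)=1$ satisfying $\widetilde\zeta(\widetilde\lambda,\widetilde\mu)=\eta(\widetilde\lambda)\eta(\widetilde\mu)\eta(\widetilde\lambda\widetilde\mu)^{-1}$. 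Setting
$$
\tU_\sigma(\widetilde\lambda)\doteq \eta(\widetilde\lambda)^{-1}\,\Gamma_\lambda\,U_0(\lambda),\qquad \widetilde\lambda\in\widetilde\cN,
$$
gives a continuous local unitary representation satisfying $\text{Ad}\,\tU_\sigma(\widetilde\lambda)\compos\sigma=\sigma\compos\alpha_\lambda$, inherited from the analogous identity already derived for $U_\sigma$ immediately after~(\ref{projectiverep}), and $\tU_\sigma(\widetilde\lambda)U_0(\lambda)^{-1}=\eta(\widetilde\lambda)^{-1}\Gamma_\lambda\in({}^\lambda\sigma,\sigma)$ by construction.

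The main obstacle is the third step: extending $\tU_\sigma$ from $\widetilde\cN$ to all of $\tPoin$. Here I would mimic the construction in the proof of Proposition~3.2(ii). Every element of $\tPoin$ can be written, non-uniquely, as a finite product of the form $\widetilde x_1\widetilde y_1^{-1}\cdots \widetilde x_k\widetilde y_k^{-1}$ with $\widetilde x_j,\widetilde y_j\in\widetilde\cN$, since $\Trans$ has non-empty interior in $\RR^4$ and generates $\RR^4$, while Lorentz transformations near the identity generate $\text{SL}(2,\CC)$. Defining $\tU_\sigma$ on such a product as the corresponding product $\prod_j\tU_\sigma(\widetilde x_j)\tU_\sigma(\widetilde y_j)^{-1}$, independence of the chosen decomposition reduces to the local homomorphism property of $\tU_\sigma$ on $\widetilde\cN$ combined with simple connectedness of $\tPoin$; this is the classical monodromy argument for lifting a local unitary representation of a connected Lie group to a global representation of its simply connected cover, and is where the passage to $\tPoin$ (as opposed to $\Poin$) is essential. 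Continuity of the extended $\tU_\sigma$ is inherited from that on $\widetilde\cN$ together with continuity of group multiplication, and the two identities claimed in the statement propagate to all $\widetilde\lambda\in\tSemi$ by applying them factor by factor along a semigroup-valued decomposition, making repeated use of the covariant transport relation~(\ref{covtransport}).
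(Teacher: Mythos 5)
There is a genuine gap, and it sits exactly where the paper locates ``the crucial step.'' You apply Bargmann's theorem to the cocycle $\widetilde\zeta$ defined on a neighbourhood $\widetilde\cN$ of the identity \emph{in the semigroup} $\tSemi$. But $\cN_\cS = \cN^T_+ \times \cN^L$ is not a neighbourhood of the identity in the group $\Poin$: the translation part $\cN^T_+$ lies in the closed forward cone $\Trans$, whose apex is the origin, so every genuine $\RR^4$-neighbourhood of $0$ contains past-directed translations on which $U_\sigma$ (and hence $\zeta$) is simply not defined. Bargmann's theorem --- and the Lie-algebra cohomology vanishing you invoke --- trivializes continuous local $2$-cocycles on a neighbourhood of the identity of the \emph{group}; it does not apply to a cocycle living only on a one-sided semigroup wedge. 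The same defect undermines your third step: the monodromy argument for well-definedness of $\prod_j \tU_\sigma(\widetilde x_j)\tU_\sigma(\widetilde y_j)^{-1}$ requires a local homomorphism on an open identity neighbourhood of $\tPoin$; the ``local homomorphism property on $\widetilde\cN$'' alone does not control products involving inverses of semigroup elements, so independence of the decomposition is asserted rather than proved.

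The paper closes this gap \emph{before} invoking Bargmann: it first extends the local \emph{projective} representation from $\cN_\cS$ to a genuine group neighbourhood $\cN_\cP = \cN^T \times \cN^L \subset \Poin$. Concretely, an arbitrary small translation $x$ is split as $x = (x - \kappa(x)e) + \kappa(x)e$ with $\kappa(x) = x_0 - |\bx|$, so that both $x - \kappa(x)e$ and $|\kappa(x)|e$ lie in $\cN^T_+$; one sets $\mathbf{\breve{\mathnormal{U}}}_\sigma(x,1) \doteq U_\sigma(x-\kappa(x)e,1)\,U_\sigma(\kappa(x)e,1)^{\pm 1}$ and checks, using that group-theoretic commutators of the translation unitaries are phases and the commutation relation between $U_\sigma(0,\Lambda)$ and $U_\sigma(y,1)^{\pm1}$, that this yields a continuous local projective representation of $\cN_\cP$ agreeing with $U_\sigma$ on $\cN_\cS$ up to phases. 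Only then are Bargmann's theorem and the monodromy theorem applicable. Your cocycle identification, the final intertwining computation, and the iteration of $\tU_\sigma(\widetilde\lambda)U_0(\lambda)^{-1}\in\RiV$ over products of elements of $\widetilde{\cN}_\cS$ are all in line with the paper, but without the semigroup-to-group extension of the projective representation the argument does not go through.
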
 

\begin{proof} The crucial step is to show  
that the local projective representation~$U_\sigma$ 
on $\cN_\cS \subset \Semi$, defined above, can 
be extended to a local projective 
representation on some
neighbourhood of the identity $\cN_\cP \subset \Poin$. 
Without loss of generality we assume that 
$\cN_\cS = \cN^T_+ \times \cN^L$ where 
$\cN^T_+ = \{ x \in \RR^4 : 2 |\bx| \leq x_0 + 
|\bx| < 2 \varepsilon \} \subset \Trans$ 
is a double cone for any given $\varepsilon > 0$
and  $\cN^L \subset \Lore$ is some neighbourhood of $1$. 
Then $\cN^T \doteq 
\{ x \in \RR^4 : |\bx| < \varepsilon, 
|x_0| + |\bx| < 2 \varepsilon \} \supset \cN^T_+$ 
is a neighbourhood of $0 \in \RR^4$ and we can put 
$\cN_\cP \doteq \cN^T \times \cN^L$. 

The desired extension of $U_\sigma$ requires 
several steps. First, we note that definition (\ref{projectiverep}) 
implies $U_\sigma(0,1) \in (\sigma, \sigma)$, hence by adjusting
phases we may assume \mbox{$U_\sigma(0,1) = 1$}. In a second step we
extend $U_\sigma$ to the translations $x \in \cN^T$. 
Given any such $x$ we write 
$x = (x - \kappa(x) e) + \kappa(x) e$, where 
$\kappa(x) \doteq (x_0 - |\bx|)$ and $e = (1, \mathbf{0})$ denotes
the time direction in the chosen coordinate system. Note that
both $(x - \kappa(x) e), \, |\kappa(x)| e \in \cN^T_+$ so 
definition 
$$
\mathbf{\breve{\mathnormal{U}}}_\sigma (x,1) \doteq 
 U_\sigma(x - \kappa(x) e, 1) \cdot 
\begin{cases}
U_\sigma(\kappa(x) e, 1) & \text{if} \quad \kappa(x) \geq 0 \\
U_\sigma(|\kappa(x)| e, 1)^{-1} & \text{if} \quad \kappa(x) \leq 0 
\end{cases}
$$
is consistent. As $U_\sigma \rest \cN^T_+$ is a local  
projective representation, the group theoretic commutators
of the corresponding unitaries are multiples of the identity,
and it is easy to verify that 
$\mathbf{\breve{\mathnormal{U}}}_\sigma$ yields a local projective
representation of $\cN^T$. Moreover 
$\mathbf{\breve{\mathnormal{U}}}_\sigma \rest \cN^T_+ $ coincides
with  $U_\sigma \rest \cN^T_+$ up to a phase. Lastly, 
for $\lambda = (x,\Lambda) \in \cN_\cP$, we put    
$\mathbf{\breve{\mathnormal{U}}}_\sigma(\lambda)
\doteq \mathbf{\breve{\mathnormal{U}}}_\sigma (x,1) \, 
U_\sigma(0,\Lambda)$. Since $U_\sigma$ is a local projective
representation of $\cN_\cS$ one has 
$U_\sigma(0,\Lambda) U_\sigma(y,1) = \zeta \,  U_\sigma(\Lambda y,1)
 U_\sigma(0,\Lambda) $ for 
$y, \Lambda y \in \cN^T_+$, $\Lambda \in \cN^L$ and
some phase factor  $\zeta$, hence 
$U_\sigma(0,\Lambda) U_\sigma(y,1)^{-1} = \overline{\zeta} \,  
U_\sigma(\Lambda y,1)^{-1} U_\sigma(0,\Lambda)$. 
Using these equalities, another 
easy computation shows that 
$\mathbf{\breve{\mathnormal{U}}}_\sigma$ is a local 
projective representation of $\cN_\cP$, \ie
$\mathbf{\breve{\mathnormal{U}}}_\sigma (\lambda)
\mathbf{\breve{\mathnormal{U}}}_\sigma (\mu) =
\xi(\lambda, \mu) \, \mathbf{\breve{\mathnormal{U}}}_\sigma (\lambda
\mu)$ for $\lambda, \mu, \lambda \mu \in \cN_\cP$ and
phase factors \mbox{$\xi(\lambda, \mu) \in \TT$}. It is 
continuous on $\cN_\cP$ because of the continuity  
inherited from $U_\sigma$ and, by construction,  
$\mathbf{\breve{\mathnormal{U}}}_\sigma \rest \cN_\cS$ coincides
with $U_\sigma$ modulo some phase factors. 

Now by the results of Bargmann \cite{Ba}, exploiting 
the phase freedom in the definition of 
$\mathbf{\breve{\mathnormal{U}}}_\sigma$ in some 
neighbourhood of the identity $\cN_\cP \subset \Poin$ leads to
a true continuous unitary representation still 
denoted by $\mathbf{\breve{\mathnormal{U}}}_\sigma$. 
As the covering group is locally isomorphic to $\Poin$, its 
local representation induces a local 
continuous unitary representation 
$\tU_\sigma$ of $\tPoin$, 
given by $\tU_\sigma(\widetilde{\lambda}) \doteq
\mathbf{\breve{\mathnormal{U}}}_\sigma(\lambda)$, 
$\widetilde{\lambda} \in
\widetilde{\cN}_\cP$. The covering group being simply 
connected, there is a unique extension of $\tU_\sigma$
to a strongly continuous unitary representation of
$\tPoin$ got by representing its elements as
finite products of elements close to the identity (monodromy theorem). 
This establishes the existence of  
$\tU_\sigma$. Furthermore, for any $A \in \AlV$ one has 
$$ 
\mbox{Ad} \, \tU_\sigma (\widetilde{\lambda}) \compos \sigma(A)
= \mbox{Ad} \, \mathbf{\breve{\mathnormal{U}}}_\sigma (\lambda) 
\compos \sigma(A)
= \sigma \compos \alpha_\lambda (A) \, , \quad 
\widetilde{\lambda} \in \widetilde{\cN}_\cS \, .
$$ 
Thus iterating, the first and last members of this
equality are equal for all 
$\widetilde{\lambda} \in \tSemi$. 
Finally, 
$\tU_\sigma (\widetilde{\lambda}) U_0(\lambda)^{-1} 
\in ({}^\lambda \sigma, \sigma) \subset \RiV$
for $\widetilde{\lambda} \in \widetilde{\cN}_\cS$. Hence as for  
$\widetilde{\lambda}_1, \dots , \widetilde{\lambda}_n \in 
\widetilde{\cN}_\cS$
\begin{equation*}
\begin{split}
\tU_\sigma (\widetilde{\lambda}_n \cdots \widetilde{\lambda}_1) 
& U_0(\lambda_n \cdots \lambda_1)^{-1} \\
& =
\big( \tU_\sigma(\widetilde{\lambda}_{n})  U_0(\lambda_n)^{-1}
\big) \, U_0(\lambda_n) \big(
\tU_\sigma (\widetilde{\lambda}_{n-1} \cdots \widetilde{\lambda}_1) 
U_0(\lambda_{n-1} \cdots \lambda_1)^{-1}  \big) \, 
 U_0(\lambda_n)^{-1}
\end{split}
\end{equation*}
and $U_0(\lambda_n) \RiV U_0(\lambda_n)^{-1} \subset \RiV$
it follows by induction that 
\mbox{$\tU_\sigma (\widetilde{\lambda}) U_0(\lambda)^{-1} \in
\RiV$} for any $\widetilde{\lambda} \in \tSemi$, completing the proof. 
\end{proof}

Our previous, operationally inspired characterization of covariant 
morphisms \mbox{$\sigma \in \Sigma(V)$} involved an associated 
covariant family of morphisms 
$({}^\lambda \sigma, \sigma$), $\lambda \in \cN_\cS$.
This raises the question of whether the resulting 
unitary representation of the Poincar\'e group depends on the 
choice of such a family. We will answer this question in the 
subsequent lemma, where we show that this representation
is uniquely fixed by $\sigma \in \Sigma(V)$. 

\begin{lemma}
Let $\sigma \in \Sigma(V)$ be a covariant morphism, then 
the associated unitary representation $\tU_\sigma$ of $\tPoin$  
given in the preceding proposition is unique.
\end{lemma}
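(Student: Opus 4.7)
My plan is to compare two putative representations and show they coincide by exploiting the factoriality of $\RiV$ together with the absence of nontrivial characters on the covering group of the Poincar\'e group.

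Suppose a second covariant family $({}^\lambda\sigma^{\,\prime},\Gamma'_\lambda)$ for the same morphism $\sigma$ produces, via the construction in Proposition~\ref{bargmann}, a continuous unitary representation $\tU^{\prime}_\sigma$ of $\tPoin$ with the two listed properties: $\mathrm{Ad}\,\tU^{\prime}_\sigma(\widetilde\lambda)\compos\sigma=\sigma\compos\alpha_\lambda$, and $\tU^{\prime}_\sigma(\widetilde\lambda)\,U_0(\lambda)^{-1}\in\RiV$ for every $\widetilde\lambda\in\tSemi$. Define
\[
 C(\widetilde\lambda)\doteq \tU^{\prime}_\sigma(\widetilde\lambda)\,\tU_\sigma(\widetilde\lambda)^{-1},\qquad \widetilde\lambda\in\tPoin.
\]
For $\widetilde\lambda\in\tSemi$ we can write $C(\widetilde\lambda)=\bigl(\tU^{\prime}_\sigma(\widetilde\lambda)U_0(\lambda)^{-1}\bigr)\bigl(\tU_\sigma(\widetilde\lambda)U_0(\lambda)^{-1}\bigr)^{-1}$, so $C(\widetilde\lambda)\in\RiV$. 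Since both $\tU_\sigma$ and $\tU^{\prime}_\sigma$ implement $\sigma\compos\alpha_\lambda$ in the representation $\sigma$, the operator $C(\widetilde\lambda)$ commutes with $\sigma(\AlV)$, hence with its weak closure $\sigma(\AlV)^{-}=\RiV$ by condition (b) of the criterion. Thus $C(\widetilde\lambda)\in\RiV\cap\RiV^{\prime}=\CC\cdot 1$ as $\RiV$ is a factor, giving $C(\widetilde\lambda)=\chi(\widetilde\lambda)\,1$ with $\chi(\widetilde\lambda)\in\TT$ for every $\widetilde\lambda\in\tSemi$.

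The map $\widetilde\lambda\mapsto C(\widetilde\lambda)=\tU^{\prime}_\sigma(\widetilde\lambda)\tU_\sigma(\widetilde\lambda)^{-1}$ is a continuous function on $\tPoin$ taking values in the unitaries on $\cH$, and being a product of two unitary representations restricted to $\tSemi$ it is multiplicative there. Since $\tSemi=\Trans\rtimes\mathrm{SL}(2,\CC)$ generates $\tPoin$ as a group and $\chi$ is continuous, $C$ extends uniquely to a continuous one-dimensional unitary character $\chi:\tPoin\to\TT$.

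Finally I invoke the structure of $\tPoin=\RR^4\rtimes\mathrm{SL}(2,\CC)$: its abelianization is trivial, so $\chi\equiv 1$. More concretely, $\mathrm{SL}(2,\CC)$ is perfect, so $\chi$ vanishes on its Lorentz part; and the conjugation relation $(0,\Lambda)(x,1)(0,\Lambda)^{-1}=(\Lambda x,1)$ forces $\chi(x,1)=\chi(\Lambda x,1)$ for every $\Lambda$, so $\chi\rest\RR^4$ is a continuous character invariant under the orthochronous Lorentz group; the only such character is trivial because the only Lorentz-fixed point in momentum space is the origin. Hence $C\equiv 1$ and $\tU^{\prime}_\sigma=\tU_\sigma$.

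The only real content is the step where one passes from $C(\widetilde\lambda)\in\CC\cdot 1$ for $\widetilde\lambda\in\tSemi$ to a globally defined character on $\tPoin$; here one uses that $\tSemi$ generates $\tPoin$ together with the continuity of $\tU_\sigma,\tU^{\prime}_\sigma$ on all of $\tPoin$. The rest, including the character-vanishing argument, is standard. I expect no substantive obstacle beyond a careful writeup of factoriality of $\RiV$ and the elementary character calculation on $\tPoin$.
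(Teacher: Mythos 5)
Your reduction of the Lorentz part to factoriality of $\RiV$ plus the absence of one--dimensional characters of $\text{SL}(2,\CC)$ matches the paper, but there is a genuine gap at the decisive step for the translations, namely the claim that $C(\widetilde{\lambda})$ commutes with $\sigma(\AlV)$. From $\text{Ad}\,\tU^{\prime}_\sigma(\widetilde{\lambda}) \compos \sigma = \sigma \compos \alpha_\lambda = \text{Ad}\,\tU_\sigma(\widetilde{\lambda}) \compos \sigma$ one only obtains that $C(\widetilde{\lambda}) = \tU^{\prime}_\sigma(\widetilde{\lambda})\,\tU_\sigma(\widetilde{\lambda})^{-1}$ commutes with $\sigma(\alpha_\lambda(\AlV))$, not with all of $\sigma(\AlV)$: for a translation $x \in \Trans$ the map $\alpha_x$ is a \emph{proper} endomorphism, so $\sigma(\alpha_x(\AlV))^{-}$ is a proper subalgebra of $\RiV$ whose relative commutant in $\RiV$ cannot be assumed trivial in the presence of massless particles --- this loss of information under time translations is exactly the phenomenon the whole paper is built around. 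If you instead take the other ordering $D(\widetilde{\lambda}) \doteq \tU^{\prime}_\sigma(\widetilde{\lambda})^{-1}\tU_\sigma(\widetilde{\lambda})$, you do get $D(\widetilde{\lambda}) \in \sigma(\AlV)^\prime = \RiV^{\prime}$, but then the cocycle condition only places $D(\widetilde{\lambda})$ in $U_0(\lambda)^{-1}\,\RiV\,U_0(\lambda)$, which for translations is strictly \emph{larger} than $\RiV$, and $U_0(x)^{-1}\RiV\,U_0(x)\,\bigcap\,\RiV^{\prime}$ need not reduce to $\CC\,1$. Either way, factoriality alone does not dispose of the translation subgroup; your subsequent character computation on $\tPoin$, while correct, only becomes relevant once scalarity of the discrepancy has been established.

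The paper's proof spends most of its length on precisely this point. Having settled the Lorentz part as you do, it sets $\Delta(x) \doteq \tU_2(x,1)^{-1}\tU_1(x,1)$, derives the covariance law $U_0(\Lambda)\,\Delta(x)\,U_0(\Lambda)^{-1} = \Delta(\Lambda x)$ by combining the two memberships above with the already--established equality of the representations on $\text{SL}(2,\CC)$, and then scales a lightlike translation $l$ to zero with boosts, so that $\Delta(e^{-s}l) \rightarrow 1$ strongly; taking matrix elements in the invariant vector $\Omega$ gives $\Delta(l)\Omega = \Omega$, and the separating property of $\Omega$ for $\RiV^{\prime}$ yields $\Delta(l) = 1$. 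Since lightlike translations generate $\RR^4$, the representations coincide. You would need some substitute for this scaling argument to close the gap.
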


\begin{proof}
Let $\tU_j$,  $j = 1,2$, be unitary representations
of $ \tPoin$ as in the preceding proposition. Then
$ \text{\rm Ad} \,  \tU_1(\widetilde{\lambda}) 
\compos \sigma =\sigma \compos \alpha_\lambda = 
\text{\rm Ad} \, \tU_2 (\widetilde{\lambda}) \compos \sigma$ 
and hence \ 
$ \text{\rm Ad} \,  \tU_2(\widetilde{\lambda})^{-1}  
\tU_1(\widetilde{\lambda}) \compos \sigma = \sigma$ 
for $\widetilde{\lambda} \in \tSemi$.
Recalling that $\sigma(\AlV)^{-} =  \RiV$ this implies 
$$ 
\tU_2(\widetilde{\lambda})^{-1}  
\tU_1(\widetilde{\lambda}) \in  \RiV^{\prime} 
 \, , \quad \widetilde{\lambda} \in 
\tSemi \, .
$$ 
Moreover, for such $\widetilde{\lambda}$, 
$\tU_j(\widetilde{\lambda}) U_0(\lambda)^{-1} 
\in \RiV$, $j = 1,2$, and consequently   
$$
\tU_2(\widetilde{\lambda})^{-1}  
\tU_1(\widetilde{\lambda}) 
\in  U_0(\lambda)^{-1} \RiV \, U_0(\lambda) \, , \quad 
\widetilde{\lambda} \in \tSemi \, .
$$
Restricting $\widetilde{\lambda}$ in the
preceding two relations to the subgroup 
$\text{SL}(2,\CC)$ and bearing in mind that in this case  
$U_0(\lambda)^{-1} \RiV \, U_0(\lambda) =  \RiV $ 
and that $\RiV$ is a factor it follows that 
{$\tU_2(\widetilde{\lambda})^{-1}  
\tU_1(\widetilde{\lambda}) \in \TT \, 1$} 
for  $\widetilde{\lambda} \in \text{SL}(2,\CC)$.
Since there are no non--trivial one--dimensional 
representations of the Lorentz group, 
the restrictions of $\tU_1$, 
$\tU_2$ to $\text{SL}(2,\CC) $ 
coincide. 

Turning to the translations, let $x \in \Trans$ and let
$\Delta(x) \doteq \tU_2(x,1)^{-1}  
\tU_1(x,1)$. Then, by the preceding step,
$\Delta(x) \in U_0(x,1)^{-1} \, \RiV \, U_0(x,1) \, 
\bigcap \, \RiV^{\prime}$. Hence, again using 
$\tU_1(\widetilde{\lambda}) U_0(\lambda)^{-1} 
\in \RiV$
for $\widetilde{\lambda} \in \tSemi$, gives  
$$ \tU_1(\widetilde{\lambda})^{-1} \Delta(x) \,  
\tU_1(\widetilde{\lambda}) = 
U_0(\lambda)^{-1}  \Delta(x) \,  U_0(\lambda)   \, , \quad 
 \widetilde{\lambda} \in \tSemi \, .
\quad  $$
On the other hand, for $\widetilde{\lambda} \in \text{SL}(2,\CC)$  
\begin{equation*}
\begin{split}
\tU_1(\widetilde{\lambda})^{-1} \Delta(x)  
\tU_1(\widetilde{\lambda}) & = 
\tU_2(\widetilde{\lambda})^{-1} 
\tU_2 (x)^{-1} \tU_2(\widetilde{\lambda}) \, 
\tU_1(\widetilde{\lambda})^{-1} 
\tU_1(x) U_1(\widetilde{\lambda})  \\ 
& =  \tU_2({\lambda}^{-1} x)^{-1}  
\tU_1({\lambda}^{-1} x)
=  \Delta({\lambda}^{-1} x)  \, ,
\end{split}
\end{equation*}
where the first equality follows since $\tU_1$, 
$\tU_2$ coincide on $\text{SL}(2,\CC)$ 
and the second since $\tU_1$, $\tU_2$
are unitary representations of $\tPoin$. 
Combining the preceding two relations yields 
$$ 
U_0(\Lambda)  \Delta(x)  U_0(\Lambda)^{-1} =  \Delta(\Lambda x) \, , 
\quad x \in \Trans \, , \ \Lambda \in \Lore \, . 
$$
Now, given any lightlike translation $l \in  \Trans$, 
there is a corresponding family of boosts 
$\{ \Lambda_s \in \Lore \}_{s \in \RR}$ scaling $l$,
\ie $ \Lambda_s \, l = e^{-s} \, l $, $s \in \RR$. 
The preceding equality and the 
continuity of $\tU_1$, $\tU_2$
involved in the definition of $\Delta$ show that 
$$ 
\lim_{s \rightarrow \infty} U_0(\Lambda_s) \Delta(l)  U_0(\Lambda_s)^{-1}
=  \lim_{s \rightarrow \infty} \Delta(e^{-s} l) = 1 \, ,
$$
in the strong operator topology.
Taking matrix elements of this equation in the vector state given by 
$\Omega$, bearing in mind that $\Omega$ is
invariant under the action of $U_0(\lambda)$, leads to
$(\Omega, \Delta(l) \Omega) =1 $. Hence 
$\Delta(l) \Omega = \Omega$ since $\Delta(l)$ 
is unitary. But $\Omega$ is separating for $ \RiV^{\prime}$, 
so $\Delta(l) = 1$ and consequently
$\tU_1(l) = \tU_2(l)$ for lightlike 
translations $l \in \Trans$. As the linear span
of lightlike translations generates the subgroup of all translations
and $\tU_1$, $\tU_2$ are 
representations of $\tPoin$,  
they coincide on $\RR^4$,
and hence on the whole group. \end{proof}

Let $\sigma \in \Sigma(V)$ be a covariant morphism and 
let $\tU_\sigma$ be the associated representation
of $\tPoin$. Then any other 
equivalent morphism $\sigma^\prime \simeq \sigma$ is also
covariant and the corresponding representation is given by 
$\tU_{\sigma^\prime}(\widetilde{\lambda}) =
W \tU_\sigma (\widetilde{\lambda}) W^{-1}$, 
$\widetilde{\lambda} \in \tPoin$, 
where $W \in (\sigma , \sigma^\prime)$. Up till now 
the specific localization of the covariant morphisms
did not matter, but to proceed further we need to have a 
closer look at them. 

\begin{lemma} \label{cocycleloc}
Let $\MIC$ be any given hypercone 
and let $\sigma \in \Sigma(\MIC)$ be a covariant morphism
with associated representation $\tU_\sigma$ of 
$\tPoin$. There is a hypercone 
$\MIC_0 \supset \MIC$ (depending only on $\MIC$) 
and a neighbourhood of the identity 
$\widetilde{\cN}_\cS \subset \tSemi$ with
\mbox{$\tU_\sigma(\widetilde{\lambda}) U_0(\lambda)^{-1}
\in \Ri(\MIC_0)$} for $\widetilde{\lambda} \in \widetilde{\cN}_\cS $.
\end{lemma}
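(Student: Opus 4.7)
The strategy is to reduce the assertion to showing $\Gamma_\lambda \in \Ri(\MIC_0)$ for some fixed $\MIC_0 \supset \MIC$ and all $\lambda$ in a small neighbourhood $\cN_\cS$ of the identity in $\Semi$, and then to transfer this localization to $\tU_\sigma(\widetilde{\lambda}) U_0(\lambda)^{-1}$ through the Bargmann construction of Proposition~\ref{bargmann}.

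First I would use the continuity of the $\Semi$--action on $\FL$ together with the appendix interpolation results (\ref{A.5}, \ref{A.8}) to pick $\MIC_0 \in \cF$ strictly larger than $\MIC$ and a neighbourhood $\cN_\cS$ of the identity so that for every $\lambda \in \cN_\cS$ both $\lambda\MIC \subset \MIC_0$ and $\MIC_0^c \subset \lambda(\MIC^c)$ hold, where $\lambda(\MIC^c)$ denotes the image of $\MIC^c$ under the spacetime action of $\lambda$, so that $\Al(\lambda(\MIC^c)) = \alpha_\lambda(\Al(\MIC^c))$. The second inclusion is the geometric heart of the argument; it is plausible because $\MIC_0^c$ stays a uniform distance away from the apex of $\FL$ whereas $\FL \setminus \lambda\FL$ is a thin neighbourhood of that apex when $\lambda$ is close to the identity. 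For any $A \in \Al(\MIC_0^c)$ we can then write $A = \alpha_\lambda(B)$ with $B \in \Al(\MIC^c)$, and the transport relation ${}^\lambda\sigma\compos\alpha_\lambda = \alpha_\lambda\compos\sigma$ together with $\sigma\rest\Al(\MIC^c) = \iota$ gives ${}^\lambda\sigma(A) = \alpha_\lambda(\sigma(B)) = A$. Since also $\sigma(A) = A$, the intertwining relation $\Gamma_\lambda\,{}^\lambda\sigma(A) = \sigma(A)\,\Gamma_\lambda$ collapses to $[\Gamma_\lambda, A] = 0$, so hypercone duality~(\ref{duality}) yields $\Gamma_\lambda \in \Al(\MIC_0^c)' \cap \RiV = \Ri(\MIC_0)$.

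Next I would lift to the covering group: letting $\widetilde{\cN}_\cS \subset \tSemi$ be the preimage of $\cN_\cS$ under $\widetilde{\lambda}\mapsto\lambda$ and tracing through the construction in Proposition~\ref{bargmann}, for $\widetilde{\lambda}\in\widetilde{\cN}_\cS$ the unitary $\tU_\sigma(\widetilde{\lambda})$ differs from $U_\sigma(\lambda) = \Gamma_\lambda U_0(\lambda)$ only by a Bargmann phase $\zeta(\widetilde{\lambda}) \in \TT$, so that $\tU_\sigma(\widetilde{\lambda}) U_0(\lambda)^{-1} = \zeta(\widetilde{\lambda})\,\Gamma_\lambda \in \Ri(\MIC_0)$, as claimed. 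The hard part will be the geometric bookkeeping in step one: pinning down a single $\MIC_0 \in \cF$ only slightly larger than $\MIC$ that uniformly absorbs $\lambda\MIC$ and whose spacelike complement is still contained in $\lambda(\MIC^c)$ for every $\lambda$ in a fixed small neighbourhood of the identity. Intuitively, small Poincar\'e transformations displace $\MIC$ only a little while distorting $\FL$ only near its apex, so such a choice plainly exists, but a rigorous execution relies on the precise ordering and interpolation properties of hypercones developed in the appendix.
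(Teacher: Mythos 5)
Your reduction of the lemma to showing $\Gamma_\lambda\in\Ri(\MIC_0)$ for $\lambda$ in a small neighbourhood of the identity, and the final lift through the Bargmann construction (the cocycle changes only by a phase), are both fine and agree with the paper. The gap is in the geometric step, and it is not a matter of bookkeeping: the inclusion $\MIC_0^{\,c}\subset\lambda(\MIC^c)$ is \emph{false} for the translation part of $\lambda$, no matter how small, and no choice of hypercone $\MIC_0$ repairs it. For a future--directed timelike translation $x$ one has $\lambda(\MIC^c)=\MIC^c+x\subset V+x$, and $V\setminus(V+x)$ is not a thin neighbourhood of the apex: it is the unbounded shell of all points of $V$ lying within lightlike ``distance'' roughly $x_0-|\bx|$ of the \emph{entire} boundary $\partial V$. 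On the other hand the spacelike complement of any hypercone $\MIC_0$ contains double cones arbitrarily close to $\partial V$ (in the directions of $S^2$ away from the base of $\MIC_0$), since the complement of the underlying hyperbolic cone reaches spacelike infinity on the hyperboloid. Such double cones are not contained in $V+x$, so $\Al(\MIC_0^{\,c})\not\subset\alpha_x(\Al(\MIC^c))$ and one cannot write a general $A\in\Al(\MIC_0^{\,c})$ as $\alpha_x(B)$ with $B\in\Al(\MIC^c)$; the duality argument then has nothing to bite on. Note that every neighbourhood of the identity in $\Semi$ contains strictly timelike translations, so this cannot be evaded. Your argument does work verbatim for pure Lorentz transformations, because $\Lambda V=V$ and one only needs a hypercone $\MIC_1\supset\MIC\bigcup\Lambda\MIC$ --- and that is exactly the easy first step of the paper's proof.

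The translations are the real content of the lemma; the paper flags this explicitly (``the family of hypercones based on a given hyperboloid is not stable under translations''), and its workaround is considerably more elaborate. It first localizes $\Gamma_l$ for lightlike $l$ by combining the cocycle identity with a sequence of boosts $\Lambda_n$ satisfying $\Lambda_n l_n=l$ for lightlike $l_n\rightarrow 0$, proving $\alpha_{\Lambda_n}(\Gamma_{l_n})\rightarrow 1$ in the strong operator topology via the invariance of $\Omega$ under $U_0(\Lambda_n)$, locality, and the Reeh--Schlieder property; this gives $\Gamma_l\in\Ri(\MIC_2)$ for $l$ ranging over an open cone of lightlike directions. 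General small translations are then reassembled from lightlike ones through $\Gamma_x=\pm\,\Gamma_{k+x}\,\alpha_x(\Gamma_k)^{-1}$. Some version of this mechanism is needed; the direct commutation-plus-duality argument cannot be made to work for timelike $x$.
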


\begin{proof}
We put $\widetilde{\Gamma}_{\widetilde{\lambda}} 
\doteq \tU_\sigma(\widetilde{\lambda}) \,
U_0(\lambda)^{-1}$, $\widetilde{\lambda} \in \tPoin$. 
These unitaries satisfy the cocycle equation 
$\widetilde{\Gamma}_{\widetilde{\lambda}} 
\alpha_\lambda(\widetilde{\Gamma}_{\widetilde{\mu}})
= \widetilde{\Gamma}_{\widetilde{\lambda} \widetilde{\mu}}$. 
Moreover, 
\begin{equation} \label{locinfo}
\mbox{Ad} \,  \Gamma_{{\lambda}} \compos \alpha_\lambda \compos \sigma 
= \sigma \compos \alpha_\lambda \, ,
\quad  {\lambda} \in \Semi \, , 
\end{equation}
where we have set $\Gamma_\lambda \doteq \pm \, 
\widetilde{\Gamma}_{\widetilde{\lambda}}$
since phases drop out in the adjoint action. In the 
subsequent argument we anticipate the existence of  
hyperbolic cones with certain 
specific geometric properties. This will be justified
at the end of the proof.
Evaluating the preceding equality, the localization of $\sigma$, 
for Lorentz transformations,
${\Lambda} \in  \Lore$, gives 
$\mbox{Ad} \, \Gamma_{{\Lambda}} \rest \Al(\MICc_1) = \iota$ 
for any hypercone \mbox{$\MIC_1 \supset \MIC \bigcup \Lambda \MIC$}.
Thus $\pm \, \Gamma_{{\Lambda}} \in \Ri(\MIC_1)$ by
hypercone duality and making $\MIC_1$ sufficiently big,
this inclusion holds for all $\Lambda$ in some neighbourhood
of the identity $\cN^L \subset \Lore$.

As the family of hypercones based on a given
hyperboloid is not stable under translations, analyzing 
the localization of the 
corresponding cocycles requires more work. Since $\MIC_1\supset\MIC$,  
$\sigma(\Al(\MIC_1))^- =
\Ri(\MIC_1)$ 
equation (\ref{locinfo}) allows us to conclude that 
$ \mbox{Ad} \, \Gamma_{{\lambda}} \compos \alpha_\lambda
(\Ri(\MIC_1))  \subset  \Ri(\MIC_2) $
for $\lambda \in \Semi$ and any 
hypercone $\MIC_2 \supset \MIC \bigcup \lambda \MIC_1$.
We now use the cocycle equation.
Since $(0,\Lambda) \, (x,1) = (\Lambda x, 1) (0,\Lambda)$,  
$\Gamma_\Lambda \, \alpha_\Lambda (\Gamma_x) = 
\pm \, \Gamma_{\Lambda x} \, \alpha_{\Lambda x} (\Gamma_\Lambda) $
for $x \in \Trans$, $\Lambda \in \Lore$ and consequently 
$$
\alpha_\Lambda(\Gamma_x) \, \Gamma_{\Lambda x}^{-1}
= \pm \, \Gamma_\Lambda^{-1} \ 
\Gamma_{\Lambda x} \, \alpha_{\Lambda x}(\Gamma_\Lambda) \, 
\Gamma_{\Lambda x}^{-1} \, \in \, \Ri(\MIC_2) \, ,
$$ 
provided $\MIC_1 \supset \MIC  \bigcup \Lambda \MIC$
and $\MIC_2 \supset  \MIC_1 \bigcup \, (\MIC_1 + \Lambda x)$. 
We exploit this information choosing sequences of boosts 
and translations $\Lambda_n \in \Lore$, $l_n \in \Trans$ 
where $\Lambda_n l_n = l$ is a fixed (lightlike) vector, 
$\MIC_1 \supset \Lambda_n \MIC$  and $l_n$ tends to $0$. Thus
$\alpha_{\Lambda_n}(\Gamma_{l_n}) \, \Gamma_l^{-1} \in \Ri(\MIC_2)$,
$n \in \NN$, where $\MIC_2 \supset \MIC_1 \bigcup (\MIC_1 + l)$. Now 
$\alpha_{\Lambda_n}(\Gamma_{l_n}) \, \Omega \rightarrow \Omega$ 
since $\Gamma_{l_n} \rightarrow 1$ in the strong operator
topology  and $U_0(\Lambda_n) \, \Omega = \Omega$. 
Moreover, if $l \in \Trans$ is sufficiently close to $0$ there
is a double cone $\MIO \subset \MICc_1 \bigcap (\MICc_1 + l)$,
and, after a moment's reflection, relation~(\ref{locinfo}) shows that all
operators $\alpha_{\Lambda_n}(\Gamma_{l_n})$, $n \in \NN$, commute
with the elements of~$\Al(\MIO)$. Hence, by the Reeh--Schlieder
property of the vacuum, 
$\alpha_{\Lambda_n}(\Gamma_{l_n}) \rightarrow 1$ in the strong
operator topology and consequently $\Gamma_l \in \Ri(\MIC_2)$. 
Varying the direction of the chosen sequence of boosts slightly,  
the convex hull $\cK^T \subset \Trans$ of the resulting lightlike 
vectors $l$ is the tip of a convex cone 
with open interior. Using the 
cocycle equation once more \mbox{$\Gamma_{l_1} 
\alpha_{l_1}(\Gamma_{l_2}) = \pm \, \Gamma_{l_1 + l_2}$}, 
we see that  $\Gamma_{k} \in \Ri(\MIC_2)$ if $k \in \cK^T$.  
Given any neighbourhood $\cN^T_+ \subset \Trans$ of $0$,
$\alpha_x(\Gamma_k) \in \Ri(\MIC_0)$, $x \in \cN^T_+$, 
for any hypercone $\MIC_0 \supset \MIC_2 + \cN^T_+$.
Now if $\cN^T_+$ is small enough, there is a 
$k \in \cK^T$ such that $k + \cN^T_+ \subset \cK^T$. 
Since $\Gamma_x =  \pm \Gamma_{k +x} \alpha_x(\Gamma_k)^{-1}$, 
$\Gamma_x \in \Ri(\MIC_0)$, $x \in \cN^T_+$. 
Choosing $\cN_\cS \doteq \cN^T_+ \times \cN^L \subset \Semi$,
a final application of the cocycle equation yields 
${\Gamma}_\lambda \in \Ri(\MIC_0)$, $\lambda \in \cN_\cS$, 
and identifying $\widetilde{\cN}_\cS$ with $\cN_\cS$
by the covering map the result follows. 

It remains to summarize the geometrical assumptions on hypercones
made in the preceding argument. Given $\MIC$ we anticipated 
that there is a hypercone \mbox{$\MIC_1 \supset \Lambda \MIC$} 
for $\Lambda$ varying in some neighbourhood \mbox{$\cN^L \subset
  \Lore$} of the identity. In addition, we required that there were 
sequences of boosts $\Lambda_n$ acting in a fixed direction
(within some open set of directions), such that 
$\Lambda_n \, \MIC \subset \MIC_1$, $n \in \NN$ and 
lightlike vectors $l_n \rightarrow 0$
scaling as $\Lambda_n l_n = l$, $n \in \NN$
for fixed $l$. 
These properties characterize the hypercone $\MIC_1$. Moreover, we needed that, 
given a bounded set of translations $\cB^T \subset \Trans$ there
is a hypercone $\MIC_2 \supset \MIC_1 + \cB^T$. 
The result then followed. The existence of such
hypercones is 
established  in the appendix, cf.~\ref{A.11}, \ref{A.12} and  \ref{A.13}. 
\end{proof}

This information allows us to establish the basic   
properties of covariant morphisms and their charge classes  
under composition and conjugation. 

\begin{definition}
Let $\mathscr{C}$ be a simple charge class. The class is said
to be covariant if there is a covariant morphism 
$\sigma \in \Sigma(V)$ with 
$\omega_0 \compos \sigma \in \mathscr{C}$. The family of  
covariant morphisms is denoted by $\Sigma_c(V) \subset \Sigma(V)$
and the subset of covariant morphisms localized in a given
hypercone $\MIC$ is denoted by $\Sigma_c(\MIC)$.  
\end{definition} 

\vspace*{0.1mm} 
\begin{theorem} \label{covariancethm}
The family  $\Sigma_c(V)$ of covariant morphisms
is stable under composition and conjugation. More explicitly, for
any pair $\sigma_1, \sigma_2 \in \Sigma_c(\MIC)$ one has
$\sigma_1 \product \sigma_2 \in \Sigma_c(\MIC)$ and for any 
$\sigma \in \Sigma_c(\MIC)$ there is a $\qsigma \in \Sigma_c(\MIC)$ 
such  that $\qsigma \product \sigma = \sigma \product \qsigma = \iota$. 
\end{theorem}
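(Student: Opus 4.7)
The plan is to exhibit continuous unitary representations $\tU_{\sigma_1 \product \sigma_2}$ and $\tU_{\qsigma}$ associated with the composed and conjugate morphisms, from which the data $({}^\lambda\sigma, \Gamma_\lambda)$ required by the definition of covariance can be read off. Both constructions rest on two ingredients. First, Lemma~\ref{cocycleloc} guarantees that the cocycles $\Gamma^{\sigma_i}_\lambda \doteq \tU_{\sigma_i}(\widetilde\lambda)\, U_0(\lambda)^{-1}$ lie in $\Ri(\MIC_0)$ for some enlarged hypercone $\MIC_0 \supset \MIC$ and $\widetilde\lambda$ in a neighbourhood $\widetilde{\cN}_{\cS} \subset \tSemi$ of the identity. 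Second, Lemma~\ref{extensions}(i) and the explicit construction in the proof of Proposition~\ref{prop2} allow extending the morphisms in $\Sigma(\MIC)$ to domains containing $\Ri(\MIC_0)$ and acting normally there.

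For the composition, I choose $\sigma_1, \sigma_2 \in \Sigma_c(\MIC)$ and fix an extension $\widetilde\sigma_1$ of $\sigma_1$ normal on $\Ri(\MIC_0)$, using a funnel in $\MIC^c$ whose opposite cones eventually contain $\MIC_0$. Then I set
$$ \tU_{\sigma_1 \product \sigma_2}(\widetilde\lambda) \doteq \widetilde\sigma_1(\Gamma^{\sigma_2}_\lambda) \, \tU_{\sigma_1}(\widetilde\lambda), \qquad \widetilde\lambda \in \widetilde{\cN}_{\cS}. $$
The intertwining relation $\sigma_1 \compos \alpha_\lambda = \mbox{Ad}\, \tU_{\sigma_1}(\widetilde\lambda) \compos \sigma_1$ extends by normality to $\widetilde\sigma_1 \compos \alpha_\lambda = \mbox{Ad}\, \tU_{\sigma_1}(\widetilde\lambda) \compos \widetilde\sigma_1$ on the extended domain. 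A direct computation using this and the analogous identity for $\sigma_2$ yields $\mbox{Ad}\, \tU_{\sigma_1 \product \sigma_2}(\widetilde\lambda) \compos (\sigma_1 \product \sigma_2) = (\sigma_1 \product \sigma_2) \compos \alpha_\lambda$. The associated cocycle $\Gamma_\lambda \doteq \tU_{\sigma_1 \product \sigma_2}(\widetilde\lambda)\, U_0(\lambda)^{-1} = \widetilde\sigma_1(\Gamma^{\sigma_2}_\lambda)\, \Gamma^{\sigma_1}_\lambda$ lies in $\RiV$, and strong continuity of $\widetilde\lambda \mapsto \Gamma_\lambda$ is inherited from that of the $\Gamma^{\sigma_i}_\lambda$ via the normality of $\widetilde\sigma_1$. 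Combined with Proposition~\ref{prop1}(iii), this places $\sigma_1 \product \sigma_2$ in $\Sigma_c(\MIC)$, with covariant family ${}^\lambda(\sigma_1 \product \sigma_2) \doteq \mbox{Ad}\, \Gamma_\lambda^{-1} \compos (\sigma_1 \product \sigma_2)$ and intertwiners $\Gamma_\lambda$ satisfying (\ref{covtransport}) because $\tU_{\sigma_1 \product \sigma_2}$ is a local projective representation.

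For the conjugate, I start with $\sigma \in \Sigma_c(\MIC)$ and recall from Proposition~\ref{prop2} that $\qsigma \in \Sigma(\MIC)$ is built as the pointwise norm limit of $\mbox{Ad}\, W_n^{-1}$ on the inductive limit $\qAl(V) = \varinjlim \Ri(\MIC_k)$, for an increasing sequence of hypercones $\MIC_k \supset \MIC$ with $\MIC_k \nearrow V$, and is normal on each $\Ri(\MIC_k)$. I choose this sequence so that $\MIC_0 \subset \MIC_k$ for all sufficiently large $k$; then $\Gamma^{\sigma}_\lambda \in \Ri(\MIC_0) \subset \qAl(V)$ lies in the domain of the extended $\widetilde\qsigma$. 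Set
$$ \tU_{\qsigma}(\widetilde\lambda) \doteq \widetilde\qsigma(\Gamma^{\sigma}_\lambda)^{-1}\, U_0(\lambda), \qquad \widetilde\lambda \in \widetilde{\cN}_{\cS}. $$
Applying $\widetilde\qsigma$ to $\sigma\compos\alpha_\lambda(A) = \Gamma^{\sigma}_\lambda \, \alpha_\lambda(\sigma(A))\, (\Gamma^{\sigma}_\lambda)^{-1}$, valid for $A \in \AlV$, and using $\qsigma \compos \sigma = \iota$ gives $\alpha_\lambda(A) = \widetilde\qsigma(\Gamma^{\sigma}_\lambda)\, \widetilde\qsigma(\alpha_\lambda\sigma(A))\, \widetilde\qsigma(\Gamma^{\sigma}_\lambda)^{-1}$. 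Substituting $A = \qsigma(B)$ with $B \in \AlV$, invoking $\sigma\compos\qsigma = \iota$ and $\alpha_\lambda\qsigma(B) = U_0(\lambda)\qsigma(B)U_0(\lambda)^{-1}$, the identity rearranges into the covariance relation $\qsigma\compos\alpha_\lambda(B) = \tU_{\qsigma}(\widetilde\lambda)\, \qsigma(B)\, \tU_{\qsigma}(\widetilde\lambda)^{-1}$. Strong continuity of $\widetilde\lambda \mapsto \tU_{\qsigma}(\widetilde\lambda)$ follows from that of $\widetilde\lambda \mapsto \Gamma^{\sigma}_\lambda$ via normality of $\widetilde\qsigma$ on $\Ri(\MIC_0)$, placing $\qsigma$ in $\Sigma_c(\MIC)$.

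The main technical obstacle is bookkeeping of the various extensions and their domains: one must choose the auxiliary funnels and sequences of hypercones so that $\Gamma^{\sigma_2}_\lambda$, respectively $\Gamma^{\sigma}_\lambda$, fall within the extended domains, and verify that normality transmits strong continuity of $\widetilde\lambda \mapsto \Gamma^{\bullet}_\lambda$ to its image under $\widetilde\sigma_1$ or $\widetilde\qsigma$. The conceptually decisive input is the localization of the cocycles in $\Ri(\MIC_0)$ supplied by Lemma~\ref{cocycleloc}; without it the expressions $\widetilde\sigma_1(\Gamma^{\sigma_2}_\lambda)$ and $\widetilde\qsigma(\Gamma^{\sigma}_\lambda)$ would not even make sense.
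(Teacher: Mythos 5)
Your construction coincides with the paper's: the composed cocycle $\widetilde{\sigma}_1(\Gamma^{\sigma_2}_\lambda)\,\Gamma^{\sigma_1}_\lambda$ and the conjugate cocycle $\widetilde{\qsigma}(\Gamma^{\sigma}_\lambda)^{-1}=\widetilde{\qsigma}\big((\Gamma^{\sigma}_\lambda)^{-1}\big)$ are exactly the intertwiners $\Gamma_{1 2\,\lambda}$ and $\quer{\Gamma}_\lambda$ used there, resting on the same two inputs, namely Lemma~\ref{cocycleloc} for the localization of the cocycles in $\Ri(\MIC_0)$ and Lemma~\ref{extensions} for extensions normal on $\Ri(\MIC_0)$. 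The one point you assert rather than verify is the cocycle identity $\alpha_\lambda(\Gamma_\mu)=\Gamma_\lambda^{-1}\Gamma_{\lambda\mu}$ demanded by (\ref{covtransport}) --- appealing to ``$\tU_{\sigma_1\product\sigma_2}$ is a local projective representation'' is circular, since that property is equivalent to the identity in question, and for $\qsigma$ the condition is not addressed at all --- but the computation is routine from the cocycle equations for $\Gamma^{\sigma_1},\Gamma^{\sigma_2}$ together with the covariance and normality of $\widetilde{\sigma}_1$ (respectively $\widetilde{\qsigma}$), and the paper carries it out explicitly.
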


\begin{proof} Since all morphisms in the equivalence class of 
a covariant morphism are covariant there is no loss of 
generality in picking any hypercone 
$\MIC$ and morphisms $\sigma_1, \sigma_2 \in \Sigma_c(\MIC)$.
Let $\tU_j$ be the associated unitary 
representations of $\tPoin$ and let 
$\widetilde{\Gamma}_j(\widetilde{\lambda}) = 
\tU_j(\widetilde{\lambda}) U_0(\lambda)^{-1}$, 
$\widetilde{\lambda} \in \tPoin$
be the corresponding cocycles,  $j = 1,2$. By the
preceding lemma there is a hypercone $\MIC_0 \supset \MIC$
and a neighbourhood $\cN_\cS \subset \Semi$ 
of the identity (the image of $\widetilde{\cN}_\cS$
under the covering map) such that 
$\Gamma_{j \lambda} \in \Ri(\MIC_0)$ for $\lambda \in \cN_\cS$, 
$j = 1,2$. We choose an extension $\tild{\sigma}_1$ of 
$\sigma_1$ normal on 
$\Ri(\MIC_0)$  and having the range of $\sigma_2$ in its 
domain, cf.~Lemma \ref{extensions}(i). Then 
$$
\sigma_1 \product \sigma_2 \compos \alpha_\lambda =
\tild{\sigma}_1 \compos \mbox{Ad} \, \Gamma_{2 \lambda} 
\compos \alpha_\lambda \compos \sigma_2  =
\mbox{Ad} \, \tild{\sigma}_1 (\Gamma_{2 \lambda}) \, \Gamma_{1 \lambda}
\compos \alpha_\lambda \compos \sigma_1 \product \sigma_2 \, ,
\quad \lambda \in \cN_\cS \, .
$$
Putting 
$\Gamma_{1 2 \, \lambda} \doteq \tild{\sigma}_1(\Gamma_{2 \, \lambda})
\,  \Gamma_{1 \, \lambda}$, \ 
$\sigma_{1 2} = \sigma_1 \product \sigma_2$ and  \ 
${}^\lambda \sigma_{1 2} \doteq 
\mbox{Ad} \, \Gamma_{12 \, \lambda}^{-1} \compos \sigma_{1 2}$, 
the preceding equality reads  
${}^\lambda \sigma_{1 2} \compos \alpha_\lambda 
= \alpha_\lambda \compos \sigma_{1 2}$ and 
$\Gamma_{1 2 \, \lambda} \in ({}^\lambda \sigma_{1 2},  \sigma_{1
  2})$,  $\lambda \in \cN_\cS$.
Moreover, if 
$\lambda, \mu, \lambda \mu \in  \cN_\cS$, 
$\Gamma_{12 \, \lambda}^{-1} \, \Gamma_{12 \, \lambda \mu} \in
({}^{\lambda \mu} \sigma_{1 2} , {}^\lambda \sigma_{1 2} )$ and,  
on the algebra $\AlV$, 
$$
\Gamma_{12 \, \lambda}^{-1} \, \Gamma_{12 \, \lambda \mu} 
= \Gamma_{1 \lambda}^{-1} \, \tild{\sigma}_1(\Gamma_{2 \lambda}^{-1}
\,  \Gamma_{2 \lambda \mu}) \, \Gamma_{1 \lambda \mu}
=  \Gamma_{1 \lambda}^{-1} 
\tild{\sigma}_1(\alpha_\lambda(\Gamma_{2 \mu})) 
\Gamma_{1 \lambda} \alpha_\lambda(\Gamma_{1 \mu})
= \alpha_\lambda(\Gamma_{1 2 \, \mu}) \, ,
$$
where we used the cocycle equations for 
$\Gamma_1, \Gamma_2$ (second equality) and the
covariance of $\sigma_1$ (third equality).
Thus $ \alpha_\lambda(\Gamma_{1 2 \, \mu}) 
\in ({}^{\lambda \mu} \sigma_{1 2} , {}^\lambda \sigma_{1 2} )$. 
Since $\lambda \mapsto \Gamma_{12 \, \lambda}$ is continuous
on $\cN_\cS$ in the strong operator topology by the 
continuity of the cocycles $\Gamma_1$, $\Gamma_2$
and the normality of $\tild{\sigma}_1$ on $\Ri(\MIC_0)$, 
$\sigma_{1 2}$ is covariant, \ie 
$\sigma_{1} \product \sigma_2 \in \Sigma_c(\MIC)$.  

Turning to conjugation, let $\sigma \in \Sigma_c(\cC)$, $\tU(\lambda)$
its associated unitary representation and 
cocycle $\Gamma_\lambda$, $\lambda \in \cN_\cS$,  
and let $\qsigma \in \Sigma(\cC)$ be the conjugate morphism which exists
by Proposition \ref{prop2}. We choose extensions
$\tild{\sigma}$, $\tild{\qsigma}$ of $\sigma$, $\qsigma$
with a common stable domain and normal on $\Ri(\MIC_0)$,  
cf.\ Lemma \ref{extensions}(i). Note that 
by continuity $\tild{\qsigma} \compos \tild{\sigma} =
\tild{\sigma} \compos \tild{\qsigma} = \iota$ on this domain
and $\tild{\qsigma}(\Ri(\MIC_0)) \subset \Ri(\MIC_0)$ since 
$\MIC_0 \supset \MIC$.  Putting
$\quer{\Gamma}_\lambda \doteq \tild{\qsigma}(\Gamma_\lambda^{-1}) $ and 
${}^\lambda \qsigma \doteq \mbox{Ad} \, \quer{\Gamma}_\lambda^{-1} \compos \qsigma
= \tild{\qsigma} \compos \mbox{Ad} \, \Gamma_\lambda$, $\lambda \in
\cN_\cS$, we can compute on~$\AlV$ 
$$
\tsigma \compos {}^\lambda \qsigma \compos \alpha_\lambda =
\tsigma \compos  \tild{\qsigma} \compos \mbox{Ad} \, \Gamma_\lambda
\compos \alpha_\lambda =
\mbox{Ad} \, \Gamma_\lambda \compos \alpha_\lambda =
\mbox{Ad} \, \Gamma_\lambda \compos \alpha_\lambda \compos \tsigma
\compos \tild{\qsigma} =
\tsigma \compos \alpha_\lambda  \compos \qsigma \, . 
$$
Composing this equality on the left with 
$\tild{\qsigma}$ gives 
${}^\lambda \qsigma \compos \alpha_\lambda =
\alpha_\lambda  \compos \qsigma$ 
and, by construction, 
{$\quer{\Gamma}_\lambda \in ({}^\lambda \qsigma, \qsigma) $},
$\lambda \in \cN_\cS$.  
Moreover, if $\lambda, \mu, \lambda \mu \in \cN_\cS$, 
$$
\tsigma(\quer{\Gamma}_\lambda^{-1} \, \quer{\Gamma}_{\lambda \mu}) =
\Gamma_\lambda \, \alpha_\lambda(\Gamma_\mu^{-1}) \, \Gamma_\lambda^{-1} =
\mbox{Ad} \, \Gamma_\lambda \compos \alpha_\lambda \compos
\tsigma \compos \tild{\qsigma} (\Gamma_\mu^{-1}) =
\tsigma \compos \alpha_\lambda \, 
 (\tild{\qsigma} \, (\Gamma_\mu^{-1})) \, ,
$$
on $\AlV$, where the first equality uses the cocycle equation. 
Composing with $\tild{\qsigma}$ gives 
$ \alpha_\lambda(\quer{\Gamma}_\mu) = 
\alpha_\lambda(\tild{\qsigma} \, (\Gamma_\mu^{-1})) 
= \quer{\Gamma}_\lambda^{-1} \, \quer{\Gamma}_{\lambda \mu} \in 
({}^{\lambda \mu} \qsigma, \qsigma)$. The 
normality of $\tild{\qsigma}$ implies
that $\lambda \mapsto \quer{\Gamma}_\lambda$ is continuous on 
$\cN_\cS$. Thus $\qsigma$ is covariant, \ie 
$\qsigma \in \Sigma_c(\MIC)$,  completing the proof.
\end{proof}

The preceding results show that, even in the presence of charges, 
the energy and the angular momentum of the partial states in 
the light cone $V$ can be defined in a mathematically precise 
and physically meaningful way. 
Given a simple charge class, this
information is encoded in the  spectral properties of the 
generators of the unitary representations $\tU_\sigma$ 
of $\tPoin$, where $\sigma$ is any one of the equivalent morphisms 
associated with the class. Yet, even though these generators
are uniquely fixed and, as we have seen, can be reconstructed from 
data in $V$, they should not be interpreted as 
genuine quantum observables since they are not affiliated with  
the algebra $\sigma(\AlV)^-$. They contain not only
pertinent information about the
states  in $V$ but also, in a consistent, though 
hypothetical way, some information on outgoing massless 
particles (radiation) created in the past that evades direct
observations in $V$. This hypothetical input enters with the condition  
of covariance expressing, within the mathematical setting, 
the postulate that measurements and operations can be repeated 
at any time using exactly the same procedures. The generators 
incorporate the implicit assumption that this postulate  
applies to the distant past, too, involving, as it does the 
chosen extension of time translations from the semigroup action 
to the full group. Although this hypothesis seems plausible and is 
fully consistent with physical predictions, it cannot 
be verified experimentally. This  explains 
why the  generators are not described by 
observables in the light cone $V$.

\section{Spectral properties}
\label{spectrum} 
\setcounter{equation}{0}

We now analyze the spectral 
properties of covariant charge classes and
their associated covariant morphisms. As we shall see, the energy
momentum of the corresponding states is bounded below, 
in accordance with the physical idea that these states describe stable 
elementary systems. However,
the standard ``additivity of the energy''
argument used in sector analysis 
can not be applied in the present setting,  
because the necessary asymptotic commutativity of the
transported morphisms is lacking. Yet, using the asymptotic 
commutation of hypercone localized operators 
under the action of suitable Lorentz boosts, we can  
establish a somewhat weaker spectral result, 
still enabling us to prove  that all covariant 
charge classes satisfy the spectrum condition. 
We begin by stating the main technical 
result of this section.

\begin{lemma} \label{technical}
Let 
$\sigma_1, \, \sigma_2, \, \sigma_1  \product \sigma_2 \in \Sigma(\MIC)$
be covariant morphisms 
and let $\tU_{\sigma_1}, \tU_{\sigma_2}$ and $\tU_{\sigma_1 \product
  \sigma_2}$ 
be the corresponding unitary
representations of the (covering group of the) Poincar\'e group. 
If $\tU_{\sigma_1} \rest \RR^4$ or 
$\tU_{\sigma_2} \rest \RR^4$ (or both) violate the 
relativistic spectrum condition,
then $\tU_{\sigma_1 \product \sigma_2} \rest \RR^4$ also 
violates the spectrum condition.
\end{lemma}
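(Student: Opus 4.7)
The plan is to exploit the cocycle formula for composition established in the proof of Theorem~\ref{covariancethm}, combined with an asymptotic spacelike-separation argument driven by Lorentz boosts to simulate the standard additivity-of-energy-momentum argument of sector analysis. Writing $V_i(x) \doteq \tU_{\sigma_i}(x) U_0(x)^{-1}$ for the cocycles associated to $\sigma_i$, $i=1,2$, the proof of Theorem~\ref{covariancethm} gives
$$ V_{12}(x) = \tsigma_1(V_2(x)) \, V_1(x) \, ,  \qquad x \in \RR^4 \, ,$$
so that $\tU_{\sigma_1\product\sigma_2}(x) = \tsigma_1(V_2(x)) V_1(x) U_0(x)$. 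It suffices to show that if this representation has spectrum in $\Spec$, then so do both $\tU_{\sigma_1}\rest\RR^4$ and $\tU_{\sigma_2}\rest\RR^4$.

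The first main step is a localization reduction. Using that each $\sigma_i$ is equivalent, via unitary intertwiners in $\RiV$, to covariant morphisms localized in arbitrarily chosen hypercones within the relevant family, I would replace $\sigma_1, \sigma_2$ by $\sigma_1^{(s)}, \sigma_2^{(s)}$ localized in a pair of hypercones $\MIC_1^{(s)}, \MIC_2^{(s)}$ obtained by Lorentz boosts of increasing rapidity $s$ in opposite spatial directions. By Lemma~\ref{cocycleloc} the transported cocycles $V_i^{(s)}$ live, on a sufficiently small neighbourhood of the identity, in enlarged hypercones $\MIC_{i,0}^{(s)}$ whose shapes are controlled purely by the geometry; the arguments in the appendix guarantee that these can be made mutually spacelike for $s$ large enough. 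Since all involved equivalences are unitary in $\RiV$, the spectra of $\tU_{\sigma_i^{(s)}}$ and $\tU_{\sigma_1^{(s)} \product \sigma_2^{(s)}}$ coincide with those of their unboosted counterparts.

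In this separated configuration, $\sigma_1^{(s)}$ acts trivially on $V_2^{(s)}(x)$ by hypercone duality and property (a) of the criterion, so the cocycle formula collapses to $V_{12}^{(s)}(x) = V_2^{(s)}(x) V_1^{(s)}(x)$ with $[V_1^{(s)}(x), V_2^{(s)}(x)] = 0$. Hence
$$ \tU_{\sigma_1^{(s)} \product \sigma_2^{(s)}}(x)
 = V_2^{(s)}(x) \tU_{\sigma_1^{(s)}}(x)
 = V_1^{(s)}(x) \tU_{\sigma_2^{(s)}}(x) \, ,$$
yielding
$$ \tU_{\sigma_1^{(s)}\product\sigma_2^{(s)}}(x)  = V_1^{(s)}(x) V_2^{(s)}(x) U_0(x).$$
The spectrum of the joint action of $\RR^4$ by these pairwise commuting cocycles acting on the vacuum unitaries can then be analyzed by the standard Arveson-spectral/Borchers-type machinery: the spectrum of $\tU_{\sigma_1^{(s)}\product\sigma_2^{(s)}}\rest\RR^4$ contains the set-sum of the Arveson spectra of $V_1^{(s)}, V_2^{(s)}$ translated by $\mathrm{sp}(U_0)$, and in particular, a violation of the spectrum condition by $\tU_{\sigma_1}\rest\RR^4$ propagates to a violation by $V_1^{(s)}$ and hence by the product representation.

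The main obstacle, I expect, will be the analytic control in step three: the $V_i^{(s)}$ are only cocycles (not one-parameter groups) with respect to the translation action, and they are only weakly continuous, so the additivity of spectra that is routine in the tensor-product setting of \cite{DoHaRo1,BuFr} has to be replaced by an Arveson-spectral inclusion for the commuting cocycles $V_1^{(s)}, V_2^{(s)}$. A careful use of the vacuum as a reference state (recalling that $\Omega$ is cyclic and separating and that $U_0(x)\Omega = \Omega$) together with a Laplace-transform / edge-of-the-wedge argument analogous to part~(iii) of Proposition~3.2 should convert the weak spectral inclusion into the sharp statement: if the generators of $\tU_{\sigma_1^{(s)}\product\sigma_2^{(s)}}\rest\RR^4$ lie in $\Spec$, the same holds for those of $\tU_{\sigma_i}\rest\RR^4$, yielding the contrapositive of the lemma.
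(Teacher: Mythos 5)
Your proposal is, at bottom, the standard ``additivity of energy--momentum'' argument of sector analysis: separate the two charges spacelike, obtain commuting cocycles, and add spectra. The paper states at the opening of Sect.~6 that precisely this argument is unavailable here, and the obstruction sits exactly at your second step. Lemma~\ref{cocycleloc} localizes the cocycle $V_2(x)=\tU_{\sigma_2}(x)U_0(x)^{-1}$ in a fixed hypercone $\MIC_0\supset\MIC_2$ only for $x$ in a \emph{neighbourhood of the identity}; for general future-directed translations the cocycle identity spreads the localization of $V_2(x)$ over a swept region of the form $\bigcup_{0\le\vartheta\le\theta}(\MIC_0+\vartheta l)$, which grows into the future without bound. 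To extract spectral information about $T_{12}$ from the composition formula you need $\tsigma_1$ to act trivially on $V_2(x)$, and $[V_1(x),V_2(x)]=0$, for an \emph{unbounded} set of translations; but since only future-directed translations are available, the translated localization regions inevitably enter the causal future of the localization cone of $\sigma_1$, whatever boosted cones $\MIC_1^{(s)},\MIC_2^{(s)}$ you start from. Hence the collapse $V_{12}(x)=V_2(x)V_1(x)$ with commuting factors cannot be arranged globally, the claimed ``set-sum of Arveson spectra'' inclusion has no foundation, and the argument fails at the point where the paper says it must.

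The paper's proof rests on two ideas absent from your outline. First, Lorentz invariance of the spectrum reduces the spectrum condition to the negativity of the spectrum of a \emph{single} lightlike one-parameter group $T(\theta)=\tU(\theta l,1)$, so only translations along one ray need be controlled. Second, the negative-spectrum content of $T_2$ is packaged into smeared operators $A_n=B_2(\beta_n)\int d\theta\, f(\theta)\,T_2(\theta+\theta_n)A\,T_0(\theta+\theta_n)^{-1}B_0(\beta_n)^{-1}$ with $\hat f$ supported off $\overline{\RR}_+-\kappa_2$; these are shifted into $V$ by $\theta_n$, cut off to make them local, and boosted in the opposite lightlike direction $l'$ with rapidities $\beta_n\gg\theta_n$. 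The boosts contract the localization into a fixed region $\cT_\theta$, on which one fixed normal extension $\tsigma_1$ acts trivially because $\sigma_1$ is chosen localized in $\cT_\theta^{\,c}$, while simultaneously pushing the operators to spacelike infinity so that $A_{V,n}^*A_{V,n}$ becomes a central sequence whose limit is evaluated in the vacuum. The spectral conclusion then follows from the Fourier support properties of the rescaled functions $\theta\mapsto\beta_n f(\beta_n\theta)$ against the spectral measure of $T_{12}$, not from additivity of spectra. None of this machinery is recoverable from your sketch, so the proposal as it stands does not yield a proof.
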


\noindent Since the spectrum of translations in the vacuum class is
contained in $\Spec$, applying this lemma to covariant morphisms 
and their conjugates immediately gives the following basic result. 

\vspace*{2mm}
\begin{theorem} Let 
$\sigma \in \Sigma_c(V)$ be a covariant morphism then $\tU_\sigma$,  
the corresponding unitary
representation of the (covering group of the) Poincar\'e group
$\Poin$, satisfies the relativistic spectrum 
condition, \ie $\text{\rm sp} \, \tU_\sigma \rest \RR^4  \subset \Spec$.
\end{theorem}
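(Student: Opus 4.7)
The plan is to deduce this theorem from Lemma \ref{technical} by a short argument by contradiction, exploiting the existence of conjugates for covariant simple charges.

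Suppose, for contradiction, that there is a covariant morphism $\sigma \in \Sigma_c(V)$ whose associated representation $\tU_\sigma$ violates the spectrum condition, i.e.\ $\text{sp}\,\tU_\sigma \rest \RR^4 \not\subset \Spec$. By picking a localization I may assume $\sigma \in \Sigma_c(\MIC)$ for some hypercone $\MIC$. Theorem \ref{covariancethm} then produces a conjugate $\qsigma \in \Sigma_c(\MIC)$ such that $\sigma \product \qsigma = \qsigma \product \sigma = \iota$. The identity morphism $\iota$ trivially belongs to $\Sigma(\MIC)$, and its associated unitary representation of $\tPoin$ is the vacuum representation $U_0$ (lifted to the covering group), which satisfies the spectrum condition by Proposition 3.2(iii). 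So the hypotheses of Lemma \ref{technical} apply to the triple $\sigma_1 = \sigma$, $\sigma_2 = \qsigma$, $\sigma_1 \product \sigma_2 = \iota$, all lying in $\Sigma(\MIC)$ and all covariant.

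Now I apply Lemma \ref{technical}: since $\tU_{\sigma_1} \rest \RR^4 = \tU_\sigma \rest \RR^4$ violates the spectrum condition, it follows that $\tU_{\sigma_1 \product \sigma_2} \rest \RR^4$ also violates it. But $\sigma_1 \product \sigma_2 = \iota$, and the associated representation $\tU_\iota \rest \RR^4 = U_0 \rest \RR^4$ does satisfy the spectrum condition. This contradiction forces $\text{sp}\,\tU_\sigma \rest \RR^4 \subset \Spec$.

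The only subtlety I anticipate is ensuring that the identity $\iota$ is correctly identified with the covariant morphism whose associated representation is $U_0$, so that Lemma \ref{technical} can be legitimately invoked with $\sigma_2 = \qsigma$ playing the role of the second covariant morphism. This is immediate because the defining cocycle of $\iota$ can be taken to be trivial ($\Gamma_\lambda = 1$), uniqueness of the associated representation (established in the lemma just before Lemma \ref{cocycleloc}) then identifies $\tU_\iota$ with $U_0$. Everything else is a direct application of the machinery already established: existence of the conjugate in the covariant family (Theorem \ref{covariancethm}), and the key spectral monotonicity statement (Lemma \ref{technical}). Thus the theorem follows with essentially no further work beyond these two inputs.
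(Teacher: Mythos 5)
Your argument is exactly the paper's: the authors state that the theorem follows by "applying this lemma to covariant morphisms and their conjugates," i.e.\ by feeding the triple $\sigma$, $\qsigma$, $\sigma\product\qsigma=\iota$ into Lemma \ref{technical} and using that the vacuum representation $U_0$ satisfies the spectrum condition. Your handling of the one subtlety (identifying $\tU_\iota$ with $U_0$ via the uniqueness lemma, and securing covariance of $\qsigma$ from Theorem \ref{covariancethm}) is correct and matches what the paper leaves implicit.
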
 

The proof of the preceding lemma is rather technical. 
We therefore first outline  
the idea of the argument and subsequently 
explain the details. \smallskip 

If a representation $\tU$ of the (covering group of the)
Poincar\'e  group violates the spectrum condition,
then, for every positive lightlike vector $l$,
the unitary one--parameter group 
$\zeit \mapsto \tU(\zeit l,1)$ 
 has spectrum on the negative real axis. For, otherwise, 
as the spectrum is Lorentz invariant,
none of these groups would have spectrum on the negative real axis.
Hence the spectrum of $\tU \rest \RR^4$ would be contained in 
the intersection \ $\bigcap_{\, l} \, \{ p : pl \geq 0 \} 
= \Spec$, a contradiction. To explore the spectrum of
the one parameter groups, we fix some 
positive lightlike vector $l$ and a corresponding 
one parameter group \ $\Lambda$ 
of Lorentz boosts such that $\Lambda(\beta) \, l = \beta^{-1} \, l$,
$\beta \geq 1$. Note that there is always an opposite 
positive lightlike
vector $l^{\prime}$ scaling under the action of these 
boosts as
$\Lambda(\boost) \, l^\prime = \boost \, l^\prime$,
$\boost \geq 1$. To simplify notation in what follows,
we set \mbox{$T(\zeit) \doteq \tU(\zeit l, 1)$}, \ 
$B(\boost) \doteq \tU(0, \widetilde{\Lambda} (\boost))$, noting that
\mbox{$B(\boost) T(\zeit) = T(\boost^{-1} \zeit)  B(\boost)$}.      
Whereas these groups are globally defined, care is needed 
if one wants to determine their action
on morphisms, as they just act as endomorphisms. 
Since the spectral properties of the unitary representations
$\tU$ affiliated with a charge class do not depend on 
the particular choice of morphism we can
choose  the localization properties of the morphisms at will 
and adjust them to the geometric action
of the given boosts and translations. \smallskip

Let  $T_0$, $T_1$, $T_2$, $T_{1 \, 2}$  and 
 $B_0$,  $B_1$, $B_2$, $B_{1 \, 2}$ be the 
unitary groups corresponding to the given lightlike 
translations and boosts in the 
vacuum representation and in the representations induced
by the three given covariant families of morphisms.
If the spectrum condition is violated
in the charge class of $\sigma_2$ we choose its 
localization cone $\MIC_2$ to point 
asymptotically in the   
direction of the lightlike vector $l^\prime$ opposite to 
the given $l$. To analyze the spectral properties of 
$T_{1 \, 2}$ relative to those 
of $T_2$  we consider the sequence of operators 
\begin{equation} \label{testsequence} 
A_n \doteq B_2(\boost_n) \int \! d \zeit \, f(\zeit) \, 
T_2(\zeit + \zeit_n) \, A \, T_0(\zeit + \zeit_n)^{-1} \, B_0(\boost_n)^{-1} \, ,
\quad n \in \NN \, ,
\end{equation}
where $A \in \Al(\MIC_2)$ is any local operator, $f$ 
any test function and the integral is defined in the strong 
operator topology. These operators are designed to 
exploit the spectral properties of $T_2$ and
$T_0$ by choosing    
the support of the Fourier transform of $f$ appropriately. 
The usual strategy in sector analysis is  
to look at these operators
in the representation induced by $\sigma_1$ where they 
ought to give information on the spectral properties of $T_{1 2}$ 
relative to $T_1$. But there is a problem. The 
test functions $f$ are analytic, the integral extends over 
all of~$\RR$ and the resulting operators are not localized 
in the light cone $V$, \ie they do not lie in the domain 
of~$\sigma_1$. The sequence of 
shifts $\zeit_n$ serves to move them asymptotically into~$V$, 
without obliterating the information on 
the spectral properties of the operators. But another  
problem arises: they transport the operators 
into the future of the localization region 
of any given morphism $\sigma_1$ so that they can interfere 
with the associated charges. The resulting effects are difficult 
to control. 
The solution is to let  boosts in the direction of $l^\prime$ 
with an increasing sequence 
of rapidities act on the operators. As a result, 
the operators are contracted  
towards the boundary 
of~$V$ and pushed simultaneously
towards spacelike infinity in the direction of~$l^\prime$.  
Even though these boosts blur the information on the spectral 
properties of the operators, they do not mix
the positive and negative spectrum allowing us 
sufficient control on the spectrum of $T_{1 2}$.  
Choosing a morphism $\sigma_1$ whose localization cone 
avoids the asymptotic localization region of the 
resulting sequence of
operators, we can show that the spectrum condition must 
be violated in the charge class of~$\sigma_1 \product \sigma_2$
if it is violated in the charge class of $\sigma_2$, or interchanging
the roles of $\sigma_1$ and $\sigma_2$, in that of $\sigma_1$. 
We come now to the actual proof of the lemma. 

\begin{proof}[Proof of Lemma \ref{technical}:] \ 
If the 
unitary representation $\widetilde{U}_2$ of $\tPoin$
violates the spectrum condition, as outlined above, 
we fix a positive lightlike 
vector $l$ and choose a morphism $\sigma_2$ localized
in a hypercone  $\MIC_2$ pointing in the opposite 
positive lightlike direction~$l^\prime$. 
We will take advantage of  the localization 
of the cocycles corresponding to the representation $\tU_2$.
With the above notation,  
Lemma~\ref{cocycleloc} implies that there is a hypercone 
$\MIC_0 \supset \MIC_2$ such that 
$ T_2(\zeit) \, T_0(\zeit)^{-1},  B_2(\boost) \, B_0(\boost)^{-1} 
\in \Ri(\MIC_0)$ for sufficiently small $\zeit \geq 0$ and 
$\boost \geq 1$, respectively. As a matter of fact, since
$ \MIC_2$ points towards~$l^\prime$ and
$\Lambda(\boost) \, l^\prime = \boost \, l^\prime$, 
$\boost \geq 1$, we may in addition suppose that
$\Lambda(\boost) \, \MIC_0 \subset \MIC_0$,
$\boost \geq 1$, and $\Lambda(\boost) \, \MIC_0 \subset \MIOc$  
for any given compact region 
$\MIO \subset \FL$ and  sufficiently large $\boost$,
cf.~\ref{A.11}.  The cocycle equation then implies 
$T_2(\zeit) T_0(\zeit)^{ -1} \in \Ri(\cT_{\, \szeit})$
for any $\zeit \geq 0$, where 
$\cT_{\, \szeit} \doteq \bigcup_{0 \leq \vartheta \leq 
\szeit} (\MIC_0 + \vartheta l)$. Similarly, 
$B_2(\boost) B_0(\boost)^{ -1} \in 
\Ri(\MIC_0)$ for any $\boost \geq 1$, taking account of
the endomorphic action of the boosts $\Lambda(\boost)$, 
$\boost \geq 1$, on $\MIC_0$. \smallskip

As explained above, $T_2$ has  spectrum on the negative real 
axis $\RR_-$, so there is a compact set $K_2 \subset \RR_-$ such that 
$E_2(K_2) \neq 0$, where $E_2$ is the spectral resolution 
of $T_2$. We pick a  
test function $f : \RR \rightarrow \CC $ whose Fourier transform 
is equal to $1$ on $K_2$ and vanishes 
on the ray $({\RR}_+ - \kappa_2)$ for some $\kappa_2 > 0$.
We also choose a local operator $A \in \Al(\MIC_2)$
with $E_2(K_2) A \Omega \neq 0$ by invoking  
the Reeh--Schlieder property of~$\Omega$.
Inserting into the expression for the operators~$A_n$ 
of relation~(\ref{testsequence}), where  
$\zeit_n > 0$, $\boost_n \geq 1$, $n\in \NN$, gives 
suitable sequences to be adjusted in what follows. 
Since the support of $f$ is
all of $\RR$, the resulting operators are not
localized in $\FL$. We therefore pick
a test function $\chi : \RR \rightarrow \RR$
with support in the interval $[-1, 1]$ and equal to
$1$ in some neighbourhood of~$0$ and introduce  
approximating functions $ f_n$, putting  
$f_n(\zeit) \doteq \chi(\zeit / \zeit_n)  f(\zeit)$,
$n \in \NN$. Fixing a sequence 
$\zeit_n$, $n \in \NN$, tending to infinity, 
$f_n \rightarrow f$ in the Schwartz space topology.  
The operators 
\begin{equation} \label{approx}
A_{V,n} \doteq B_2(\boost_n) \,
\int \! d\zeit \, f_{n}(\zeit) \,
T_2(\zeit + \zeit_n) \, A  \, T_0(\zeit + \zeit_n)^{-1} 
B_0(\boost_n)^{-1} \, , \quad n \in \NN \,
\end{equation}
are elements of $\Ri(\cT_{2\szeit_n/\sboost_n})$,  bearing 
$\MIC_0 \, \bigcup \,  
\Lambda(\boost_n)  \cT_{2 \theta_n} \subset \cT_{2 \theta_n/
  \sboost_n}$ and
the localization of the cocycles
$B_2(\boost) B_0(\boost)^{-1}$, $T_2(\zeit) T_0(\zeit)^{-1}$ 
in mind as well as the way boosts act on the hypercone
$\MIC_0 \supset \MIC_2$ and the lightlike
vector~$l$. 
Moreover, $\| A_n - A_{V,n} \| \rightarrow 0$ as $n$ tends to 
infinity. 

We first analyze  the operators $A_{V,n}^* A_{V,n}$ in the
limit of large $n$. Since the boosts~$B_2$ 
cancel in these operators 
we have better control on their localization 
properties: they are localized in the region 
$\Lambda(\boost_n)  \, \cT_{2 \theta_n} =
\bigcup_{\, 0 \leq \vartheta \leq 
2 \szeit_n / \sboost_n} (\Lambda(\boost_n) \, \MIC_0 + \vartheta l)$. 
Fixing a sequence $\boost_n$ with 
$\zeit_n/\boost_n \rightarrow 0$ in the limit of large 
$n$ and bearing in mind that  
$\Lambda(\boost_n) \, \MIC_0 \subset \MIOc$  
for any given compact region 
$\MIO \subset \FL$ and  sufficiently large~$\boost_n$, 
locality implies that  $A_{V,n}^* A_{V,n}$, $n \in \NN$, 
is a central sequence in $\RiV$. Since the sequence is 
uniformly bounded and $\RiV$ is a factor, its weak limit points
are multiples of the identity.  In fact they all coincide 
and can be evaluated in the vacuum state, where
$
(\Omega, A_{V,n}^* A_{V,n} \Omega) =
\| \int \! d \zeit \, f_n(\zeit) \, T_2(\zeit) A \Omega \|^2 \, .
$
As a result we have 
\mbox{$
 \lim_n  A_{V,n}^* A_{V,n} = 
\| \int \! d \zeit \, f(\zeit) \, T_2(\zeit) A \Omega \|^2  \cdot 1
$}
in the weak operator topology.

Next, we pick a morphism $\sigma_1$ localized in a hypercone
$\MIC_1$ in the spacelike complement of $\cT_{\szeit}$ for some 
$\zeit > 0$. (Note that according to \ref{A.13} of the appendix
there is a hypercone $\cC_0 \supset \cT_{\szeit}$, whose opposite
cone can be taken as $\MIC_1$.)  
Choosing an extension $\tsigma_1$ normal on
$\Ri(\cT_{\szeit})$ a routine
computation gives
\begin{equation*}
\begin{split}
& \tsigma_1(A_{V,n}) =
\int \! d \zeit \, \boost_n f_n(\boost_n \zeit) \ 
\tsigma_1\big(T_2(\zeit + \zeit_n/\boost_n) \, 
B_2(\boost_n) A B_0(\boost_n)^{-1} \, T_0(\zeit +
\zeit_n/\boost_n)^{-1} \big) \\ 
& = \int \! d \zeit \, \boost_n f_n(\boost_n \zeit) \, 
T_{1 2}(\zeit) \ \tsigma_1 \big(T_{2}(\zeit_n/\boost_n) \, 
B_2(\boost_n) A B_0(\boost_n)^{-1} \, T_0(\zeit_n/\boost_n)^{-1} \big) \, 
T_1(\zeit)^{-1} \, ,
\end{split}
\end{equation*}
for sufficiently large $n$.
Here the first equality uses the commutation properties of 
boosts and lightlike translations, given above, and the second
equality the expression for the translations
in the composed representation $\sigma_1 \product \sigma_2$,
$$T_{1 2} (\zeit^\prime) = 
\tsigma_1(T_2(\zeit^\prime) T_0(\zeit^\prime)^{-1})
\, T_1(\zeit^\prime), \quad \zeit^\prime \geq 0,$$  
established in the proof of Theorem \ref{covariancethm}.
Note that the above integral extends over the region 
$(\zeit + \zeit_n/\boost_n) \geq 0$ 
by the support properties of $f_n$, so this
expression may be used here.  
Now let $E_1$ be the spectral resolution of $T_1$, let 
$K_1 \subset \RR$ be any compact set in its spectrum 
and let $\Phi_1 \in E(K_1) \cH$ be any non--zero vector. 
Furthermore, let $E_{1 2}$ be the spectral resolution 
of $T_{1 2}$. Then, for any bounded operator 
$B \in \cB(\cH)$,  the Fourier transform (in the sense of
distributions) 
of \mbox{$\zeit \mapsto E_{1 2}(\overline{\RR}_+) T_{1 2}(\zeit)
B T_1(\zeit)^{-1} \Phi_1$} has support in the 
region $(\RR_+ - \kappa_1)$ for some $\kappa_1 \in \RR$
(depending on the choice of~$K_1$). On the other hand,
with the above choice of the test function $f$, the Fourier transform of 
$\zeit \mapsto \boost_n f(\boost_n \zeit)$ vanishes 
in the region $(\RR_+ - \boost_n \kappa_2)$, where 
$\kappa_2 > 0$. Hence
$\int \! d \zeit \, \boost_n f(\boost_n \zeit)
 E_{1 2}(\overline{\RR}_+) T_{1 2}(\zeit) B T_1(\zeit)^{-1} \Phi_1 = 0 $ 
for sufficiently large~$\boost_n$. Now 
$(\boost_n f(\boost_n \zeit) -  \boost_n f_n(\boost_n \zeit))  
\rightarrow 0$ in $L^1(\RR)$ in the limit of large~$n$
and the sequence $B_n \doteq 
\tsigma_1 (T_{2}(\zeit_n/\boost_n) \, 
B_2(\boost_n) A B_0(\boost_n)^{-1} \, T_0(\zeit_n/\boost_n)^{-1} ) $
is uniformly bounded in this limit. Hence in 
the above expression for $\tsigma_1(A_{V,n})$ in the second integral 
we can replace
the function $\zeit \mapsto \boost_n f_n(\boost_n \zeit)$
by $\zeit \mapsto \boost_n f(\boost_n \zeit)$ since 
the difference tends to $0$ in norm. Taking account of 
$\tsigma_1(A_{V,n}) = A_{V,n}$  and the localization of $A_{V,n}$,
it is then clear that 
$\lim_n \|  E_{1 2}(\overline{\RR}_+)  A_{V,n} \Phi_1 \| = 0$.

Let us summarize the facts established so far. In the first step we  
have shown 
$$\lim_n \|  A_{V,n} \Phi_1 \|^2 
= \lim_n \, (\Phi_1,  A_{V,n}^*  A_{V,n} \Phi_1) =
\| \int \! d \zeit \, f(\zeit) \, T_2(\zeit) A \Omega \|^2 \, 
\| \Phi_1 \|^2.$$ 
The next step gives  
$$\lim_n \| (1 -  E_{1 2}(\overline{\RR}_+))  A_{V,n} \Phi_1 \|^2
=  \| \int \! d \zeit \, f(\zeit) \, T_2(\zeit) A \Omega \|^2 \, 
\| \Phi_1 \|^2 .$$ 
The support properties of 
the Fourier transform of $f$ and the choice of~$A$ yield 
$\| \int \! d \zeit \, f(\zeit) \, T_2(\zeit) A \Omega \|^2 
\geq \| E_2(K_2) A \Omega \|^2 \neq 0$. Hence
\mbox{$(1 -  E_{1 2}(\overline{\RR}_+)) \neq 0$}, proving that
$T_{1 2}$ has spectrum on the negative real axis. So
$\tU_{1 2}$ violates the spectrum condition if $\tU_2$
does. Since $\sigma_1 \product \sigma_2 \simeq 
\sigma_2 \product \sigma_1$, the proof of the lemma 
is completed interchanging 
$\sigma_1$ and $\sigma_2$. 
\end{proof}

\vspace*{-2mm} 
The preceding results accurately  
describe the energetic properties of the states of 
interest here. They 
also throw new light on the appearance of superselection rules 
in quantum field theory. From the present 
theoretical point of view, based on    
observations and operations performed in a
light cone~$V$, the creation of a charged state in~$V$ 
is achieved by creating a pair and pushing the 
opposite charge to lightlike infinity. In practice, however, 
this would require an unlimited amount of energy since
the opposite charge would have to be accelerated to the 
speed of light. The experimental creation of a 
charged state can therefore only be accomplished locally by
moving the opposite charge sufficiently far away
(``behind the moon''). 
In other words, superselection rules appear when  
the total charge in~$V$ cannot be changed by 
realistic physical operations due to an infinite energy barrier.
Nevertheless, the theoretical limit states on $V$  are 
meaningful idealizations allowing us to analyze the
properties of charges. As we have seen, relative to 
the vacuum, these states 
have finite energy bounded from below. The infinite energy 
needed for their creation from the vacuum is carried 
away by the opposite charge and not visible anymore in $V$ in 
the limit. 
 
\vspace*{-2mm} 
\section{The Minkowskian picture}
\label{minkowski}
\setcounter{equation}{0} 

\vspace*{-1mm} 
Throughout the preceding discussion, we have restricted 
attention to the algebra 
of observables in a given light cone $V$. Whether this algebra 
is part of a larger algebra 
in Minkowski space or not did not matter. Yet interestingly 
enough the framework 
established here can be used to construct an extension 
of the theory to a theory 
on Minkowski space $M$. Whereas this canonical extension may seem somewhat 
arbitrary from an observational point of view, 
it enables us to establish in a physically 
meaningful manner that our results do not depend on the 
choice of light cone $V$.

The basic ingredient is the relation between our semigroup 
$\Semi$ and the 
group~$\Poin$ it generates. $\Semi$ is a semidirect product of its 
subsemigroup of future directed (time) translations 
and the Lorentz group. The time translations are a normal 
subsemigroup generating the 
group of spacetime translations and $\Poin$ is the 
semidirect product of the group of spacetime translations 
and the Lorentz group. This tight relationship between $\Semi$ 
and $\Poin$ allows 
us to induce up structures relating to $\Semi$ to a 
corresponding structure relating to 
$\Poin$. Topological aspects of this principle can be
incorporated efficiently by observing that 
$\Semi$ contains an interior point of $\Poin$.\footnote{The 
explicit construction of $\cN_\cP$ 
from $\cN_\cS$ in Proposition~\ref{bargmann}
can be seen as a simple application of this general principle}

The first observation is that $V$ is just the quotient of 
$\Semi$ by the Lorentz 
group. The induced structure for $\Poin$ is Minkowski space $\MI$, 
the quotient of $\Poin$ 
by the Lorentz group. We leave it to the reader to formalize 
the construction of the 
affine space $\MI$ together with its Lorentzian metric 
as an extension of $V$ equipped 
with the analogous structure. Thus an observer in $V$, 
aware of the way $\Semi$ acts, 
can talk about Minkowski space.
The second observation is that the continuous unitary 
representation $U_0$ of the 
semigroup $\Semi$ determined by the vacuum state extends 
canonically to a 
continuous unitary representation of $\Poin$, as outlined in Sect.~3.
Moreover, the net $\Al$ of 
observable algebras in the given light cone $V$ extends to 
a net on Minkowski space. 

We recall that the extended net 
is fixed by assigning to each double cone 
\mbox{$\MIO_\MI \subset \MI$} 
the algebra \ $\Al_\MI(\MIO_\MI) \doteq 
U_0(x_\MI)^{-1} \, \Al(\MIO_\MI + x_\MI) \, U_0(x_\MI)$, 
where 
$x_\MI \in \Trans$ is such that $\MIO_\MI + x_\MI \subset V$.
This assignment defines an (isotonous) net $\Al_\MI$ 
on~$\MI$ which is local and covariant with regard to the 
adjoint action of $U_0(\lambda)$, $\lambda \in \Poin$. 
Moreover, the given vacuum state $\omega_0$ on $\AlV$ can be extended to a 
(pure) vacuum state $\omega_\MI$  on~$\Al_\MI(\MI) \doteq 
\bigcup_{\MIO_\MI \subset \MI}   \, \Al_\MI(\MIO_\MI) $, putting 
$$
\omega_\MI(A_\MI) \doteq (\Omega, A_\MI \Omega) \, ,
\quad A_\MI \in \Al_\MI(\MI)
\, ,
$$
where $\Omega$ is the cyclic vector in 
the GNS--representation of $\AlV$ induced by $\omega_0$.
These extensions of the net $\Al$ and
state $\omega_0$ are uniquely fixed by the hypothesis of a global 
(Minkowskian) vacuum state entering into 
the chosen extension of $U_0$ to~$\Poin$.  
Yet, as already mentioned, in the
presence of massless particles, other extensions 
consistently describe different prehistories of $\omega_0$. 

Nevertheless this canonical extension is useful since   
observations made in different light cones 
$V_1, V_2$ may be compared, provided 
the respective observers use 
their (partial) vacua as reference states  
and reconstruct the representation $U_0$
of $\Poin$ from their respective observations.
Every pair of light cones contains some common light cone
$V \subset V_1 \bigcap V_2$, so both observers have access
to $V$ and would, in principle, be able to reconstruct  the 
same global net $\Al_\MI$ and 
vacuum  state~$\omega_\MI$ from their partial information. 
If they conventionally regard their partial vacua as restrictions 
of the global vacuum state, 
$\omega_{j \, 0} = \omega_\MI \rest \Al_\MI(V_j)$, $j = 1,2$,  
it would make sense for the observers to compare 
their past data in their common future. 

The properties of simple covariant
charges as seen by observers in different light cones 
can likewise be compared on this same basis since the corresponding
\mbox{morphisms} can be extended to the Minkowskian net.
Let \mbox{$\sigma : \AlV \rightarrow \RiV$} be any covariant morphism and let 
$\tU_\sigma$ be the associated representation of $\tPoin$.    
The extension of $\sigma$ to the operators $A_\MI \in \Al_\MI(\MI)$ is given by
$$
\sigma_\MI(A_\MI) \doteq
\mbox{Ad} \, \tU_\sigma(x_\MI)^{-1} \compos
\sigma \compos \mbox{Ad} \, U_0(x_\MI) \, (A_\MI) \, ,
$$
where $x_\MI \in \Trans$ is so large that 
$\mbox{Ad} \, U_0(x_\MI) (A_\MI) \in \AlV$. This assignment yields  
a well defined morphism 
$\sigma_\MI : \Al_\MI(\MI) \rightarrow \cB(\cH)$ transforming
covariantly under the adjoint action of $\tU_\sigma$, as is 
easily verified. Note that this extension is completely fixed
by data available in $V$. 

Now given $V_0 \supset V$  
put $\sigma_{V_0} \doteq \sigma_\MI \rest \Al_\MI(V_0)$. 
If $\sigma$ is a simple hypercone localized 
morphism in $V$, then $\sigma_{V_0}$ is a simple hypercone 
localized morphism in $V_0$, \ie its composition with 
the vacuum $\omega_\MI$ leads to a charge class with all 
properties specified in the defining criterion. 
We refrain from giving a  detailed proof here but   
just indicate the main points. 
\medskip 

Let $t_0 \in \Trans$ with
$V_0 + t_0 = V$.  If the morphism 
$\sigma$ is localized in the hypercone $\MIC \subset V$, 
Lemma~\ref{cocycleloc} and the cocycle 
equation imply $\tU_\sigma(t_0) U_0(t_0)^{-1} \in \Ri(\cT_{t_0})$,
where $\cT_{t_0} \doteq \bigcup_{\, 0 \leq \varsigma \leq 1} 
\{ \MIC_0 + \varsigma t_0 \}$ and \mbox{$\MIC_0 \supset \MIC$} is some
hypercone in $V$. As shown in~\ref{A.13}, there is 
a hypercone \mbox{$\cT_{t_0} \subset \MIC_{t_0} \subset V$}.  
Hence, denoting the weak closures of 
algebras in the Minkowskian net by the symbol~$\Ri_\MI$, one has 
\begin{equation*}
\begin{split}
& \tU_\sigma(t_0)^{-1} U_0(t_0) \\
& =  U_0(t_0)^{-1} (\tU_\sigma(t_0) U_0(t_0)^{-1})^{-1}   U_0(t_0)
\in U_0(t_0)^{-1} \, \Ri(\MIC_{t_0}) \, U_0(t_0)
= \Ri_\MI(\MIC_{t_0} - t_0) \, ,
\end{split}
\end{equation*}
and $(\MIC_{t_0} - t_0)$ is a hypercone in $V_0$, based
on the shifted hyperboloid $(\HY - t_0)$. But 
$$
\sigma_{V_0} = 
\mbox{Ad} \, \tU_\sigma(t_0)^{-1} \compos
\sigma \compos \mbox{Ad} \, U_0(t_0) =
\mbox{Ad} \, \tU_\sigma(t_0)^{-1} U_0(t_0) \compos 
\mbox{Ad}  \,  U_0(t_0)^{-1} \compos 
\sigma \compos \mbox{Ad} \, U_0(t_0) \, ,
$$ 
so $\sigma_{V_0} : \Al_\MI(V_0) \rightarrow
\Ri_\MI(V_0)$ is localized in $(\MIC_{t_0} - t_0) \supset \MIC$. 
Moreover, if $\sigma, \tau$ are equivalent morphisms 
on $\AlV$ with intertwiners $W \in (\sigma, \tau) \subset \RiV$, then 
$\sigma_{V_0}, \tau_{V_0}$  are equivalent on $\Al_\MI(V_0)$ 
with intertwiners 
 $W_{V_0} \in (\sigma_{V_0}, \tau_{V_0}) \subset \Ri_\MI(V_0)$ given by 
$$
W_{V_0} = \tU_\tau(t_0)^{-1} W \tU_\sigma(t_0) =
(\tU_\tau(t_0)^{-1} U_0(t_0)) \, 
 (U_0(t_0)^{-1} W U_0(t_0)) \, (U_0(t_0)^{-1} \tU_\sigma(t_0)) \, .
$$
Letting the morphisms and their localization vary using  
Lorentz covariance one finds that 
the families of equivalent hypercone localized 
morphisms in $V$ are restrictions of families of 
equivalent hypercone localized morphisms in $V_0$. Thus the  
analysis of the preceding sections and the corresponding results 
apply to the latter families, too. 

Even though the morphisms and their 
intertwiners depend, in general, on the choice of light cone, 
important physical data, such as their statistics parameters 
$\varepsilon \in \{\pm 1 \}$ do not since the intertwiners 
$W_{V_0}$ depend continuously on $t_0$ as can be 
seen from the expression given above. Hence 
the statistics parameters stay constant under 
changes of the light cone. Moreover,
the spectral properties of the charge classes do not
depend on the choice of light cone, either. Since 
for any pair of light cones $V_1, V_2$ there is 
a light cone $V_0 \supset V_1 \bigcup V_2$,  
the respective observers will agree on 
the intrinsic properties of the charges. 

Despite the fact that the   
interpretation of charged states does not depend on the choice of
light cone, it would not make sense, in general, to take a global
view. For example, the equivalence of morphisms 
\mbox{$\sigma_\MI \rest \Al_\MI(V_0) \simeq \tau_\MI \rest \Al_\MI(V_0)$} 
for any light cone $V_0$ does \textit{not} imply that 
\mbox{$\sigma_\MI  \simeq \tau_\MI $ on $\Al_\MI(\MI)$}, the two
representations may differ by infrared clouds of massless
particles superselected in~$\MI$. Moreover,
from the Minkowskian point of view, the morphisms~$\sigma_\MI$
in general do not have localization properties allowing
composition and an analysis of statistics. Thus, for the 
interpretation of theories with 
long range forces, the restriction to light cones 
is not only physical meaningful but also solves these 
infrared problems.

Another aspect of the preceding analysis is 
worth mentioning. The species of localizable charges, 
exhaustively treated from the Minkowskian point of view 
in \cite{DoHaRo1,DoHaRo2,DoHaRo3}, also fit into the
present light cone setting. It is a distinctive feature of this type of
charge that the intertwiners between morphisms localized in 
a light cone~$V$ do not change if one proceeds to larger
cones $V_0$. In contrast, for charges 
related to a local gauge group, such as the electric 
charge, the intertwiners depend on $V_0$. Thus,
from the present point of view, there 
is a clear cut distinction between these two types of 
charges. This prompts us to ask the intriguing question 
of whether this feature can be used to recover both 
the global gauge group and the structure of the 
underlying local gauge group
from the structure of the charged states. 
This would solve a longstanding problem in the algebraic  
approach to local quantum physics \cite{Ha2}.

\section{Concluding remarks}
The present investigation establishes a new framework
for interpreting physical states in 
relativistic quantum field theory. 
Instead of implicitly supposing that the 
properties  of physical states can be controlled 
in all of Minkowski space, we have from the outset 
taken the arrow of time into account: missing measurements 
and operations in the past in general mean missing 
information in the future. 
From an experimental point of view, the best 
physicists can hope for is to explore the properties 
of (partial) states in future  
light cones. Theory only needs 
to interpret and explain such data. 

In a completely massive world, our present approach would 
not make a difference as the observables in any forward
light cone would be irreducible. So one would not loose information
about states by belated experiments. The situation is 
different, however, in the presence of massless particles. 
Since, as a consequence of Huygens principle and Einstein causality,  
there is no way of acquiring information on 
outgoing massless particles (radiation)
created in the past of a light cone, 
the family of observables in a 
light cone is highly reducible. 
Although there are numerous \mbox{\textit {a priori}} 
possibilities for the type 
of the algebra generated by these observables, they are 
generically of type~III${}_1$ according to the classification of Connes. 
 
From a physical point of view, this type comes closest to the familiar 
irreducible case as it still allows to describe in a comprehensive 
way the transitive quantum effects 
of physical operations on states carrying the same charge.
So although the superposition principle no longer holds
for states on these algebras, concepts familiar 
from the analysis of superselection sectors such as the 
creation and transport of charges, their statistics and their 
conjugation still make good sense. 
Moreover, their energetic aspects can be consistently described. 

The advantage of the present approach is 
that the infrared problems caused by infinite clouds of 
low energy massless particles disappear. In the traditional treatment 
one tries to solve these problems by splitting the massless particle
content into an energetically soft and 
therefore unobservable part and a hard 
part. This method, however,  breaks  
Lorentz invariance and is incompatible with the strict 
locality of the observables.
In the present approach the massless particle content is split into
a marginal part defying observation since it is already 
in the spacelike complement of the observer and 
an elemental part accessible to observations and
operations within his  lightcone. This splitting is both Lorentz invariant 
and compatible with Einstein causality. It thereby permits a 
consistent description and analysis of elementary systems 
even in the presence of long range forces.

Since the present approach does not aim to treat the marginal part of the 
outgoing massless particles it even admits unitary implementations of 
Lorentz transformations for charged 
light cone morphisms. Thus it is  worthwhile posing the 
question of whether electrically charged states can 
be described as Wigner particles 
(irreducible representations of the Poincar\'e group)
in the present setting.
Such a result seems not to conflict 
with previous insights into Minkowski theories \cite{FrMoSt,Bu1,Bu2}
where spacelike asymptotic properties do matter  
but have no counterpart in the present approach. Since in a 
light cone the observed 
massless contributions can be described by 
Fock states with a finite  particle number there may be
partial states, describing a single electrically 
charged particle where the (globally inevitable) accompanying
radiation field has no observable effects
in the cone. Such states
could well contribute to an atomic part 
in the mass spectrum. Thus the infraparticle problem, too,
may disappear if one just allows 
observations in a light cone.

One of the most intriguing aspects of the present approach
is the endomorphic nature of time evolution, entering as it does 
in the interpretation of the microscopic theory. This feature, 
combined with the perpetual loss of control
of outgoing massless particles might be relevant to a 
better understanding of the classical aspects of our quantum world.
Observations on outgoing radiation
cannot be affected anymore 
by later quantum experiments (unless the 
observer has taken timely precautions 
to reflect it back into his 
light cone). Hence those results can be taken as facts in the 
sense of classical physics without conflicting with 
the principles of quantum theory. This aspect seems to 
warrant a more detailed study.

In the present analysis, we have restricted attention to 
simple charges satisfying Bose or Fermi statistics.
Just as in the case of localizable charges in Minkowski space, 
substantial parts of our analysis can be extended to the more
general case of charges of arbitrary finite statistics
with one notable exception: we have not been able to 
show that the property of covariance of composite morphisms is 
stable under forming subrepresentations and to establish 
the spectrum 
condition. Since the physically successful theories coupling matter 
to the electromagnetic field do not lead to para\-statistics
there might be a deeper reason for our failure.
But caution is needed in drawing such a conclusion and we 
will pursue this question elsewhere.

\begin{appendix}
\section{Appendix} 

In this appendix we establish the geometric facts about 
hypercones used in the preceding analysis. 
Let $\MI$ be Minkowski space equipped 
with the metric $(+,-,-,-)$ and coordinates 
$x = (x_0,\bx)$. 
We fix the forward lightcone 
\mbox{$\FL \doteq \{ x \in \RR^4 : x_0 > |\bx| \}$}  
and regard it as a globally hyperbolic spacetime
with metric inherited from Minkowski space. Recall that 
$X^c \subset V$ denotes the spacelike complement of any subset
$X \subset V$.
The lightcone $V$ is foliated by the hyperboloids  
$\HY_\tau\doteq \{ x \in \FL : x_0 = \sqrt{\bx^2 + \tau^2} \}$ 
(time shells) for~$\tau > 0$. 
Since the hyperboloids $\HY_\tau$ form Cauchy surfaces of $V$, the
causal completions of disjoint sets on a given $\HY_\tau$ are spacelike 
separated regions in~$V$. 

Fixing $\tau$ and abbreviating $\HY = \HY_\tau$
we consider specific subsets of the 
corresponding hyperboloid as bases of causally complete 
regions in $\FL$. To have a Lorentz invariant description
of these sets, we equip $\HY$ with the metric induced from
the ambient space.
Given two points $\ha, \hb \in \HY$, the geodesic connecting them
is the segment of the ``great hyperbola'' got by intersecting
$\HY$ with the $2$--plane fixed by $\ha,\hb,0$ in the ambient space. 
Its length is 
$d(\ha,\hb) = \tau \cosh^{-1}(\ha \hb/\tau^2)$, where $\ha \hb$ denotes the 
Lorentz scalar product of $\ha$ and $\hb$. 
Great hyperbolae on $\HY$ thus correspond to lines and will be called
hyperbolic lines.  

To further geometric 
intuition we project the hyperboloid $\HY$ 
through the origin onto the plane~$x_0 = 1$ and thereby
identify it with the open unit ball $\HB \subset \RR^3$ about the origin.
The corresponding invertible map $\bv : \HY \rightarrow \HB$
is given by 
$$
\bv(a)  \doteq \ba / a_0 =  \ba /\sqrt{\ba^2 + \tau^2} \, ,
$$
inducing on $\HB$ the metric  
$d(\bu, \bv) = \tau 
\cosh^{-1}((1 - \bu \bv) / \sqrt{(1 - \bu^2)(1 - \bv^2)})$,
where $\bu \bv$ denotes the Euclidean scalar product.    
The result of this mapping  
is the Beltrami--Klein model of hyperbolic geometry.
Its simplifying feature is that the hyperbolic lines 
on $\HY$ are mapped to chords 
of~$\HB$, \ie straight lines connecting boundary points of $\HB$. 
Note that the spherical boundary $S^2$ of~$\HB$ corresponds to spacelike 
infinity on the hyperboloid. 

Hyperbolic rays on $\HY$ are fixed by specifying their apex 
$\ha$ and their asymptotic lightlike direction $l = (1,\bl)$,
where $\bl \in S^2$. A union of hyperbolic rays emanating from a common 
apex~$\ha$ is the analogue of a cone and is called a hyperbolic cone. The 
opposite hyperbolic cone results by taking the union of the 
corresponding opposite rays emanating from $\ha$. 
A hyperbolic cone is said to be pointed if its closure
and the opposite closed cone have only the apex in common.  
A hyperbolic cone $\HC \subset \HY$ is said to be convex if, for any
two points $\ha, \hb \in \HC$, the geodesic connecting them 
is contained in $\HC$. 

Proceeding to the 
Beltrami--Klein model, a hyperbolic ray on $\HY$, fixed by its
apex $\ha$ and asymptotic lightlike direction $l = (1,\bl)$, corresponds to 
the straight line between the apex $\bv(a) \in \HB$ and the 
boundary point $\bl \in S^2$. Thus a hyperbolic cone 
on~$\HY$ corresponds to an ordinary (truncated) Euclidean cone 
$\HK \subset \HB$ and the concepts 
of opposite, pointed and convex hyperbolic cone  
coincide in the Beltrami--Klein model with those familiar  
from Euclidean geometry. We therefore characterize the 
hyperbolic cones  $\HC \subset \HY$  
by their images $\HK \subset \HB$ in the Beltrami--Klein model,
\mbox{$\HC = \HC(\HK)$}, where we restrict attention to pointed open convex 
cones $\HK \subset \HB$ with elliptical base. These form a 
Lorentz invariant family. We will also consider \mbox{hyperbolic} balls 
$\HO \subset  \HY$, \ie open balls with arbitrary 
hyperbolic diameter and apex. Their (ellipsoidal) 
images in $\HB$ will be denoted by the same symbol.

\begin{figure}[h] 
\centering 
\epsfig{file=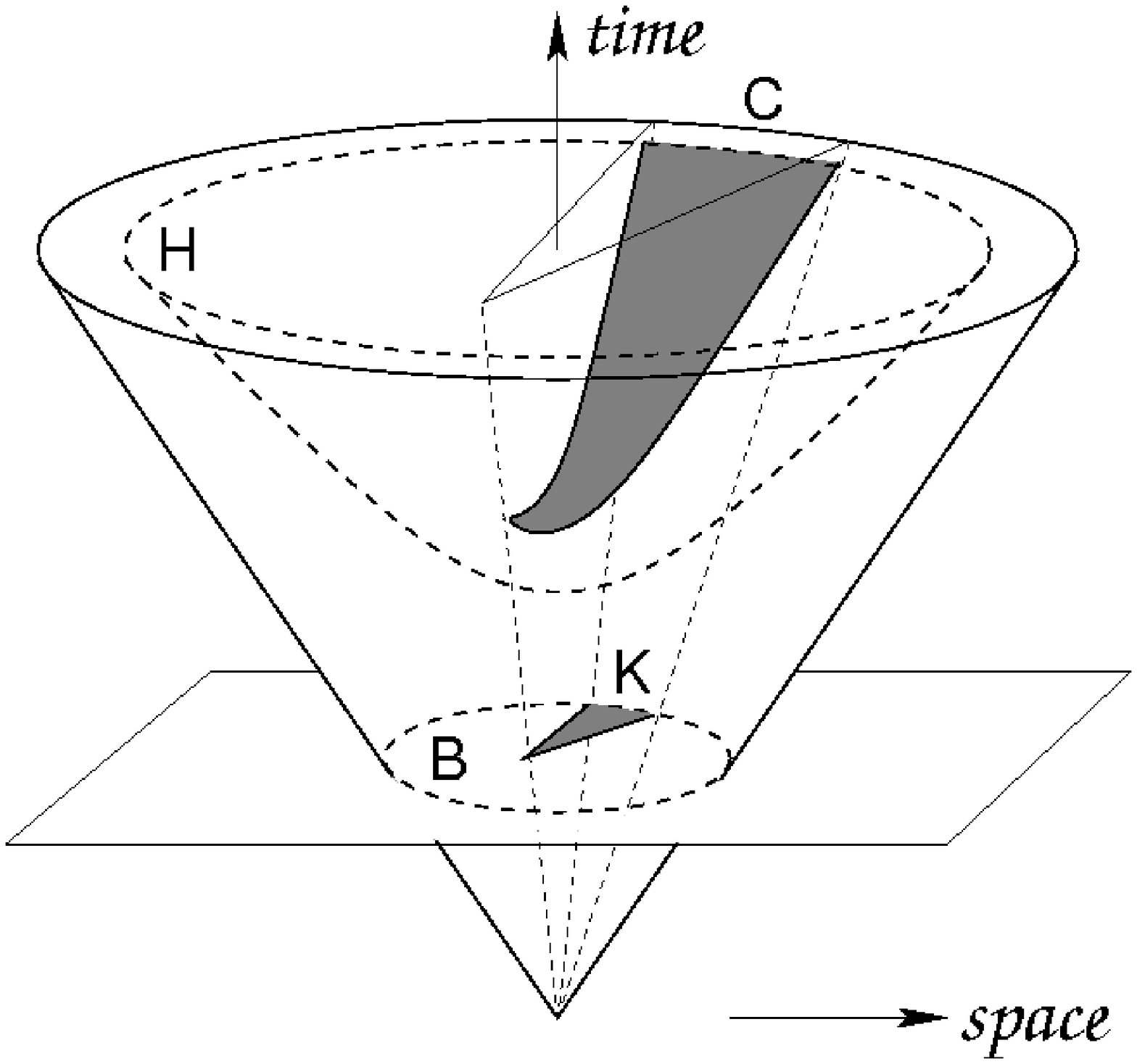,width=90mm}
\caption{Hyperbolic cone $\HC$ on the hyperboloid $\HY$ and 
its image $\HK$ in the ball $\HB$} 
\label{fig1} 
\end{figure}

\begin{define} \label{D.1} 
Let $\HY \subset V$ be a fixed hyperboloid. 
A hyperball $\cO \subset V$ is the causal 
completion of a hyperbolic ball $\HO \subset \HY$. 
A hypercone $\cC(\HK) \subset V$ is the causal
completion of a hyperbolic cone $\HC(\HK) \subset \HY$,
where $\HK \subset \HB$ is any pointed open convex cone with 
elliptical base. The family of these hypercones is denoted by
$\cF$.
\end{define}

\noindent Note that the hypercones $\cC(\HK) \in \cF$,
\mbox{$\HK \subset \HB$}, inherit the structure of 
a partially ordered set from the underlying ordinary cones, 
\ie $\cC(\HK_1) \subset \cC(\HK_2)$
iff $\HK_1 \subset \HK_2$. Moreover, the spacelike separation of
hypercones, $\cC(\HK_1) \subset \cC(\HK_2)^c$, is equivalent to 
disjointness of the underlying ordinary cones, 
$\HK_1 \bigcap \HK_2 = \emptyset$. It should also be noted that
any double cone in $V$ is contained in some hyperball. 
These facts greatly simplify the subsequent discussion. 

\subsection*{Topology of hypercones} 

In this subsection we list some topological properties of 
hypercones. We begin by discussing the funnels of hypercones
entering into the extension of morphisms. 

\begin{faktum} \label{A.1} 
Given any hypercone $\cC$ there is a 
decreasing sequence of hypercones (a funnel) $\cC_n \subset \cC$, 
$n \in \NN$ such that $\cC_n{}^c \nearrow V$. 
\end{faktum}

\begin{proof} Let $\cC = \cC(\HK)$ and pick any 
decreasing sequence of cones $\HK_n \subset \HK$, $n \in \NN$ 
such that $\bigcap_n \HK_n = \emptyset$. The resulting sequence 
of hypercones $\cC(\HK_n)$, $n \in \NN$ has the required 
properties.  
\end{proof}

\begin{faktum} \label{A.2} 
Let $\cC_n$,  $n \in \NN$ be an increasing sequence 
of hypercones such that $\cC_n \nearrow V$. The corresponding
(decreasing) sequence of opposite hypercones is a funnel in the 
preceding sense.
\end{faktum}

\begin{proof} 
Let $\cC_n = \cC(\HK_n)$, where $\HK_n \subset \HB$ is an increasing sequence of
cones such that $\HK_n \nearrow \HB$, $n \in \NN$. The respective
opposite cones $\HK_n^o$, $n \in \NN$ form a decreasing sequence and
$\bigcap_n \HK_n^o = \emptyset$. Since the hypercone 
opposite to $\cC_n$ is given by 
$\cC_n^o = \cC(\HK_n^o)$, $n \in \NN$ the latter 
sequence forms a funnel, as claimed. 
\end{proof} 

What is important when analyzing morphisms is to be flexible 
in choosing their localization. The following geometrical fact 
is then significant. 

\begin{faktum} \label{A.3} If $\cO$ is a double cone and 
$\cC$ a hypercone, then there is a hypercone 
$\cC_0 \subset \cO^c \bigcap \cC$. 
\end{faktum}

\begin{proof} 
As a double cone is contained in some hyperball we 
may assume that~$\cO$ is a hyperball.
Let $\HO \subset \HB$ be the base of $\cO$, 
let $\cC = \cC(\HK)$ and let $\HL \subset S^2$ be the
boundary of $\HK$. 
Since $\HO$ is relatively compact in $\HB$
there is a cone $\HK_0 \subset \HK$ whose apex 
is sufficiently close to $\HL$ such that 
$\HK_0 \bigcap \HO = \emptyset$. The hypercone 
$\cC_0 = \cC(\HK_0)$ has the desired properties. 
\end{proof} 

The following result is of a similar nature.

\begin{faktum} \label{A.4} If $\cC$ is a hypercone
and $\cO \subset \cC^c$ a hyperball then there is 
a hypercone $\cC_0$ with $\cO \subset \cC_0 \subset \cC^c$.
\end{faktum}

\begin{proof} 
Let $\cC = \cC(\HK)$ and let $\HO \subset \HB \backslash \HK$ be
the base of the given hyperball. Since $\HO$ and $\HK$ are 
disjoint convex
subsets of $\HB$ there is a hyperplane separating them. 
Pick a point $\ba_0$ on this hyperplane which does not
touch $\HO$ 
and consider the cone $\HK_0 \subset \HB$ generated by all rays with apex 
$\ba_0$ which pass through the points of $\HO$. Then 
$\HK_0 \supset \HO$ and $\HK_0 \bigcap \HK = \emptyset$. 
The hypercone $\cC_0 = \cC(\HK_0)$ has the desired properties. \end{proof}

To analyze morphisms one has to interpolate
between different hyperbolic cones by passing alternately to suitable 
smaller and  larger cones. A given subfamily 
\mbox{$\cF_0 \subset \cF$} of hyperbolic cones is said to be 
pathwise connected if for any pair \mbox{$\cC_a, \cC_b \in \cF_0$} 
there is an interpolating sequence  
$\cC_a = \cC_1, \, \cC_2, \dots, \cC_n = \cC_b \in \cF_0$ such that
each successive pair $\cC_m, \cC_{m+1}$ contains a common 
hyperbolic cone in $\cF_0$, $m = 1, \dots, n-1$.
The following results are used in such interpolation arguments.

\begin{faktum} \label{A.5} 
The family $\cF$ of all hypercones is pathwise connected. 
\end{faktum} 
\begin{proof} The family of cones $\HK \subset \HB$ is obviously 
pathwise connected. 
Since the corresponding family of hypercones $\cC(\HK)$ is order
isomorphic to $\HK$, the result holds. 
\end{proof}

\begin{faktum} \label{A.6}
The family $\cF(\cC^c) \subset \cF$
of hypercones localized
in the spacelike complement of a given hypercone $\cC$ 
is pathwise connected. 
\end{faktum} 
\begin{proof} Let $\cC = \cC(\HK)$ for given $\HK \subset \HB$. 
The family $\cF(\cC^c)$ consists of all hypercones 
$\cC(\HK^\perp)$, where $\HK^\perp \subset \HB \backslash {\HK}$ 
is any pointed open convex cone with elliptical base. 
Let \mbox{$\HL \subset S^2$} be the boundary of $\HK$.
Given any pair of cones 
$\HK_1^\perp, \HK_2^\perp \subset \HB \backslash {\HK}$, 
their respective boundaries $\HL_1, \HL_2$ are contained
in $S^2 \backslash \HL$. Since $S^2 \backslash \HL$ is connected 
it is then clear that $\HK_1^\perp, \HK_2^\perp$
can be connected by an alternating path of cones  
whose apices stay sufficiently close to $S^2 \backslash \HL$
therefore not meeting $\HK$. Hence the family 
$\HK^\perp \subset \HB \backslash \HK$ of cones is pathwise 
connected and with it 
the corresponding family of hypercones $\cC(\HK^\perp) \in \cF(\cC^c)$.  
\end{proof}

Given any two hypercones $\cC_a, \cC_b \in \cF$ there may be no 
hypercone in their common spacelike complement 
$\cC_a{}^{\! c} \bigcap \cC_b{}^{\! c}$. (This is most easily seen in the
Beltrami--Klein model.) Yet, by proceeding to smaller cones
this difficulty can be overcome.

\begin{faktum} \label{A.7}
Let  $\cC_a, \cC_b \in \cF$ be arbitrary hypercones. Then there is
a hypercone $\cC_0 \subset \cC_a$ making the family of hypercones
$\cF(\cC_0{}^{\! c}) \bigcap \cF(\cC_b{}^{\! c})$ pathwise connected.
\end{faktum}
\begin{proof} Let $\cC_a = \cC(\HK_a)$, $\cC_b = \cC(\HK_b)$ for
given $\HK_a, \HK_b \subset \HB$ and let $\HL_a, \HL_b \subset S^2$
be the (convex) boundaries of $\HK_a, \HK_b$. 
If $\HL_a \bigcap \HL_b = \emptyset$ there is a cone 
$\HK_0 \subset \HK_a$ whose apex lies sufficiently close to 
the boundary $\HL_a$ so that $\HK_0 \bigcap \HK_b = \emptyset$. 
Let $\HL_0 \subset S^2$ be the boundary of $\HK_0$. 
Then $S^2 \backslash (\HL_0 \bigcup \HL_b)$ is connected and,
as in the preceding argument, it follows that the family of cones 
$\HK^\perp \subset \HB \backslash (\HK_0 \bigcup \HK_b)$ is pathwise
connected. 
If $\HL_a \bigcap \HL_b \neq \emptyset$ there is a cone 
$\HK_0$ with boundary
$\HL_0 \subset \HL_a \bigcap \HL_b$ and apex sufficiently close  
to the boundary with $\HK_0 \subset \HK_a \bigcap \HK_b$. 
Hence 
$\cF(\cC_0{}^{\! c}) \bigcap \cF(\cC_b{}^{\! c}) = \cF(\cC_b{}^{\! c}) $ 
and the result follows from the preceding result.
\end{proof}

It is crucial in  the statistics analysis that
any given pair of spacelike separated hypercones 
has hypercones in its spacelike complement.

\begin{faktum} \label{A.8}
Let $\cC_a, \cC_b$ be spacelike separated hypercones,
$\cC_a \subset \cC_b{}^{\! c}$. Then there is a hypercone 
$\cC_0 \subset \cC_a{}^{\! c} \bigcap \cC_b{}^{\! c}$. 
\end{faktum}

\begin{proof} Let $\cC_a = \cC(\HK_a)$, $\cC_b = \cC(\HK_b)$
where $\HK_a \bigcap \HK_b = \emptyset$ and let 
$\HL_a, \HL_b \subset S^2$ be the boundaries of 
$\HK_a$ and $\HK_b$, respectively.
Since $\HL_a, \HL_b$ are disjoint convex sets there is 
an open convex set $\HL_0 \subset S^2 \backslash (\HL_a  \bigcup \HL_b)$
and choosing an apex $\ba_0$ sufficiently close to~$\HL_0$ one obtains
a cone $\HK_0 \subset \HB \backslash (\HK_a \bigcup \HK_b)$. 
Hence $\cC_0 = \cC(\HK_0)$ meets the requirements. 
\end{proof}

\subsection*{Hypercones based on different hyperboloids}

Having clarified the structure of hypercones based on a fixed hyperboloid, 
we analyze the relations between hypercones based on 
different hyperboloids. We put
\mbox{$\HY_\tau = \{ x \in \FL : x_0 = \sqrt{\bx^2 + \tau^2} \}$}, 
$\tau > 0$ and denote the corresponding distance function by $d_\tau$. 
Points, hyperbolic balls and hyperbolic 
cones on these manifolds are denoted 
by $\ha_\tau$, $\HO_\tau$ and $\HC_\tau$ and 
their causal completions (the hyperballs and
hypercones) by $\cO_\tau$ and $\cC_\tau$, respectively.  
Points on different 
hyperboloids are identified by scaling, \ie given 
$\ha_\tau \in \HY_\tau$ we let $\ha_\sigma = (\sigma / \tau) \,
\ha_\tau \in \HY_\sigma$. The hyperbolic balls 
and cones on different hyperboloids are identified in this way.
The identifying map commutes with Lorentz transformations and preserves 
the causal relations between corresponding 
hypercones based on different hyperboloids.
It will again be convenient to identify the hyperbolic cones
$\HC_\tau \subset \HY_\tau$ with their canonical images $\HK \subset \HB$
in the Beltrami--Klein model, $\HC_\tau = \HC_\tau(\HK)$. 
Their respective causal completions are denoted by $\cC_\tau(\HK)$. 
The distinguished families of hypercones $\cF_\tau$ based on
different hyperboloids $\HY_\tau$ are identified in this way.

Given any $\ha_\sigma \in \HY_\sigma$, its 
future and past causal shadows on $\HY_\tau$ are given by 
$\HO_\tau(\ha_\sigma) \doteq 
(\ha_\sigma \pm \Trans) \bigcap \HY_\tau$ if $ \pm(\tau - \sigma) \geq 0$, 
respectively. Since the distance function $d_\tau$ is invariant
under Lorentz transformations, a convenient description of
this shadow follows from a straightforward computation: 
$\HO_\tau(\ha_\sigma) = \{ \ha \in \HY_\tau :
d_\tau(\ha, \ha_\tau) \leq \tau c_{\sigma, \tau} \}$, where
$c_{\sigma, \tau} \doteq \cosh^{-1}((\sigma^2 + \tau^2)/2 \sigma \tau)$.  
Thus the causal shadow on $\HY_\tau$ 
of any given point $\ha_\sigma \in \HY_\sigma$
is a (closed) 
hyperbolic ball about $\ha_\tau = (\tau/\sigma) \ha_\sigma \in \HY_\tau$
whose radius depends only on $\sigma$ and $\tau$. 
Incidentally, this result shows that 
the causal completion of a hyperbolic ball about some point $\ha_\tau$
is an ordinary double 
cone in $V$ whose vertices lie on the timelike ray
$\RR_+ \ha_\tau$.\smallskip

It follows from these remarks that the causal shadow 
of a hyperbolic ray on $\HY_\sigma$  
is the union of hyperbolic balls of fixed diameter
centred on the points of the 
hyperbolic ray on~$\HY_\tau$ which is the image of 
the given ray by the above identification map. Similarly,
the causal shadow of a hyperbolic cone on  $\HY_\sigma$
is the union of hyperbolic balls of fixed diameter
centred on the points of the corresponding hyperbolic 
cone on $\HY_\tau$. Proceeding to the Beltrami--Klein
model, the resulting region in $\HB$ is 
the union of hyperbolic balls with fixed radius 
centred about the points of an ordinary cone
$\HK \subset \HB$. The following result is an easy consequence.

\begin{faktum} \label{A.9}
Let $\sigma, \tau > 0$ and 
$\cC_\sigma \in \cF_\sigma$. Then there is a
$\cC_\tau \in \cF_\tau$ with
$\cC_\sigma \subset \cC_\tau$.
\end{faktum}

\begin{proof} Let $\cC_\sigma = \cC_\sigma(\HK)$ for $\HK
  \subset \HB$  and let 
$\widehat{\HK} = 
\{ \bu \in \HB : d_\tau(\bu,\bv) \leq \tau c_{\sigma, \tau}, \  
\bv \in \HK \}$   be the region corresponding to its shadow on 
$\HY_\tau$. Furthermore, let $\HK^o$ be the cone opposite to $\HK$.
It follows from the triangle inequality for the metric 
$d_\tau$ that all points in $\HK^o$ sufficiently 
close to the boundary $S^2$ of~$\HB$ (and therefore being arbitrarily 
far from $\HK$) lie in the complement
of~$\widehat{\HK}$. Hence $\widehat{\HK}$ is contained in 
the interior of a spherical cap of $\HB$ with non--trivial
complement. Picking a point $\ba_0$ in this complement as
apex and connecting it to all points of the cap by straight
lines yields a pointed convex cone $\HK_0 \subset \HB$ with 
spherical base containing $\widehat{\HK}$. The corresponding 
hypercone $\cC_\tau = \cC_\tau(\HK_0)$ based on $\HY_\tau$ has
the stated property.
\end{proof}

There is also a converse to this result. 

\begin{faktum} \label{A.10} 
Let $\sigma, \tau > 0$ and let
$\cC_\sigma \in \cF_\sigma$. Then there is a  $\cC_\tau \in \cF_\tau$ with
$\cC_\tau \subset \cC_\sigma$.
\end{faktum}

\begin{proof} Let $\cC_\sigma = \cC_\sigma(\HK)$ for $\HK
  \subset \HB$. Picking any cone $\HK_0 \subset \HK$ whose 
hyperbolic distance from the boundary of $\HK$ is larger than
$\tau c_{\sigma, \tau}$ ($\HK_0$ must have
a sufficiently small opening compared to 
$\HK$ and an apex  
sufficiently close to the boundary $S^2$ of~$\HB$)
it follows that the causal shadow 
$\widehat{\HK}_0 = \{ \bu \in \HB : d_\tau(\bu,\bv) \leq \tau c_{\sigma, \tau}, \  
\bv \in \HK_0 \}$  
of $\HK_0$ is contained in $\HK$ since $\HK$ is convex. The corresponding 
hypercone $\cC_\tau = \cC_\tau(\HK_0)$ based on $\HY_\tau$ 
fulfils the requirements. \end{proof}

The preceding results imply that, as is needed in the main text,  
 the structure of hypercone localized 
morphisms does not depend on the choice of a hyperboloid $\HY_\tau
\subset V$ and the corresponding family $\cF_\tau$ of hypercones. 

\subsection*{Spacetime transformations of hypercones}

We now study spacetime transformations
of hypercones. To this end we again fix a hyperboloid $\HY \subset V$
and consider the corresponding distinguished family $\cF$ of
hypercones based on it. The following results are relevant to 
the discussion of covariance properties of morphisms.

\begin{faktum} \label{A.11} 
Let $\cC$ be a hypercone. Then 
there are an open set of directions $\bl \in S^2$ and
sequences of boosts $\Lambda_n(\bl)$ in these directions, 
\ie $\Lambda_n(\bl) \, (1,\bl) = e^{n} \, (1,\bl)$, such that   
$\Lambda_n(\bl) \, \cC \subset \cC$, $n \in \NN$. 
Moreover, given any double cone $\cO$,  
$\Lambda_n(\bl) \, \cC \subset \cO^c$ for sufficiently large $n$. 
\end{faktum}

\begin{proof} Let $\cC = \cC(\HK)$ for given $\HK \subset \HB$. As
the set of hypercones is stable under Lorentz transformations we may 
suppose the apex of~$\HK$ is the centre
of $\HB$.
Let $\HL \subset S^2$ be the boundary of $\HK$ and let 
$\bl \in \HL$. There are boosts $\Lambda_n(\bl)$ with
\mbox{$\Lambda_n(\bl) \, (1, \bl) = e^{n} \, (1, \bl)$}, $n \in \NN$,  
inducing the action 
$$
\bv \mapsto {\frac{1}{\mbox{ch}(n) + \mbox{sh}(n) \, \bv \bl}} \,
\big( (\mbox{sh}(n) + \mbox{ch}(n) \, \bv \bl) \, \bl + \bv^\perp \big) \, ,
$$  
on $\HB$, where $\bv^\perp$ is the component of $\bv$ orthogonal
to $\bl$. Thus the apex $\bv = 0$ 
of~$\HK$ is shifted in the direction of $\bl$ 
into the cone~$\HK$ and the points of $\HL$ are moved along 
great circles towards $\bl \in \HL$. Since~$\HL$ is convex,  
the boosts induce contractions of~$\HL$ and consequently 
map $\HK$ into itself. Hence 
$\Lambda_n(\bl) \, \cC \subset \cC$, $n \in \NN$. It now 
suffices to consider hyperballs $\cO$.
Let $\HO \subset \HB$ be the base of $\cO$. Since the boosts 
move $\HK$ towards the boundary $\HL$ and~$\HO$ 
is relatively compact, the two regions become disjoint 
for sufficiently large $n \in \NN$. 
But this implies  
\mbox{$\Lambda_n(\bl) \, \cC \subset \cO^c$}, completing the proof.  
\end{proof}

The next result is of a similar nature.

\begin{faktum} \label{A.12} 
Given a hypercone $\cC$, there is a hypercone $\cC_0$ 
with $\Lambda \, \cC \subset \cC_0$ for all Lorentz transformations
$\Lambda$ in some open neighbourhood $\cN^L \subset \Lore$ of the identity. 
\end{faktum} 

\begin{proof}
Let $\cC = \cC(\HK)$ for given $\HK \subset \HB$. The action 
of arbitrary Lorentz transformations $\Lambda \in \Lore$ 
induced on $\HB$ is given by (choosing coordinates properly)
$$
\bv_i \mapsto  
{\frac{\Lambda_{i 0} + \sum_k  \Lambda_{ik} \bv_k}{\Lambda_{00} + 
\sum_k \Lambda_{0k} \bv_k}} \, , \quad i = 1,2,3 \, .
$$
Since $|\Lambda_{00} + \sum_k  \Lambda_{0k} \bv_k| \geq
\big(\sqrt{1 + \sum_k \Lambda_{0k}^2} - \sqrt{\sum_k \Lambda_{0k}^2}
\,\big) > 0$
this induced action is norm continuous on $\HB$ in the 
Euclidean topology. Hence, choosing $\Lambda$ in some sufficiently
small neighbourhood $\cN^L \subset \Lore$ of the identity, the
corresponding transformed cones $\HK_\Lambda$ 
all lie in some sufficiently large
cone $\HK_0 \subset \HB$. The hypercone $\cC_0 = \cC(\HK_0)$ therefore has
the required property.  
\end{proof}

We now study the action of time translations on hypercones. 
The translated hypercones are no longer based on any of the   
hyperboloids foliating $V$, but 
they are still contained in sufficiently large hypercones 
based on the given $\HY$. 

\begin{faktum} \label{A.13} 
Let $\cC$ be a hypercone and let $\cB^T \subset \Trans$ be a 
bounded set of translations. Then there is a
hypercone $\cC_0$ with $\cC + \cB^T \subset \cC_0$. 
\end{faktum}

\begin{proof}
We first suppose $\cC$ to be the 
causal completion of a special type of hyperbolic cone $\HC \subset \HY$ 
made up of hyperbolic rays given, choosing coordinates suitably, 
by 
\mbox{$u \mapsto a(u) \doteq 
(u \tau + v(u), v(u) \bl)$, $0 < u \leq 1$,} with 
$v(u) = \tau (1 - u^2)/2u$ and $\bl \in \HL$, 
where $\HL \subset S^2$ is any convex set whose closure 
is contained in a hemisphere. 
The opposite cone is made up of the rays 
$u^\prime \mapsto a^\prime(u^\prime) \doteq 
(u^\prime \tau + v(u^\prime), v(u^\prime) \bl^\prime)$, $0 < u^\prime
\leq 1$, with $\bl^\prime \in - \HL$. Let 
$t = (\mathrm{t}, \mathbf{0})$, $\mathrm{t} \geq 0$,  be a time translation. 
A straightforward computation gives 
\begin{equation*}
\begin{split}
& (a(u) + t - a^\prime(u^\prime))^2 \\
& = \mathrm{t}^2 + 2 \tau^2 + 2 \mathrm{t} (u \tau + v (u)) 
- 2 (\mathrm{t} + u \tau + v(u)) (u^\prime \tau + v (u^\prime)) 
+ 2 v(u) v(u^\prime) \, \bl \bl^\prime \, . 
\end{split} 
\end{equation*}
Thus, if $(u^\prime \tau + v(u^\prime)) > (\tau + \mathrm{t})$,
the points $(a(u) + t)$ and $a^\prime(u^\prime)$ are spacelike
separated, $(a(u) + t - a^\prime(u^\prime))^2 < 0$, for 
any $0 < u \leq 1$, $\bl \in \HL$ and
$\bl^\prime \in - \HL$; note that $\bl \bl^\prime < 0$. 
Phrased differently, the causal shadow
of $\HC + t$ on $\HY$ is disjoint from all spacetime points in the 
cone opposite to $\HC$ for times larger than $(\tau + \mathrm{t})$. 
Proceeding to the Beltrami--Klein model this implies 
that the image of this shadow in $\HB$ is contained
in the interior of a spherical cap of $\HB$ with 
non--trivial complement. As in the proof of 
\ref{A.9}, the shadow fits into a cone
$\HK_0 \subset \HB$, hence $\cC + t \subset \cC(\HK_0) = \cC_0$. 

Now let $\cC$ be any hypercone. A suitable Lorentz 
transformation $\Lambda$ shifts
its apex to the point $(\tau, \mathbf{0})$ and
$\Lambda \cC$ is then 
of the special type considered in the preceding step. The given set of 
translations is mapped to $\Lambda \, \cB^T$. Since 
$\cB^T$ is bounded, there is a time translation $t$ as above
with $\Lambda \, \cB^T \subset (t - \Trans) \bigcap \Trans$.  
Since the points 
$\HC + (t - \Trans) \bigcap \Trans$ on the hyperboloid $\HY$ 
have the same causal shadow as $\HC + t$,  
the hypercone $\cC_0$ constructed in the preceding step
satisfies $\Lambda (\cC + \cB^T) \subset \cC_0$. Hence 
$\Lambda^{-1} \cC_0$ is as required. 
\end{proof}
\end{appendix} 

\subsection*{\Large Acknowledgements} 
This research was begun in collaboration with Sergio Doplicher. 
We express our thanks to him for numerous discussions, 
many challenging remarks and for his continuing interest 
in this work. DB would also like to thank the Universit\`a di Roma
for hospitality and financial support at various stages of this work. 
JER would like to thank the Fachbereich Mathematik der Universit\" at 
G\" ottingen for their hospitality  during the closing stages of this 
research.


\begin{thebibliography}{99}
\bibitem{Ba} V.\ Bargmann: 
On unitary ray representations of continuous groups,
{\it Ann.\ Math.\ } {\bf 59}, 1--46 (1954)

\bibitem{BoYn} H.-J.\ Borchers and J.\ Yngvason: 
From quantum fields to local von Neumann algebras, 
{\it Rev.\ Math.\ Phys.\ } {\bf SI 1}, 15--47 (1992) 

\bibitem{Bu} D.\ Buchholz: Collision theory for massless bosons,
{\it Commun.\ Math.\ Phys.\ } {\bf 52}, 147--173 (1977) 

\bibitem{Bu1}  D.\ Buchholz: The physical state space of 
quantum electrodynamics,  {\it Commun.\ Math.\ Phys.\ } 
{\bf 85}, 49--71 (1982) 

\bibitem{Bu2}  D.\ Buchholz: Gauss' law and the infraparticle
  problem,  {\it Physics Letters B} {\bf 174}, 331--334 (1986)  

\bibitem{BuDoMoRoSt1} D.\ Buchholz, S.\ Doplicher, G.\ Morchio, 
J.E.\ Roberts and F.\ Strocchi:
A Model for charges of electromagnetic type,
pp.\ 647--660 in: {\it Operator algebras and quantum field theory.
Rome 1996}, S.\ Doplicher, R.\ Longo, J.E.\ Roberts and L.~Zsido Eds.,
International Press 1997 

\bibitem{BuDoMoRoSt2} D.\ Buchholz, S.\ Doplicher, G.\ Morchio, 
J.E.\ Roberts and F.\ Strocchi:
Quantum delocalization of the electric charge,
{\it Annals Phys.} {\bf 290}, 53--66  (2001) 

\bibitem{BuFr} D.\ Buchholz and K.\ Fredenhagen: 
Locality and the structure of particle states,  
{\it Commun.\ Math.\ Phys.\ } {\bf 84}, 1--54 (1982)  

\bibitem{Ca} P.\ Camassa: 
Relative Haag duality for the free field in Fock representation, 
{\it Ann.\ Henri Poincar\'e}, {\bf 8}, 1433--1459 (2007)

\bibitem{CoSt}    A.\ Connes and E.\ Stormer: Homogeneity of the state 
space of factors of type~III${}_1$, 
{\it J.\ Funct.\ Analysis}, {\bf 28}, 187--196 (1987)

\bibitem{DoHaRo3} S.\ Doplicher, R.\ Haag  and J.E.\ Roberts: 
Fields, observables and gauge transformations II, 
{\it Commun.\ Math.\ Phys.\ } {\bf 15}, 173--200 (1969)

\bibitem{DoHaRo1} S.\ Doplicher, R.\ Haag  and J.E.\ Roberts: 
Local observables and particle statistics I, 
{\it Commun. Math. Phys.\ } {\bf 23}, 199--230 (1971)

\bibitem{DoHaRo2} S.\ Doplicher, R.\ Haag  and J.E.\ Roberts: 
Local observables and particle statistics II, 
{\it Commun.\ Math.\ Phys.\ } {\bf 35}, 49--85 (1974)

\bibitem{DoRo}    S.\ Doplicher and J.E.\ Roberts: 
Why there is a field algebra with a compact gauge group describing 
the superselection structure in particle physics,  
{\it Commun.\ Math.\ Phys.\ } {\bf 131}, 51--107 (1990) 

\bibitem{FrReSch} K.\ Fredenhagen, K.H.\ Rehren and B.\ Schroer: 
Superselection sectors with braid group statistics 
and exchange algebras I. General theory,  
{\it Commun.\ Math.\ Phys.\ } {\bf 125}, 201--226 (1989)

\bibitem{Fr} J.\ Fr\"ohlich: 
The charged sectors of quantum electrodynamics in a framework of 
local observables,  {\it Commun.\ Math.\ Phys.\ } 
{\bf 66}, 223--265  (1979) 

\bibitem{FrGa} J.\ Fr\"ohlich and F.\ Gabbiani: 
Braid statistics in local quantum field theory, 
{\it Rev.\ Math.\ Phys.\ } {\bf 2}, 251--353 (1990) 

\bibitem{FrMoSt}  J.\ Fr\"ohlich, G.\ Morchio and F.\ Strocchi: 
Charged Sectors and Scattering States in  Electrodynamics, 
{\it Annals Phys.} {\bf 119}, 241--284 (1979)

\bibitem{GlJa}  J.\ Glimm and A.M.\ Jaffe,
\textit{Quantum Physics. A Functional Integral Point Of View}, 
Springer \ New York  1987 

\bibitem{Ha}  R.\ Haag: 
\textit{Local Quantum Physics. Fields, Particles, Algebras}, 
Springer \  Berlin Heidelberg New York 1992

\bibitem{Ha2}  R.\ Haag: 
Local algebras. A look back at the early years and 
at some achievements and missed opportunities,  
{\it Eur.\ Phys.\ J.}  {\bf H35}, 255--261 (2010) 

\bibitem{Lo}  R.\ Longo: Notes on algebraic invariants for
  non--commutative dynamical systems, 
{\it Commun.\ Math.\ Phys.\ } {\bf 69}, 195--207 (1979) 

\bibitem{Lo1}  R.\ Longo: Index of subfactors and statistics 
of quantum fields. II Correspondences, Braid Group Statistics 
and Jones Polynomial, 
{\it Commun.\ Math.\ Phys.\ } {\bf 130}, 285--309 (1990) 

\bibitem{SaWo}  P.\ Sadowski and S.L.\ Woronowicz: Total sets in 
quantum field theory, {\it Reports Math.\ Phys.\ } {\bf 2}, 
113--120 (1971)  

\bibitem{SuWi} S.J.\ Summers and E.H.\ Wichmann,
Concerning the condition of additivity in quantum field theory,
{\it Annales Poincare Phys.\ Theor.\ }  {\bf 47}, 113--124 (1987)

\bibitem{Ta} M.\ Takesaki: 
\textit{Theory of Operator Algebras I},
Springer-Verlag Berlin Heidelberg New York 1979

\end{thebibliography}
\end{document}